\newcommand{\documentversion}{12pt}
\def\ps@pprintTitle{%
   \let\@oddhead\@empty
   \let\@evenhead\@empty
   \let\@oddfoot\@empty
   \let\@evenfoot\@oddfoot
}
\journal{}
	\newcommand{\footerfontsize}{\large}
	\newcommand{\footerfontsize}{\footnotesize}
\newcommand{\Rd}{\color{black}}
\newcommand{\Bd}{\color{black}}
\definecolor{forestgreen}{cmyk}{0.76,0,0.76,0.45}
\definecolor{lightgray}{cmyk}{0.2,0.2,0.2,0.2}
\definecolor{burntorange}{rgb}{0.74902,0.341176,0}
\definecolor{myBlue}{rgb} {0,    0.4470,    0.7410}
	\newtheorem{definition}{Definition}[section]
	\newtheorem{theorem}{Theorem}[section]
	\newtheorem{lemma}[theorem]{Lemma}
	\newtheorem{proposition}[theorem]{Proposition}
	\newtheorem{corollary}[theorem]{Corollary}
	\newtheorem{assumption}{Assumption}
	\theoremstyle{remark}
	\newtheorem{remark}{Remark}[section]
\newcommand{\genus}{g}
\newcommand{\boundarycomp}{k}
\newcommand{\surf}{S}
\newcommand{\interior}[1]{\mathring{#1}}
\newcommand{\holonomy}{\mathrm{Hol}}
\newcommand{\singpts}{P}
\newcommand{\curvaturemap}{\theta}
\newcommand{\intsingpts}{\singpts_\circ}
\newcommand{\bdrysingpts}{\singpts_\partial}
\newcommand{\mathcrossfield}{C}
\newcommand{\quadcrossfield}{\mathcrossfield_Q}
\newcommand{\connection}[1]{\tau_{#1}}
\newcommand{\graph}{G}
\newcommand{\surfmsinggraph}{\surf - \left(\singpts\cup\graph\right)}
\newcommand{\toposurfmsinggraph}{\big(\surf - (\singpts \cup \graph)\big)_d}
\newcommand{\toposurfmsinggraphrtwo}{\big(\surf - (\singpts \cup \graph)\big)_{d_{\mathbb{R}^2}}}
\newcommand{\immersion}{\Psi}
\newcommand{\quotientmap}{\mathcal{Q}}
\newcommand{\coordline}[2]{\ell_{#2_{#1}}}
\newcommand{\barcoordline}[2]{\bar{\ell}_{#2_{#1}}}
\newcommand{\quotientcurve}[2]{\mathfrak{Q}_{#1,#2}}
\newcommand{\metric}[2]{\left\langle #1, #2 \right\rangle}
\newcommand{\emptymetric}{\metric{\cdot}{\cdot}}
\newcommand{\flatmetric}[2]{\metric{#1}{#2}_\mathbb{E}}
\newcommand{\emptyflatmetric}{\flatmetric{\cdot}{\cdot}}
\newcommand{\quadmeshmetric}[2]{\metric{#1}{#2}_Q}
\newcommand{\emptyquadmeshmetric}{\quadmeshmetric{\cdot}{\cdot}}
\newcommand{\distancemetric}{d}
\newcommand{\quaddistancemetric}{d_Q}
\newcommand{\euclideandistancemetric}{d_{\mathbb{R}^2}}
\newcommand{\completion}[1]{\overline{\overline{#1}}}
\newcommand{\transverse}{\mathrel{\text{\tpitchfork}}}
\newcommand{\tpitchfork}{%
  \vbox{
    \baselineskip\z@skip
    \lineskip-.52ex
    \lineskiplimit\maxdimen
    \m@th
    \ialign{##\crcr\hidewidth\smash{$-$}\hidewidth\crcr$\pitchfork$\crcr}
  }%
}
\newcounter{datastruct}
\newcommand{\pl}{piecewise linear}
\newcommand{\CuttingGraph}{Cutting Graph}
\newcommand{\cuttinggraph}{cutting graph}
\newcommand{\node}{node}
\newcommand{\Arc}{Arc}
\newcommand{\arc}{arc}
\newcommand{\patch}{patch}
\newcommand{\integralcurve}{integral curve}
\newcommand{\QuadLayout}{Quad Layout}
\newcommand{\quadlayout}{quad layout}
\newcommand{\textcrossfield}{cross field}
\newcommand{\textCrossField}{Cross Field}
\newcommand{\textframefield}{frame field}
\newcommand{\textFramefield}{Frame field}
\newcommand{\textquadmeshmetric}{\quadlayout{} metric}
\newcommand{\textQuadMeshMetric}{\QuadLayout{} Metric}
\newcommand{\affineflatmetric}{partial \textquadmeshmetric{}}
\newcommand{\quadmeshimmersion}{quad layout immersion}
\newcommand{\QuadMeshImmersion}{Quad Layout Immersion}
\newcommand{\textquadmeshimmersion}{\quadmeshimmersion}
\newcommand{\affineflatimmersion}{partial \quadmeshimmersion{}}
\newcommand{\AffineFlatImmersion}{Partial \QuadMeshImmersion{}}
\newcolumntype{R}[2]{%
    >{\adjustbox{angle=#1,lap=\width-(#2)}\bgroup}%
    l%
    <{\egroup}%
}
\begin{document}

\begin{frontmatter}



\title{The \QuadMeshImmersion{}: A Mathematically Equivalent Representation of a Surface Quadrilateral Layout{}}


\corref{cor1}
\author[address1]{Kendrick M. Shepherd}
\ead{kendrick@oden.utexas.edu}

\author[address2]{Ren\'e R. Hiemstra}
\ead{rene.hiemstra@ibnm.uni-hannover.de}

\author[address1]{Thomas J.R. Hughes}
\ead{hughes@oden.utexas.edu}

\cortext[cor1]{Corresponding author}
\address[address1]{Oden Institute for Computational Engineering and Sciences, University of Texas at Austin}

\address[address2]{Institut f\"ur Baumechanik und Numerisch Mechanik (IBNM), Leibniz Universit\"at Hannover}

\begin{abstract}
	{\footerfontsize
Quadrilateral layouts on surfaces are valuable in texture mapping, and essential in generation of quadrilateral meshes and in fitting splines.
Previous work has characterized such layouts as a special metric on a surface or as a meromorphic quartic differential with finite trajectories. 
In this work, a surface quadrilateral layout is alternatively characterized as a special immersion of a cut representation of the surface into the Euclidean plane.
We call this a \textquadmeshimmersion{}.
This characterization, while posed in smooth topology, naturally generalizes to piecewise-linear representations.
As such, it mathematically describes and generalizes integer grid maps, which are common in computer graphics settings.
Finally, the utility of the representation is demonstrated by computationally extracting quadrilateral layouts on surfaces of interest.
}
\end{abstract}

%
%

\end{frontmatter}


\section{Introduction} \label{sec:introduction}

Recently, an extensive amount of research has been devoted to redefining parametric domains (reparameterizing) of surfaces. While the motivation for this work varies greatly (from texture mapping to structured finite element mesh extraction to rebuilding trimmed spline geometries), dozens of works have explored how to partition unstructured, less-optimal discretizations into ones with better structure. Of particular interest are partitions into quadrilateral domains. These segmentations are ideal for texture mapping \cite{Bommes:2013b,Ray:2010}, for use as meshes in solving PDEs \cite{Benzley:1995,Dazevedo:2000,Simons:2017}, and for rebuilding trimmed splines \cite{Hiemstra:2020,Urick:2019}.
{\Bd Particularly, the booming field of isogeometric analysis\textemdash{}which aims to streamline the engineering design through analysis process \cite{Hughes:2005}\textemdash{}cannot achieve its ultimate goal of a streamlined engineering workflow without a clear way to convert trimmed computer-aided design models into well-structured quadrilateral partitions of surfaces using splines without any trimming operations \cite{Hiemstra:2020,Marussig_Hughes:2017}.
}

While the ultimate goal of extracting a quadrilateral layout on a surface is clear, methods to extract such a layout differ greatly; \cite{Bommes:2013b} provides a good overview of current methods.
Recently, advances in the theory of layout generation have led to a Renaissance in well-structured quadrilateral mesh generation. 
For example, a quadrilateral layout on a surface has been shown to be equivalent to a special Riemannian metric with cone singularities \cite{Chen:2019} (called a ``quad mesh metric'') and also to a meromorphic quartic differential \cite{Lei:2020}.
Both works led to computational algorithms, driven by theory, to extract quadrilateral layouts.

\begin{figure}
	\centering
	\includegraphics[trim=0cm 0cm 0cm 0cm, clip, width=.95\textwidth]{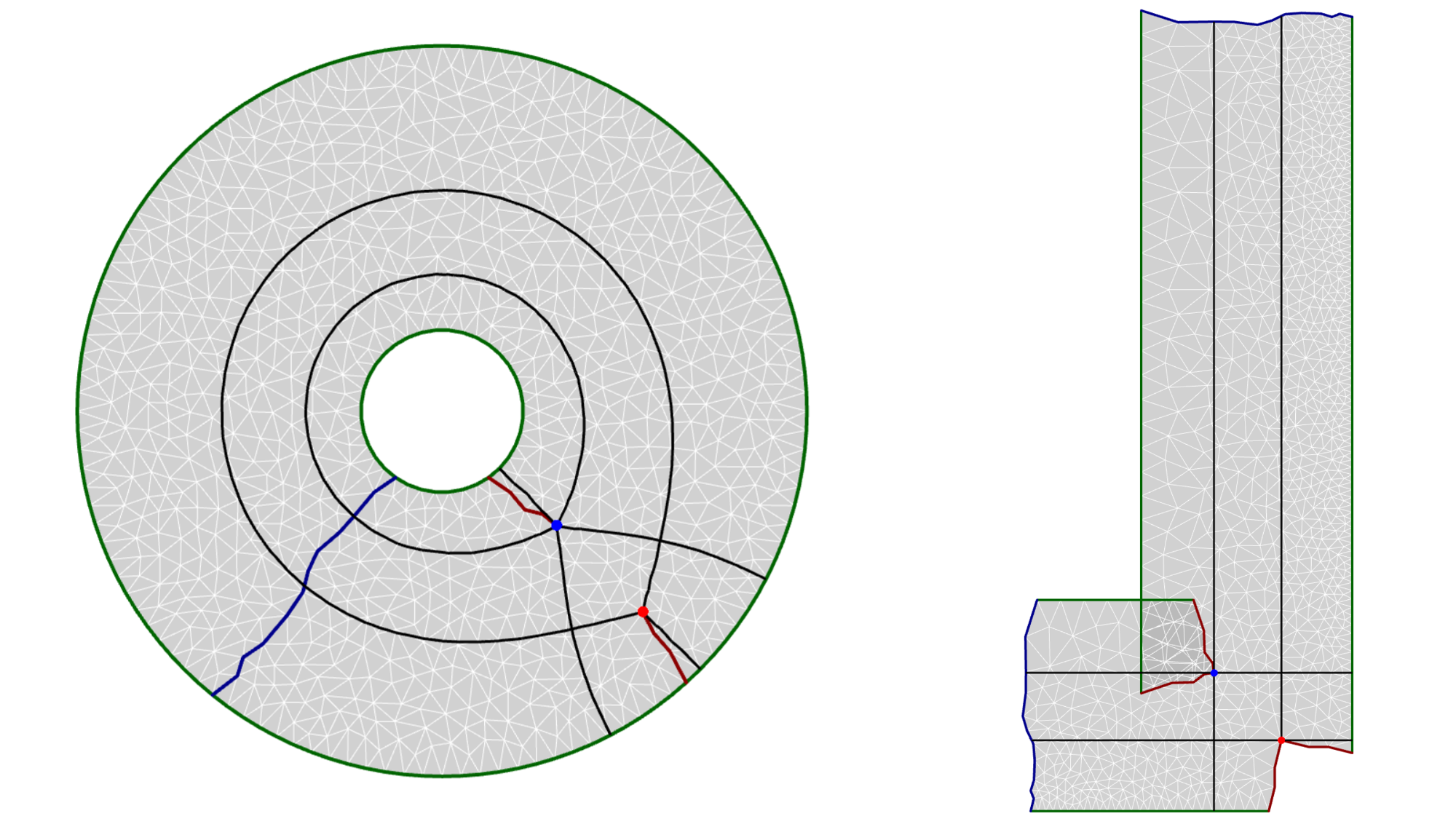}
	\caption{A quadrilateral layout on a surface (left) is equivalent to an immersion of the surface (right) after cutting it to disk topology (shown as curves in blue) and cutting to cone singular points (shown as curves in red). Here, the red point and the blue point are cone singularities with cone angles $\frac{3\pi}{2}$ and $\frac{5\pi}{2}$, respectively. The quadrilateral layout itself is formed by mapping lines of constant $u$ and $v$ coordinates in an immersed representation of the surface (on the right) to the original cut surface (on the left). The quadrilateral layout is completed after invoking the quotient space topology on the cut surface to return the topology to the uncut version. \label{fig:annulus}}
\end{figure}

In this work, a quad mesh metric (hereafter called a \textquadmeshmetric{} to emphasize that it is typically not \pl{}) on a surface is shown to be equivalent to a special kind of immersion on a cut representation of the surface. A representative immersion on an annulus is depicted in Figure \ref{fig:annulus}.

The isometric immersion proposed in this paper generalizes the concept of an ``integer grid map,'' which can currently be considered the state-of-the-art in high quality all-quadrilateral mesh generation and computation of quadrilateral layouts \cite{Bommes:2013,Bommes:2009,Campen:2015,Ebke:2013}. Compared with the proposed immersion, integer grid maps feature additional, extensive integer constraints. Computational techniques involve mixed-integer programming and are typically computationally intensive. Furthermore, the integer-valued constraints at singularities can cause undesirable distortion particularly when the target mesh element sizes are large. We show that integer grid maps are a subset of the potential class of quadrilateral layout-generating immersions.
This alternative characterization unites existing integer grid map theory with parameterization techniques applying topological path constraints between singularities (e.g. \cite{Campen:2012,Campen:2014,Hiemstra:2020,Tong:2006}). Furthermore, it generalizes the potential framework in which researchers may operate to extract \quadlayout{s} and, possibly, mitigates some of the disadvantages of integer grid maps.

The outline of the paper is as follows. First, a \quadlayout{} is shown to be equivalent to a special immersion mapping in  Section \ref{sec:equivalence}. Afterwards, Section \ref{sec:computation} will show some simple computational results based on the mathematical theory. This layout can then be directly utilized for operations such as spline fitting, texture mapping, or \pl{} quadrilateral mesh extraction.
Finally, Section \ref{sec:conclusions} will summarize results and discuss future areas of research.

\section{An Equivalent Representation}\label{sec:equivalence}
Here we describe how a quadrilateral layout (a.k.a. a \quadlayout{}) is equivalent to a special immersion, which we call a \quadmeshimmersion{}. 
We assume that the reader has a graduate-level understanding of material from algebraic topology and differential geometry.
The supplementary material to this article gives a brief primer for those desiring a high-level overview.
Furthermore, we assume that all surfaces are orientable and compact, but possibly with boundary.

\subsection{The \textQuadMeshMetric{}}

\begin{figure}
	\centering
	\includegraphics[trim=0cm 0cm 0cm 0cm, clip, width=.95\textwidth]{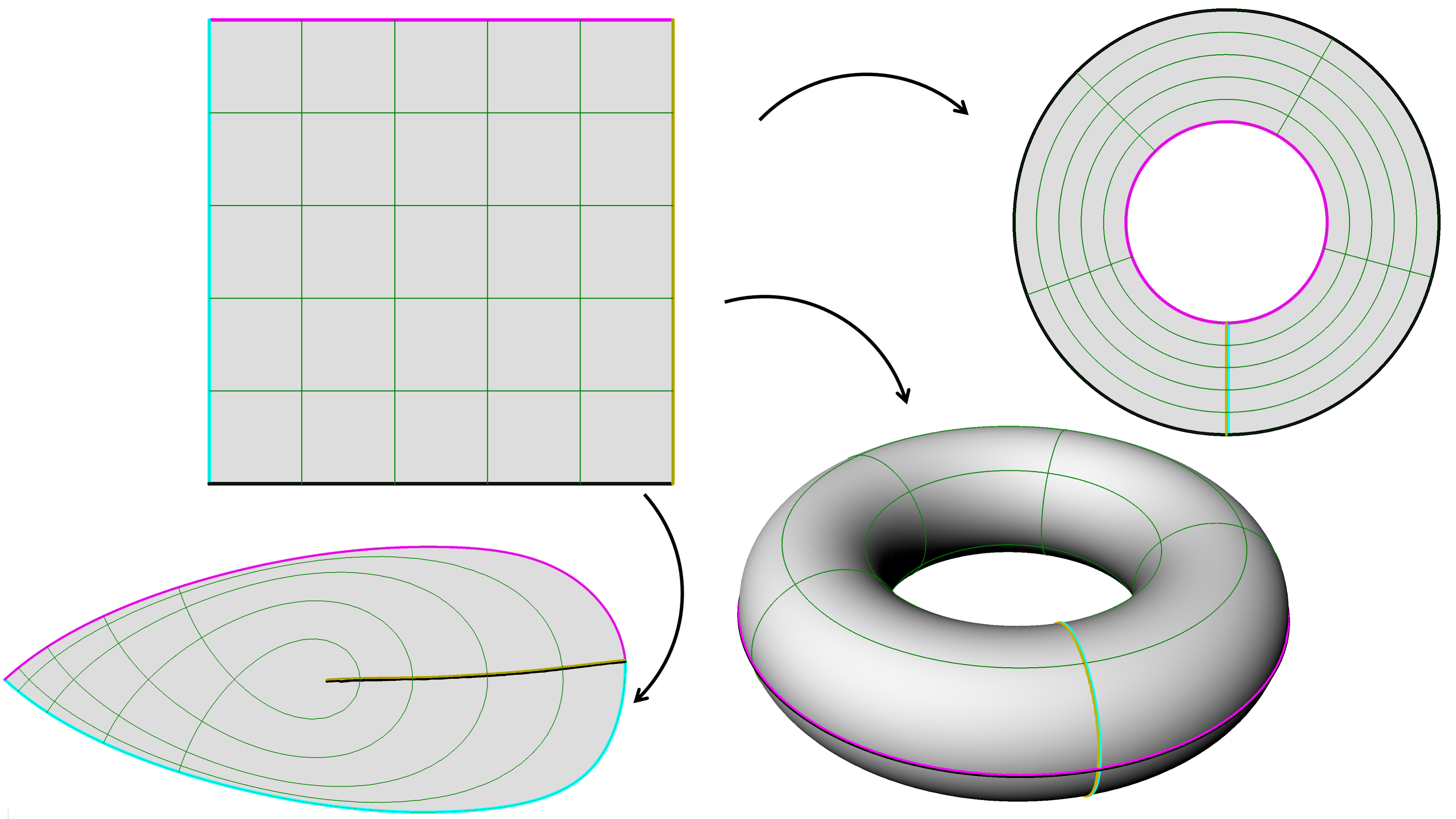}
	\caption{A parametric quadrilateral is mapped to various topological surfaces, with colored boundaries and contours indicating the precise mapping. Note that while each of these mappings is not injective on the boundary (or even locally injective at one \node{} for the teardrop shape), each defines  a single-patch \quadlayout{}. Additional examples of \quadlayout{s} are given in Section \ref{sec:computation}.\label{fig:single_patch_layouts}}
\end{figure}

For the sake of clarity, a \quadlayout{} is first defined.
\begin{definition}[{\Rd \QuadLayout}]\label{def:quad_layout}
	Let $\surf{}$ be a surface and take $\mathbb{I}^2 = [0,1]\times[0,1]$ to be the closed unit square.
	Accompanying $\mathbb{I}^2$ is a natural cell structure in which \node{s} are {\Bd points in $\mathbb{I}^2$ with integer-valued coordinates} in both $u$ and $v$, \arc{s} are line segments attached to adjacent \node{s} with constant $u$ or $v$ coordinate, and a \patch{} is attached to the square skeleton in the basic manner.
	Let $\phi:\mathbb{I}^2 \rightarrow \surf{}$ be a homeomorphism on $\mathring{\mathbb{I}}^2$ and a continuous local injection on each \arc{}. 
	Given a finite set of disjoint closed unit squares $\mathbb{I}^2_i$ with mappings $\phi_i:\mathbb{I}^2\rightarrow \surf{}$, the cellular structure induced on the image space is defined to be a \textbf{\quadlayout{}} if 
	\begin{itemize}
		\item Each point in the interior of a patch has a unique preimage defined by one (and only one) $\phi_i$.
		\item No cell of higher dimension is mapped to the same domain as the interior of a cell of lower dimension. (Here, the interior of a node is taken to be itself.)
	\end{itemize}
\end{definition}
As such, a T-mesh is not a \quadlayout{} because a \node{} of a \patch{} is mapped to the interior of an \arc{} of another.
By construction, all bilinear quadrilateral meshes are \quadlayout{s}.
Furthermore, representations such as an annulus with a single patch, a torus with a single patch, and a patch with adjacent arcs identified (each depicted in Figure \ref{fig:single_patch_layouts}) are \quadlayout{s} (with a single patch), despite having points on their boundaries mapped to the same locations.
Additional examples of \quadlayout{s} are presented in Figure \ref{fig:annulus} and in Section \ref{sec:computation}.

For the purposes of this paper, a quad mesh will be a \quadlayout{} in which each map $\phi$ is linear in $u$ and in $v$.
A well-structured \quadlayout{} will typically have far fewer \node{s}, \arc{s}, and \patch{es} than a quad mesh because
the objects of interest are, in general, curvilinearly mapped.

In \cite{Chen:2019}, a (curvilinear) \quadlayout{} on a smooth surface was shown to be equivalent to a \textquadmeshmetric{} (called the quad mesh metric in \cite{Chen:2019}, but renamed here to emphasize that a layout is generally curvilinear).
To describe this representation, the following definitions are necessary (see e.g. \cite{Charitos:2016,Cooper:2000,Viertel:2020}).

\begin{definition}[{\Rd Boundary Cone Singularity}]\label{def:boundary_cone}

	Let $0 < \curvaturemap \in \mathbb{R}$ be fixed.
	Take $F:(0,\epsilon) \times [0,\curvaturemap] \rightarrow \mathbb{R}^2$ be a smooth, bounded immersion with positive Jacobian  determinant  bounded from above and below, with $\lim_{t\rightarrow \epsilon} F(t,\phi) = (0,0) := v$ for all $\phi \in [0,\curvaturemap]$.
	Furthermore, take $F_\phi := F\Big|_{(0,\epsilon)\times\{\phi\}}$ with $\lim_{t\rightarrow \epsilon} F_\phi^\prime(t) = a(\phi) e^{i\phi}$ for $a:[0,\curvaturemap]\rightarrow \mathbb{R}$ positive and smooth.

	Define a metric on $(0,\epsilon) \times [0,\curvaturemap]$ by the pull-back of the Euclidean metric via the map $F$, given by $F^*(\emptymetric_{\mathbb{R}^2})$.
	Then the completion of $(0,\epsilon) \times [0,\curvaturemap]$ under this metric minus the subspace $\{0\}\times[0,\curvaturemap]$ is a \textbf{boundary cone} of angle $\curvaturemap,$ written $H\mathcal{C}(v,\curvaturemap)$, with $v$ as the \textbf{cone singularity}.
	The {\Bd point $v$ is singular in the following sense}: if $B_\delta(v)$ is a ball of radius $\delta$ about $v$, then
	\[
		\lim_{\delta \rightarrow 0} \int_{B_\delta(v) \cap \partial\big( H\mathcal{C}(v,\curvaturemap)\big)} \kappa_g  = \pi - \theta,
	\]
	{\Bd where $\kappa_g$ is geodesic curvature.} A \textbf{linear boundary cone} is defined when $F(t,0) = \big(r(t-\epsilon),0\big)$ for some $r > 0$ with $F(t,\phi) =  F(t,0) e^{i\phi}$.

\end{definition}

\begin{definition}[{\Rd Interior Cone Singularity}]\label{def:cone}
	A standard (surface) \textbf{cone} $\mathcal{C}(v,\theta)$ is a set with vertex $v$ and angle $\theta > 0$ described in coordinates as 
	\[
		\mathcal{C}(v,\curvaturemap) = \left\{(r,\phi): 0 \leq r \in \mathbb{R}, \phi \in \mathbb{R}/(\curvaturemap \mathbb{Z})\right\}
	\]
	with a metric locally of the form $ds^2 = dr^2 + r^2 d\phi^2$. 
	The vertex $v$ is called an  \textbf{interior cone singularity}.
	The singularity is represented in the following sense: for any neighborhood $N(v) \subset \mathcal{C}(v,\curvaturemap)$ containing $v$
	\[
		\int_{N(v)} \kappa = 2\pi - \curvaturemap,
	\]
	{\Bd where $\kappa$ is Gaussian curvature.}
\end{definition}

\begin{figure}
	\centering
	\includegraphics[trim=0cm 0cm 0cm 3cm, clip, width=.95\textwidth]{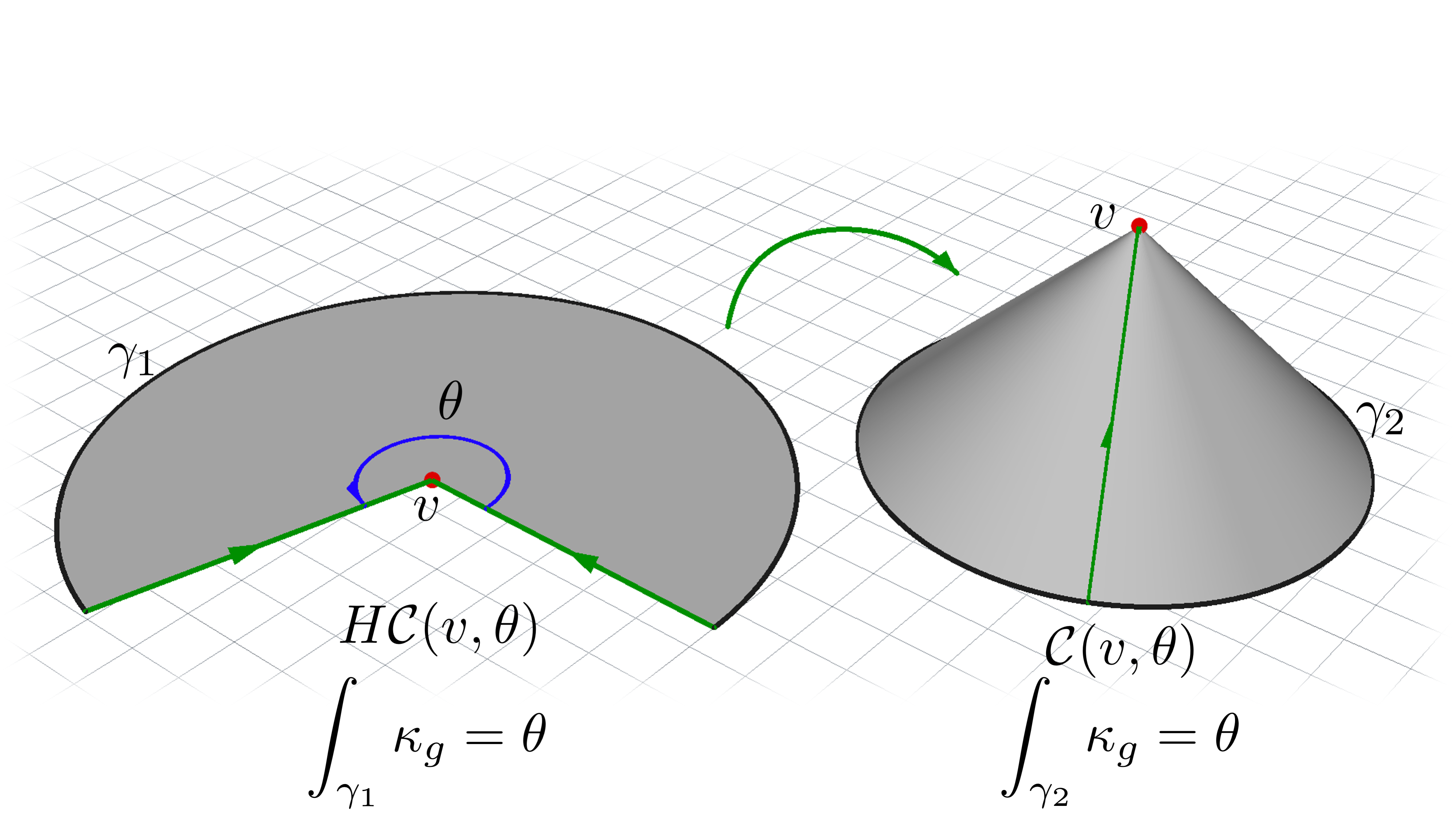}
	\caption{A cone $\mathcal{C}(v,\curvaturemap)$ can be thought of as the boundary cone $H\mathcal{C}(v,\curvaturemap)$ with the boundary marked in green with arrows glued. After this gluing, the cone is realized. To obey Gauss-Bonnet, the Gaussian curvature of any neighborhood of $v$ in the cone is taken to be $2\pi - \theta$ (recalling that the rest of the cone has zero Gaussian curvature).}\label{fig:cone_singularity}
\end{figure}

The integrals of Definitions \ref{def:boundary_cone} and \ref{def:cone} represent the contributions of cone singularities to a surface's geodesic and Gaussian curvatures, respectively.
Sometimes these are simply referred to as the discrete geodesic and Gaussian curvature of the cones.

For $0 < \curvaturemap{}  < 2\pi$, both the boundary cone and the cone can be visualized as depicted in Figure \ref{fig:cone_singularity}. 
The left is a linear boundary cone. 
Under a gluing operation of the edges marked with arrows, the boundary cone becomes a cone, which has no Gaussian curvature except at its singularity. 
For a cone singularity of angle greater than $2\pi$, the same general idea holds, but now the boundary cone {\Bd should} be thought of as an object in the complex plane with a branch cut. 
{\Bd Alternatively, these high-angle cones can be embedded in three dimensions by exploiting the vertical dimension.
Both representations are given in Figure \ref{fig:high_valence_cone}.}

\begin{figure}
\centering
	\begin{subfigure}{0.45\textwidth}
		\includegraphics[trim=0cm 0cm 0cm 0cm, clip, width=1\textwidth]{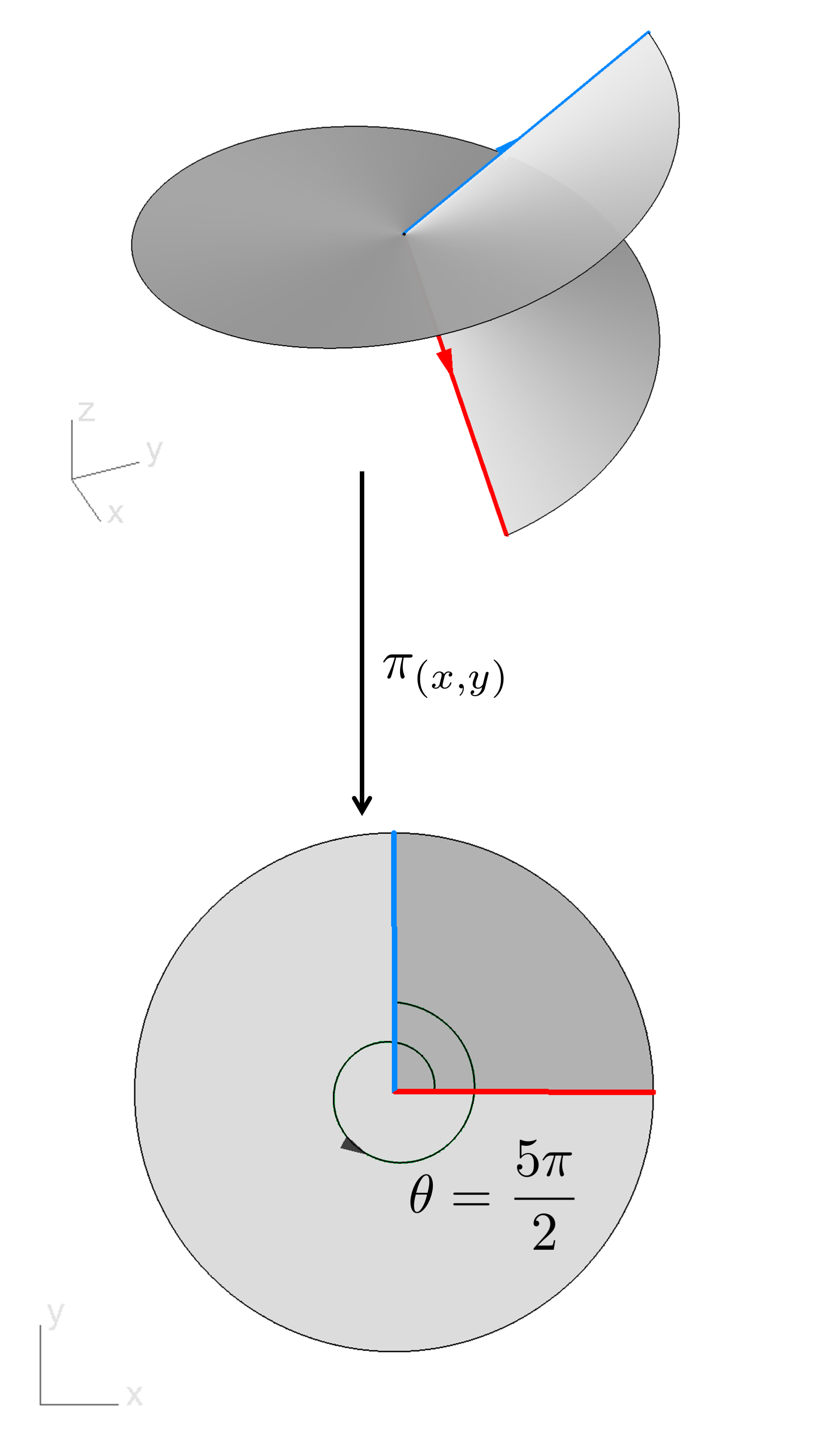}
		\caption{}
	\end{subfigure}
	\begin{subfigure}{0.45\textwidth}
		\includegraphics[trim=0cm 0cm 0cm 0cm, clip, width=1\textwidth]{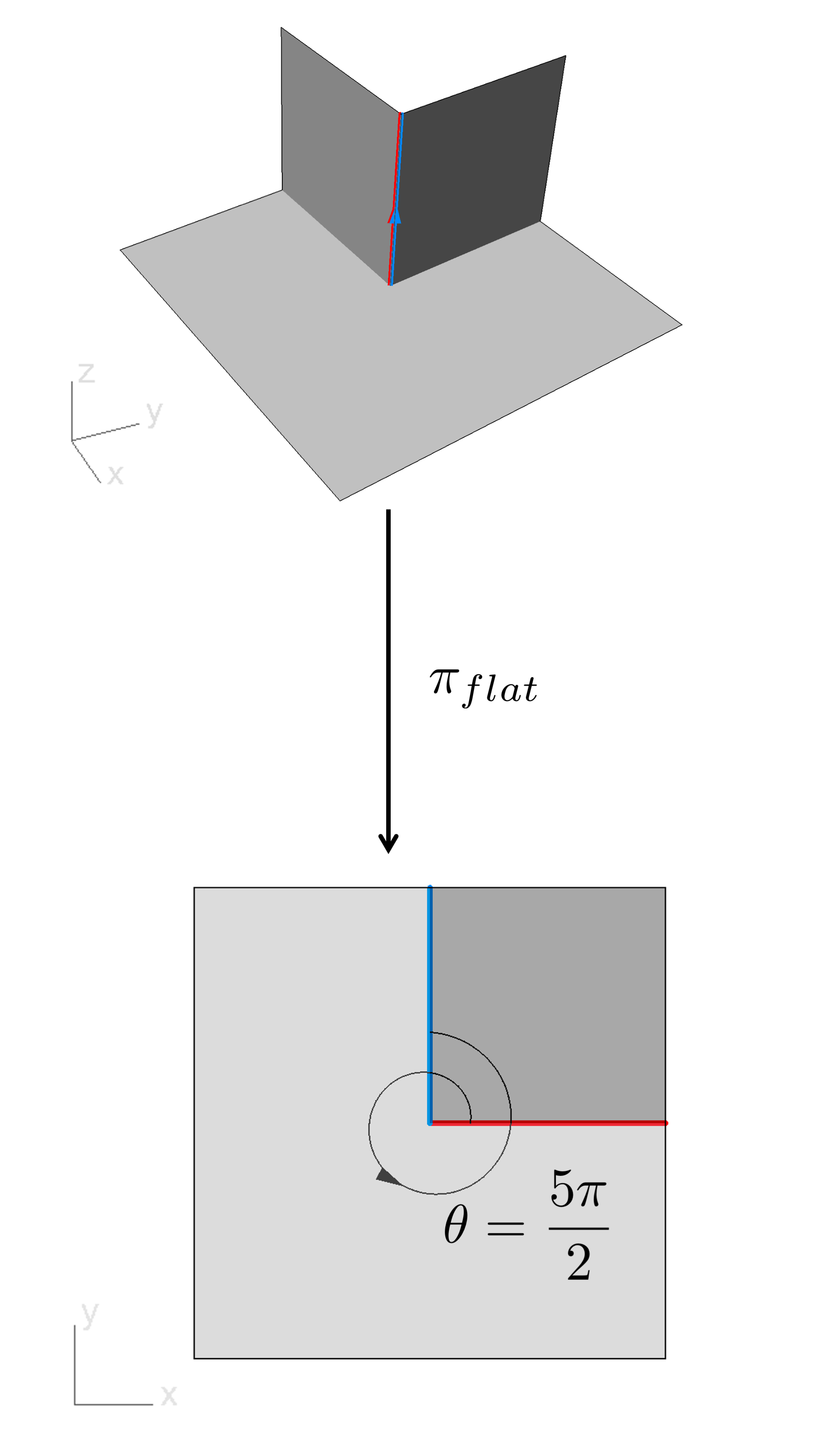}
		\caption{}
	\end{subfigure}	
	\caption{{\Bd 
	Two alternative representations of cones with angle $\frac{5\pi}{2}$ are presented. On the left, a heuristic representation of a cone is presented as a vertical spiral with red and blue edges glued which, when projected onto the $(x,y)$-plane, is locally injective everywhere except in the region of the cone. Alternatively, a sufficiently small neighborhood of this cone can be exactly embedded into three-space by leveraging the third dimension. When a cut is made and the vertical panels are folded onto the plane, the immersed structure is similar to that in the hueristic representation.
	}
	}\label{fig:high_valence_cone}
\end{figure}

Notice that the singularity definitions are both consistent for regular points on and off the boundary.
These are given by $\mathcal{C}(v,2\pi)$ and $H\mathcal{C}(v,\pi)$, respectively.
As such, for a surface $\surf$ with cone singularities, define $\curvaturemap: \surf \rightarrow \mathbb{R}$ by the $\curvaturemap$ of the (boundary) cone to which the point has an isometric neighborhood.

\begin{definition}[{\Rd Flat Metric with Cone Singularities}]\label{def:flat_w_cone}
A \textbf{flat metric with cone singularities} $\singpts$ on a surface $\surf$ (denoted $\emptyflatmetric$) is a Riemannian metric on $\surf{} - \singpts$ such that
\begin{itemize}
	\item Each point $p \in (\mathring{\surf} - \singpts)$ has a neighborhood isometric to an open disk in $\mathbb{R}^2$.
	\item Each point $q \in (\partial \surf - \singpts)$ has a neighborhood isometric to the regular boundary cone $H\mathcal{C}\big(v,\pi\big)$.
\end{itemize}
Furthermore, the Cauchy completion of the distance metric induced by the Riemannian metric is all of $\surf$.
Around the cone singularities, the following isometries hold:
\begin{itemize}
	\item Each point $p_i \in (\mathring{\surf} \cap \singpts) =:  \intsingpts$ has a neighborhood isometric to a neighborhood of the vertex $v$ of the standard cone $\mathcal{C}\big(v,\curvaturemap_{p_i}\big), \curvaturemap_{p_i}\neq 2 \pi.$
	\item Each point $q \in (\partial \surf \cap \singpts) =: \bdrysingpts$ has a neighborhood isometric to a neighborhood of vertex $v$ of a boundary cone $H\mathcal{C}\big(v,\curvaturemap_{q_i}\big), \curvaturemap_{q_i}\neq \pi.$
\end{itemize}
\end{definition}

In addition to these topology- and geometry-related concepts, the notion of a \textcrossfield{} will also be necessary.
Following the description of \cite{Ray:2008}, let a Riemannian metric $\emptymetric$ on a surface $\surf$ be given.
For $p \in \surf$, a unit tangent vector in $T_p\surf{}$ is a vector with norm $1$.
A $4$-symmetry direction is a set of $4$ unit tangent vectors at $p$ in which each vector differs by a rotation of $\frac{\pi}{2}$.
A \textcrossfield{} is a mapping associating with all but a finite number of  $p \in \surf$ a $4$-symmetry direction in a smooth manner.
 A \textcrossfield{} on a closed surface will have singularities which obey a Poincar\'e-Hopf-type theorem whose index is given heuristically by the number of rotations a unit directional field makes when traveling about a Darboux frame.
A \textframefield{} is a \textcrossfield{} represented under a different Riemannian metric.
A picture of a simple boundary-aligned \textframefield{} with singularities is given in Figure \ref{fig:trapezoidal_ff}.
More precise discussions are given in \cite{Ray:2008,Viertel:2020}, with definitions extending to boundary singularities given in \cite{Viertel:2020}.

\begin{figure}
	\centering
	\includegraphics[trim=0cm 0cm 0cm 0cm, clip, width=.95\textwidth]{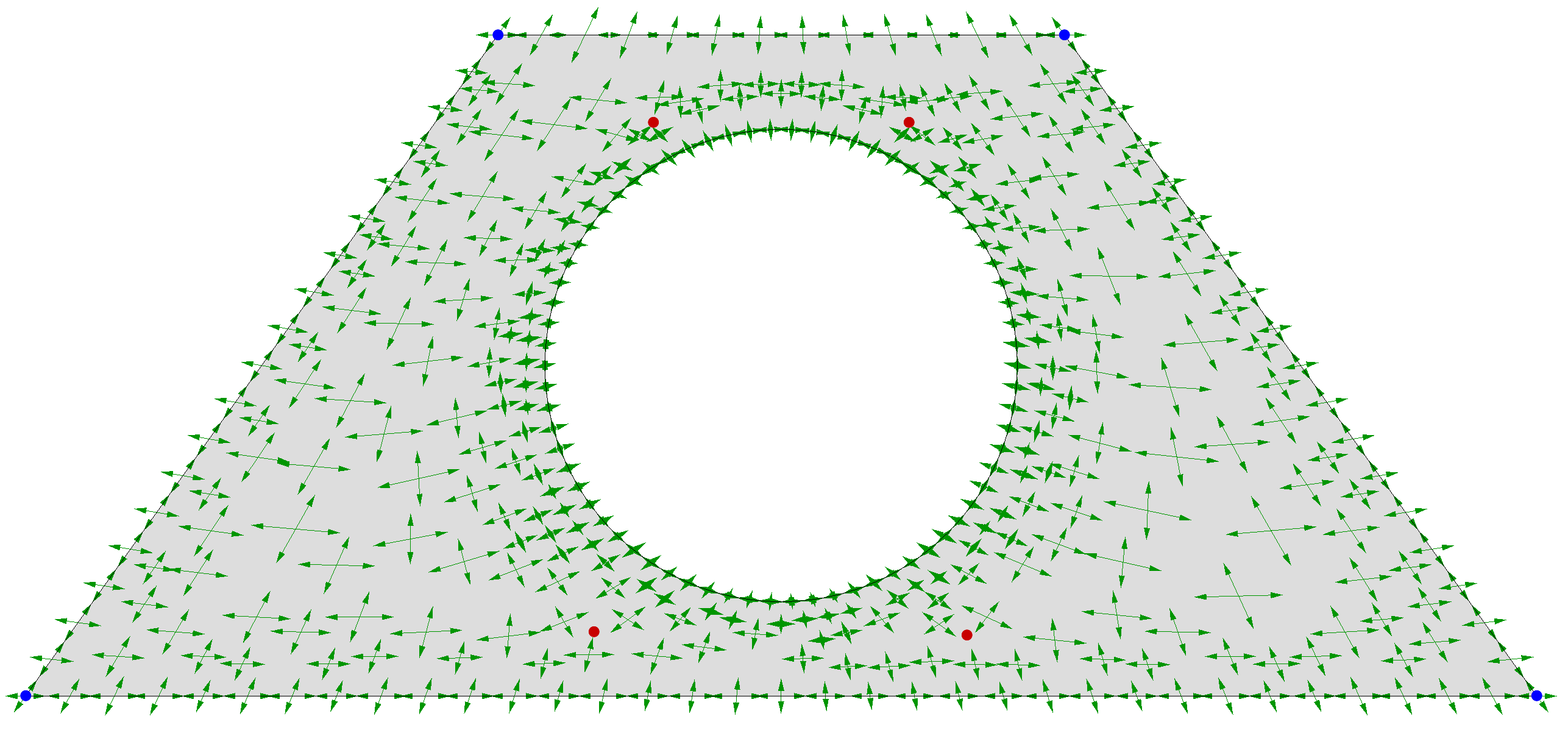}
	\caption{A boundary-aligned \textframefield{} with  four internal singularities (in red) on a surface with four sharp corners (in blue) is depicted.}\label{fig:trapezoidal_ff}
\end{figure}

With these definitions, we are prepared to describe a \textquadmeshmetric{}. Henceforth, we will assume the only connections used are the Levi-Cevita connections of the specified metric, written $\connection{}$. As such, the connection used for parallel translation and in defining the holonomy group is fixed.

\begin{definition}[{\Rd \textQuadMeshMetric{}} \cite{Chen:2019}]\label{def:qmetric}
A \textbf{\textquadmeshmetric{}} on a surface $\surf$ is a Riemannian metric $\emptyquadmeshmetric$ with cone singularities $\singpts$ with the following properties:
\begin{description}
\item[P1] 
	$\emptyquadmeshmetric$ is a flat metric with a finite number of cone singularities, $\singpts$. 
	The total curvature of the singularities obeys Gauss-Bonnet:
	\[
		\sum_{q \in \bdrysingpts}\big(\pi - \curvaturemap(q)\big) + \sum_{p \in \intsingpts}\big( 2\pi -  \curvaturemap(p)\big) = 2\pi \chi(\surf)
	\]
\item[P2] 
	The holonomy group of the surface is a subgroup of 
	$\mathcal{R} = \{\textrm{exp}(i \frac{k\pi}{2}), k \in \mathbb{Z}\}$, denoted
	\[
		\holonomy(\connection{}) \leq \mathcal{R}
	\]
\item[P3] 
	A boundary-aligned \textcrossfield{} defined on $\emptyquadmeshmetric{}$, $\quadcrossfield$,
	is obtained by parallel {\Bd transport} of a unit cross on a point 
	$p \in \surf-\singpts$ to all of $\surf - \singpts$.
\item[P4] 
	The integral curves of the \textcrossfield{} are geodesics of $\emptyquadmeshmetric{}$.
\item[P5]
	Integral curves of the \textcrossfield{} are periodic or of finite length.
\end{description}
\end{definition}
Note that Properties \textbf{P1} and \textbf{P3} imply that boundary singularities have a neighborhood isometric to a linear boundary cone, and that regular boundary points have a neighborhood isometric to an open half-disk, $HB_\epsilon$, with
	\[
		HB_\epsilon := \{ y = (y_1,y_2) \in \mathbb{R}^2: ||y||_{\mathbb{R}^2} < \epsilon, y_2 \geq 0\} ,
	\]
	for some $0 < \epsilon \in \mathbb{R}$ and $y = (y_1,y_2)$ under the standard Euclidean coordinates.
	A metric on $\surf$ obeying properties \textbf{P1}--\textbf{P4} of Definition \ref{def:qmetric} will be called a \textbf{\affineflatmetric{}}.

In general, a \textcrossfield{} can only be locally decomposed into four rotationally symmetric unit vector fields $\{X_j\}_{j=0}^3$ in which $RX_j = X_{(j+k)\; \mathrm{mod}\; 4}$ for $R = e^{i\frac{k\pi}{2}} \in \mathcal{R}$. 
 Parallel translation of a locally-defined component of the field about a loop may yield a possibly different component of the \textcrossfield{}, particularly if the loop bounds a topological disk with a cone singularity. 
 Here, \integralcurve{s} of a  \textcrossfield{} are defined  locally on a simply connected neighborhood and continued in a manner similar to analytic continuation {\Bd in} complex analysis.
Integrability of these fields, which is assumed by Properties \textbf{P4} and \textbf{P5}, is proved in the supplementary material.
This paper describes another equivalent representation which is a special type of immersion of the surface into $\mathbb{R}^2$.

Throughout this paper (as with other papers on the matter such as \cite{Bommes:2013,Bommes:2009,Campen:2019}), the following assumption holds on the curvature of the singularities:
\begin{assumption}\label{assume:sing_curvature}
	For $p \in \singpts$, $\curvaturemap{}_p > 0 $.

\end{assumption}
When $\curvaturemap{}_p = 0$, the definition of the cone singularity breaks down and it cannot be represented in Euclidean geometry; instead, it is a hyperbolic cusp if in the interior of the surface and a half-cusp if on the boundary \cite[pp.~54--55]{Cooper:2000}.
Such a point corresponds to a ``polar'' singularity, which of necessity would have an entire degenerate parametric edge.
These do not satisfy the above requirements of Definition \ref{def:quad_layout} and will not be further explored here.
However, it should be noted that a \quadlayout{} with polar singularities could be achieved by excising neighborhoods of the polar singularities, extracting a \quadlayout{} on the rest of the surface, and finally operating on the excised neighborhoods separately.

\subsection{Topological Preliminaries}

Under $\emptyquadmeshmetric{}$, the surface $\surf - \singpts$ has flat metric everywhere: such surfaces are often called \textbf{developable}, and can be ``flattened'' onto the plane (see \cite[pp.~66--72,91]{Struik:1988}).
The following discussion mimics theory related to the developing map of a manifold (see e.g. \cite[pp.~9]{Cooper:2000}).

We are interested in immersing the surface into the plane, and the ideal objects by which to do so are coordinate charts defined by the \textquadmeshmetric{'s} \textcrossfield{}, $\quadcrossfield{}.$
However, the holonomy about a singular point precludes this \textcrossfield{} from being separated into four globally-defined vector fields.
As such, the surface is first cut into a topologically simpler representation.

\begin{definition}[{\Rd \CuttingGraph}]\label{def:cutting_graph}
Let $\surf$ be a surface of genus $g$ with $\boundarycomp$ boundary components. 
Furthermore, let $\singpts \subset \surf$ be a finite set of discrete points in $\surf$. 
A \textbf{\cuttinggraph{}} $\graph(\surf,\singpts)$ is a piecewise smooth finite graph embedded in $\surf$ such that $(\singpts \cap \interior{\surf}) \subset \graph(\surf,\singpts)$ and $\surf - \graph(\surf,\singpts)$ is a set of simply-connected surfaces.

A \cuttinggraph{} is called \textbf{simple}, if, in addition, $\surf-\graph(\surf,\singpts)$ is a single simply-connected component  in which $\intsingpts \subset \partial \graph(\surf,\singpts), \bdrysingpts \ni q_i \not \in \graph(\surf,\singpts)$, and $\partial \graph(\surf,\singpts) \cap \partial \surf$ is discrete. 
\end{definition}

A simple \cuttinggraph{} is particularly convenient because it only intersects the boundary of $\surf$ transversely and discretely; it guarantees that each member of $\intsingpts$ is only cut to, but not through; and it ensures that each member of $\bdrysingpts$ is uncut.

When the surface and set $P$ are clear from context, we will use the notation $\graph := \graph(\surf,\singpts)$. 
The following assures the existence of cutting graphs on surfaces with cone singularities.

\begin{lemma}\label{lem:existence}
For a surface $\surf$ of genus $\genus$ with $\boundarycomp$ boundary components and a set of discrete points $\singpts \subset \surf$, there exists a simple cutting graph $\graph(\surf,\singpts)$.
\end{lemma}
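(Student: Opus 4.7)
My plan is to build $\graph(\surf,\singpts)$ in two stages: first, construct a topological cut graph $\graph_0$ that reduces $\surf$ to a disk while avoiding the boundary singularities; then append, for each interior singularity, a single arc terminating at it as a leaf.

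For the first stage I appeal to the classification of compact orientable surfaces with boundary: there exists a finite, piecewise smooth, embedded graph $\graph_0 \subset \surf$ such that $\surf - \graph_0$ is an open disk and $\graph_0 \cap \partial \surf$ consists of finitely many isolated points rather than arcs. Concretely, $\graph_0$ may be realized as $2\genus$ loops representing a symplectic basis of $H_1(\surf)$ together with $\max(\boundarycomp-1,0)$ simple arcs joining distinct boundary components (with a tether to the genus loops if $\genus > 0$ and $\boundarycomp > 0$). Because $\bdrysingpts$ is a finite set and the boundary endpoints of $\graph_0$ can be freely slid within each boundary component, a small ambient isotopy supported in a collar of $\partial \surf$ arranges $\bdrysingpts \cap \graph_0 = \emptyset$ while keeping $\graph_0 \cap \partial \surf$ discrete.

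For the second stage, write $D := \surf - \graph_0$, an open disk. Enumerate $\intsingpts = \{p_1,\dots,p_m\}$ and inductively choose mutually disjoint simple piecewise smooth arcs $\alpha_1,\dots,\alpha_m$ such that $\alpha_j$ joins $p_j$ to a point of $\graph_0$ and otherwise lies in $D$; at each step, removing the finitely many previously chosen arcs leaves a region still path connected to $\graph_0$, so such an $\alpha_j$ exists. Set $\graph := \graph_0 \cup \bigcup_j \alpha_j$. Then $\graph$ is a finite, piecewise smooth, embedded graph; each $p_j \in \intsingpts$ is a valence-one leaf vertex, hence lies in $\partial \graph$; $\bdrysingpts \cap \graph = \emptyset$; and $\graph \cap \partial \surf = \graph_0 \cap \partial \surf$ is discrete.

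The main verification — and the step I expect to be the chief obstacle — is that $\surf - \graph$ remains a single simply-connected component. This should follow because cutting the open disk $D$ along a simple arc with one free endpoint (at $p_j$) and one endpoint on $\partial D$ yields a ``slit disk,'' which is still homeomorphic to an open disk; iterating over the finitely many interior singular points therefore preserves simple connectedness. The remaining concerns — piecewise smoothness of the $\alpha_j$ near each cone singularity, transversality of any boundary crossings, and avoidance of $\bdrysingpts$ by the interior arcs — are all arranged by standard general-position arguments, so the bulk of the work really is packaging the slit-disk argument together with the topological classification input.
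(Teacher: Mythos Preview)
Your argument is correct in spirit and rests on the same ingredients as the paper's: the classification of compact orientable surfaces furnishes a cut system reducing $\surf$ to a disk, and the interior singularities are then reached by simple arcs. The difference lies in how the two stages are coupled. The paper first deletes a small open disk around each $p \in \intsingpts$, obtaining a surface of genus $\genus$ with $\boundarycomp + \#\intsingpts$ boundary components; it then invokes its auxiliary Lemma~\ref{lem:generators} to cut \emph{that} surface to a disk, and finally extends each arc meeting a deleted-disk boundary radially inward to the corresponding singular point. You instead cut $\surf$ itself to a disk while ignoring $\intsingpts$, and only afterwards append the leaf arcs $\alpha_j$ one at a time, maintaining simple connectedness via the slit-disk observation. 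One concrete payoff of the paper's ordering is that it handles automatically the edge case $\genus = 0$, $\boundarycomp \le 1$, $\intsingpts \neq \emptyset$: in your version $\graph_0$ is then empty, so $\alpha_1$ has nothing to attach to and the induction cannot start as written. You should patch this by seeding the graph with an arc from $p_1$ to a point of $\partial \surf$ when $\boundarycomp = 1$, or with an arc joining two interior singularities when $\boundarycomp = 0$ (Remark~\ref{rem:singularities} guarantees there are at least three).
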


{\Bd
The proof of Lemma \ref{lem:existence} is a slight extension of a fundamental result from algebraic topology.
The unfamiliar reader is referred to the Appendix (in the supplementary material) for additional details}.

\begin{remark}\label{rem:singularities}
Note that for a \textquadmeshmetric{}, the only surfaces in which $\graph = \emptyset$ are genus $0$ surfaces with $1$ boundary component and cone singularities on boundaries. For genus $\genus>0$ surfaces and surfaces with $\boundarycomp >1$, {\Bd $\graph \neq \emptyset$} is because $\pi_1(\surf) \neq 0$. For $\genus=0$ surfaces with $\boundarycomp = 0, \#\singpts > 2$ by Assumption \ref{assume:sing_curvature} and the Gauss-Bonnet Theorem. 
\end{remark}

{\Rd Generally a \cuttinggraph{} will not be a one-dimensional manifold with boundary because of the presence of splitting junctions. 
Nonetheless, we define the boundary of the \cuttinggraph{} to be the set of all points $p \in \graph$ possessing an open neighborhood in $\graph$ homeomorphic to $[0,\epsilon)$ in which $p \mapsto 0$. The boundary of the graph will be denoted as $\partial \graph$.}

Let each \cuttinggraph{} $\graph(\surf, \singpts)$ be given the following cellular structure.
First, take \node{s} of the graph to be the set $\mathcal{N}$ corresponding to {\Rd points in the \cuttinggraph{'s} boundary, splitting junctions, singularities of $\surf$ contained in $\graph$, and locations at which the graph is not smoothly embedded in $\surf$}.
Next, let the \arc{s} of $\graph$ be the set of (open) $1$-cells bijectively connecting zero-cells, written as $\mathcal{A}$.
{\Bd A simple cutting graph on a double torus with two cones of angle $4\pi$, together with the resulting cellular structure on $\graph$, is shown in Figure \ref{fig:cutting_graph}.
}

\begin{figure}
	\includegraphics[trim=0cm 0cm 0cm 0cm, clip, width=1\textwidth]{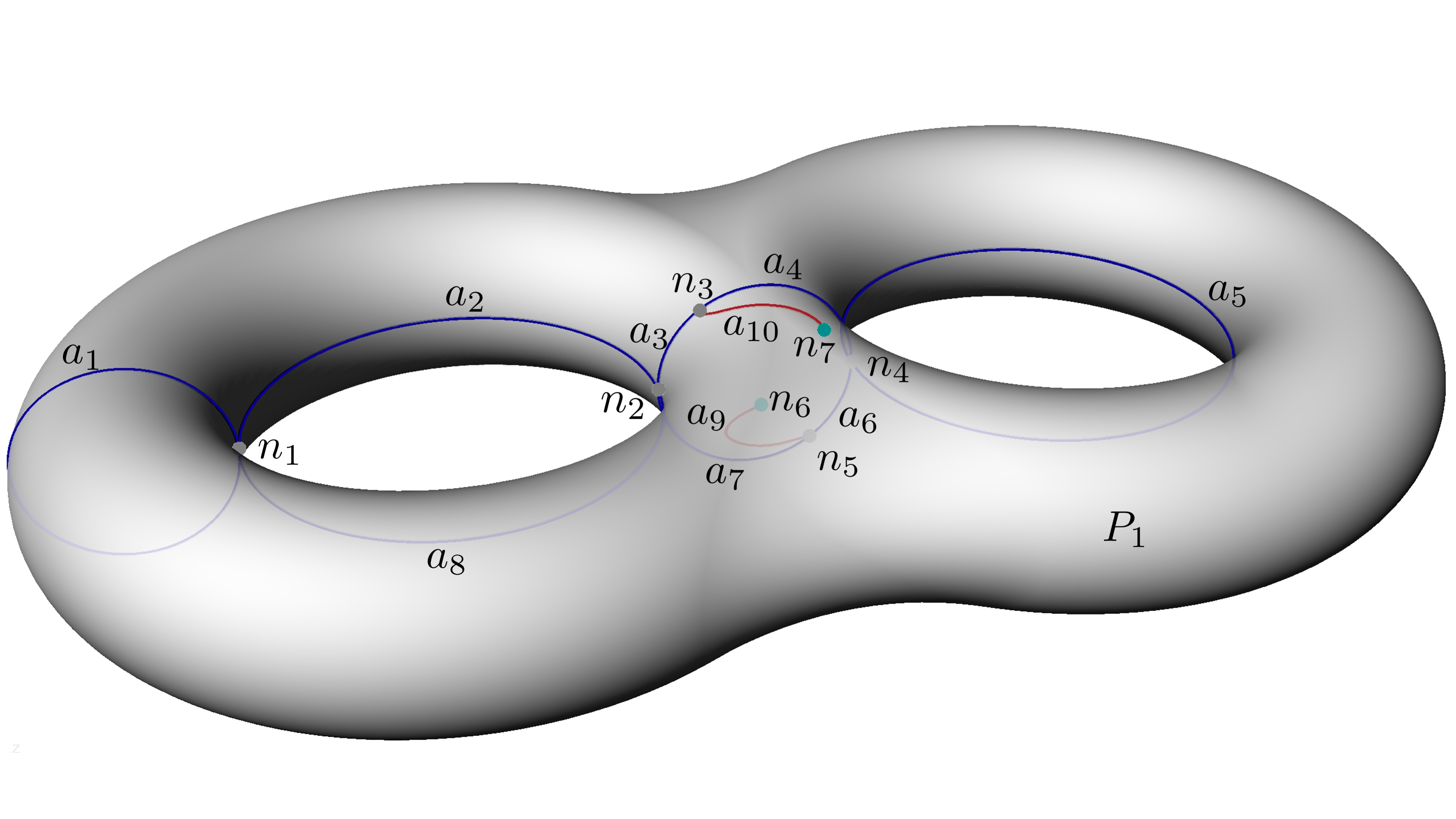}
	\caption{
	{\Bd A simple \cuttinggraph{} on a double torus is displayed.
	Curves in blue represent cuts to take the geometry to a topological disk, while curves in red cut to cone singularities of angle $4\pi$ (depicted as teal points).
	Here, \node{s} $n_i$ of the \cuttinggraph{} are locations in which the graph either terminates or is non-manifold (splits).
	\Arc{s} $a_j$ are maximal portions of the graph bounded by two \node{s} and with no \node{s} in their interior.
	Because this \cuttinggraph{} is simple, it cuts the surface into a single topological disk (called a \patch{}), $P_1$.
	}
	}\label{fig:cutting_graph}
\end{figure}

{\Bd Recall that any Riemannian metric, $\emptymetric$, on a manifold $M$ induces distance metric in the following manner.
If $\gamma:\mathbb{I}\rightarrow M$ is a curve in $M$, the length of the curve under the metric is defined as
\begin{equation}\label{eq:length_curve_manifold}
	L(\gamma) = \int_0^1 \Big(\metric{\gamma^\prime(t)}{\gamma^\prime(t)} \Big)^{\frac{1}{2}} dt.
\end{equation}
Define the induced distance metric, $\distancemetric{}:M\times M \rightarrow \mathbb{R}$ as 
\begin{equation}\label{eq:distance_metric_manifold}
	\distancemetric(p,q) = \inf_{\gamma \in \{\gamma_\iota\}_\iota} L(\gamma).
\end{equation}
where for some $p, q \in M$, $\{\gamma_\iota\}_\iota$ is the set of all curves in which $\gamma_\iota(0) = p, \gamma_\iota(1) = q.$
}

We are interested in the Cauchy completion of metric spaces on the surface.
First, it is shown that for any Riemannian metric on all of $\surf$, the {\Bd topology induced by its distance metric is equivalent to that induced by the \textquadmeshmetric{}}.

\begin{lemma}\label{lem:completion_same_nograph}
Let $\emptymetric$ be any Riemannian metric tensor on a compact surface $\surf$. 
Denote by $\distancemetric$ its induced {\Bd distance} metric on the surface $\surf$.
Let $\emptyquadmeshmetric$ be a \textquadmeshmetric{} on $\surf$, with $ \quaddistancemetric$ its induced {\Bd distance} metric on $\surf-\singpts$.
Then the {\Bd topologies} induced by $\distancemetric$ and $\quaddistancemetric$ on the domain $\surf - \singpts$  {\Bd are}  equivalent, and their Cauchy completions induce the same topology on $\surf$.
\end{lemma}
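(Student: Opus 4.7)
The plan is to split the statement into its two assertions and handle each with standard Riemannian-geometric machinery. First, I would establish that on the open submanifold $\surf - \singpts$ both distance metrics induce exactly the manifold topology, and hence coincide as topologies. Second, I would argue that the Cauchy completions of the two spaces recover $\surf$ with its original topology, using Definition \ref{def:flat_w_cone} for the set-level completion and the local isometric model at each cone point to identify the topology.

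For the topology-equivalence assertion on $\surf - \singpts$, I would invoke the classical fact that the distance metric induced by a smooth Riemannian metric on a connected smooth manifold generates the underlying manifold topology (see, e.g., do Carmo). Applied to $\emptymetric$ on the compact manifold $\surf$, this shows that $\distancemetric$ induces the manifold topology on $\surf$, whose subspace topology on $\surf - \singpts$ is the usual submanifold topology. Applied to $\emptyquadmeshmetric$, which is a genuine smooth Riemannian metric on the connected smooth manifold $\surf - \singpts$, the same theorem says $\quaddistancemetric$ induces the manifold topology on $\surf - \singpts$. The two topologies therefore agree. I would emphasize here that the two distance functions themselves need not be globally bi-Lipschitz, since $\distancemetric$ may realize shorter paths by routing through $\singpts$, but only the induced topologies need to match.

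For the completion assertion, I would first note that $(\surf, \distancemetric)$ is already complete because $\surf$ is compact, so its Cauchy completion is $\surf$ with its given topology. For $(\surf - \singpts, \quaddistancemetric)$, Definition \ref{def:flat_w_cone} asserts that the Cauchy completion equals $\surf$ as a set, so it remains only to identify the topology of the completion at each added point $p \in \singpts$ with the manifold topology of $\surf$ at $p$. Around such a $p$, Definition \ref{def:flat_w_cone} provides an isometry (with respect to $\emptyquadmeshmetric$) between a punctured neighborhood of $p$ in $\surf$ and a punctured neighborhood of the cone vertex $v$ in the standard (boundary) cone. This isometry extends uniquely to a homeomorphism of Cauchy completions sending $p$ to $v$. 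On the cone side, a fundamental system of neighborhoods of $v$ in the completion topology is $\{(r,\phi): 0 \leq r < \delta\}$ in the coordinates of Definition \ref{def:cone} (or Definition \ref{def:boundary_cone}), which is precisely a fundamental system for $v$ under the standard cone topology. Pulling back, I obtain a fundamental system of neighborhoods of $p$ in the completion topology that agrees with the fundamental system in $\surf$'s topology. Combined with the topology agreement on $\surf - \singpts$ from the first part, this shows the completion topology on $\surf$ equals $\surf$'s original topology.

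The main subtlety I expect is the identification of the completion topology at the cone points rather than at regular points: one must verify that the set-level completion given in Definition \ref{def:flat_w_cone} is genuinely \emph{homeomorphic}, not merely bijective, to $\surf$ near $\singpts$. The cleanest way to handle this is the cone-local isometry argument outlined above; the only mild bookkeeping is verifying the boundary-cone case using Definition \ref{def:boundary_cone}, and ruling out any ``hidden'' Cauchy sequences near $\singpts$ that do not converge to the obvious cone vertex — a fact that follows from boundedness of the Jacobian and the form of the cone metric $ds^2 = dr^2 + r^2\,d\phi^2$.
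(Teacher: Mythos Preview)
Your proposal is correct and takes a route that differs from the paper's in a useful way. The paper argues via Lipschitz equivalence of Riemannian metric tensors on a compact surface: from $\alpha\,\metric{\gamma'}{\gamma'}_1 \le \metric{\gamma'}{\gamma'}_2 \le \beta\,\metric{\gamma'}{\gamma'}_1$ it deduces strong equivalence of the induced distance metrics and hence identical topologies and completions, and then handles $\quaddistancemetric$ with a brief ``a ball in one topology contains a ball in the other'' remark. Your argument instead invokes directly the standard theorem that any Riemannian distance function recovers the manifold topology, applied once to $\emptymetric$ on $\surf$ and once to $\emptyquadmeshmetric$ on $\surf - \singpts$, and then treats the completion at each $p \in \singpts$ explicitly via the local (boundary) cone isometry of Definitions~\ref{def:boundary_cone}--\ref{def:flat_w_cone}. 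The paper's bi-Lipschitz step is clean for two metrics both defined on all of compact $\surf$, but $\emptyquadmeshmetric$ lives only on the noncompact set $\surf - \singpts$, so the passage to $\quaddistancemetric$ in the paper is where the real work hides; your cone-model argument makes that passage explicit and is the more careful of the two near the singular set. Conversely, the paper's approach is shorter once one accepts the ball-containment claim, and its Lipschitz comparison is exactly what is reused downstream in Corollary~\ref{corol:completion_same_graph} and in the Jacobian-boundedness arguments of Propositions~\ref{prop:quad_to_immersion} and~\ref{prop:partial_converse}.
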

\begin{proof}
First, recall that all Riemannian metrics are Lipschitz equivalent on a compact surface in the following sense:
for metrics $\emptymetric_1,\emptymetric_2$, there exist $0 < \alpha,\beta < \infty$ constants such that for any curve $\gamma$,
\[
	\alpha \metric{\gamma^\prime}{\gamma^\prime}_1 \leq \metric{\gamma^\prime}{\gamma^\prime}_2 \leq \beta \metric{\gamma^\prime}{\gamma^\prime}_1.
\]
Then by definition (Equations \ref{eq:length_curve_manifold} and \ref{eq:distance_metric_manifold}),
\[
	\alpha^{\frac{1}{2}} \distancemetric_1(p,q) \leq \distancemetric_2(p,q) \leq \beta^{\frac{1}{2}} \distancemetric_1(p,q).
\]
Thus the metrics are strongly equivalent and the topologies induced by $\distancemetric_1$ and $\distancemetric_2$ are equivalent.
Furthermore, if a set of discrete points $\singpts$ are removed from $\surf$, the completion of $\surf - \singpts$ in both metrics is the same, and can simply be written as the topology on $\surf.$

Now, for $\surf - \singpts$, $\emptyquadmeshmetric$ is a Riemannian metric, which induces a well-defined metric on the surface, $\quaddistancemetric:(\surf-\singpts)\times(\surf-\singpts)\rightarrow \mathbb{R}$.
By construction, {\Bd the completion of the metric space induced by the distance metric $\quaddistancemetric$ is the entire domain} $\surf$.
 Furthermore, the topology on $\surf$ induced by $\quaddistancemetric$ coincides with the topology on $\surf$ induced by any Riemannian metric: a ball in one topology contains a ball in the other.
But all metrics on a compact domain  inducing the same topology are equivalent in the sense that the identity mapping from $\surf$ in one metric space to $\surf$ under a different topology is uniformly continuous.
Specifically, Cauchy sequences converging to any point in one will converge to the same point in the other, and thus the completions are identical.
\end{proof}

By a similar argument on $\surfmsinggraph$ the following holds.
\begin{corollary}\label{corol:completion_same_graph}
Let $\emptymetric$ be a Riemannian metric on $\surf$, with induced Riemannian metric on $\surfmsinggraph$ yielding a distance metric $\distancemetric:\big(\surfmsinggraph\big)\times\big(\surfmsinggraph\big) \rightarrow \mathbb{R}$.
Similarly, take the induced Riemannian metric $\emptyquadmeshmetric$ on $\surfmsinggraph$ with induced distance metric $\quaddistancemetric:\big(\surfmsinggraph\big)\times\big(\surfmsinggraph\big) \rightarrow \mathbb{R}$.
Then the topologies induced by $\distancemetric$ and $\quaddistancemetric$ are both the same, and have identical completion.
\end{corollary}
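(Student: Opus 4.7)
The plan is to mirror the proof of Lemma \ref{lem:completion_same_nograph} essentially verbatim, now operating on the open submanifold $\surfmsinggraph$ in place of $\surf - \singpts$. The key observation is that $\surfmsinggraph \subset \surf - \singpts$: the cutting graph contains every interior cone singularity by Definition \ref{def:cutting_graph}, and the boundary singularities are explicitly removed via $\singpts$. Consequently, both $\emptymetric$ and $\emptyquadmeshmetric$ restrict to smooth Riemannian metric tensors on the open manifold $\surfmsinggraph$, and each induced distance metric is well-defined.

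First I would show the two distance metrics induce the same topology on $\surfmsinggraph$. Since both $\emptymetric$ and $\emptyquadmeshmetric$ are smooth Riemannian tensors on the open manifold $\surfmsinggraph$, each of $\distancemetric$ and $\quaddistancemetric$ induces the standard manifold topology there, so the ``ball in one topology contains a ball in the other'' argument used in Lemma \ref{lem:completion_same_nograph} applies directly. Next, for Cauchy completions, I would invoke Lipschitz equivalence of Riemannian metric tensors on compact subsets: for any compact $K \subset \surfmsinggraph$, there exist constants $0 < \alpha_K \leq \beta_K$ such that the two tensors are pinched between these factors, so the two distance metrics are strongly equivalent on curves staying in $K$. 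Since both $\emptymetric$ and $\emptyquadmeshmetric$ remain smooth Riemannian in a neighborhood (in $\surf$) of every smooth point of $\graph$ away from $\singpts$, Cauchy sequences in $\surfmsinggraph$ accumulating there are Cauchy in both metrics simultaneously.

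For sequences accumulating at a boundary cone singularity $q \in \bdrysingpts$, I would use the explicit cone form of $\emptyquadmeshmetric$ given by Definition \ref{def:qmetric}: on a punctured neighborhood of $q$ the quad mesh metric has the form $dr^2 + r^2 d\phi^2$, and in such coordinates the radial distance to $q$ is uniformly comparable to the distance measured by any smooth extension of $\emptymetric$. Hence Cauchy sequences in one metric approaching $q$ are Cauchy in the other. Combined with the compact exhaustion argument, this shows the identity map $(\surfmsinggraph,\quaddistancemetric) \to (\surfmsinggraph,\distancemetric)$ extends to a homeomorphism between the Cauchy completions, so both completions induce the same topology.

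The main obstacle will be justifying the comparison of the two metrics near the boundary cone singularities, where global Lipschitz equivalence on $\surfmsinggraph$ can fail (the Lipschitz constant may blow up as one approaches $\bdrysingpts$). The resolution is to replace the global argument by a local comparison in a cone chart around each singular point, bounding the ratio of the two metric tensors on each such neighborhood and then patching with the compactness argument on the complement. Once this local comparison is in hand, the remainder of the proof is a direct transcription of the argument in Lemma \ref{lem:completion_same_nograph}.
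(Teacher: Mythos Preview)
Your proposal is correct and follows the same approach as the paper, which dispatches the corollary in a single sentence by noting that the arguments of Lemma~\ref{lem:completion_same_nograph} are entirely local and hence apply to the open subspace $\surfmsinggraph$. Your careful treatment of the cone singularities is sound but goes well beyond the paper's one-line proof; note, however, that the same local cone-chart comparison you set up for $\bdrysingpts$ is equally needed near the interior singularities $\intsingpts \subset \graph$, which your case split does not explicitly cover.
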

\begin{proof}
The notions used in \ref{lem:completion_same_nograph} are entirely local, and thus also apply to subspaces.
\end{proof}
Thus we can canonically define the topology of the surface $\surfmsinggraph$ for a Riemannian metric or a \textquadmeshmetric, which will simply be denoted as $\surfmsinggraph$. 
Furthermore, the completion is canonically defined and denoted as $\completion{\surfmsinggraph}$, where the double line is used {\Bd to emphasize} that this is not a closure operation.

To better understand the completion, let $p \in \surf$ with $U(p)$ a simply connected closed neighborhood in $S$ with the following properties.
\begin{enumerate}
	\item If $p \not \in  \singpts \cup \graph, U(p) \cap (\singpts \cup \graph) = \emptyset$.
	\item If $p \in \singpts \cup \graph$, then $U(p)$ contains no \node{s} of $\graph$ and no members of $\singpts$ other than (possibly) $p$ itself.
	\item $\graph \cap U(p)$ has at most one connected component.
\end{enumerate}
Then $U(p)- \graph$ divides $U(p)$ into $\ell$ connected components, written $\{B_i(p)\}_{i=1}^\ell$.
If $p \in \graph,$ in the completion $\completion{\toposurfmsinggraph}$, $p$ will be represented by $\ell$ distinct points, $\{p_i\}_{i=1}^\ell$.
Because the inclusion map $\iota:\surfmsinggraph \rightarrow \surf$ is Cauchy-continuous, it has a unique extension $\quotientmap:\completion{\surfmsinggraph} \rightarrow \surf$ in which $\quotientmap(p_i) = p$ for each $i = 1,\dots,\ell$.

Alternatively, this can be visualized as in Figure \ref{fig:completion}.
Here, level sets of geodesic distances from a point in red are given, with the cuts in dark green removed from the surface.
Notice that cutting curves locally divide otherwise connected domains, as seen by the geodesic distance colors. 
Under this metric, a Cauchy sequence entirely on one side of the cutting curve will converge to a point in $\completion{\surfmsinggraph}$ which differs from a Cauchy sequence converging to the same point in Euclidean space but defined on the other side of the cutting graph.

\begin{figure}
	\centering
	\includegraphics[trim=0cm 0cm 0cm 0.1cm, clip, width=.95\textwidth]{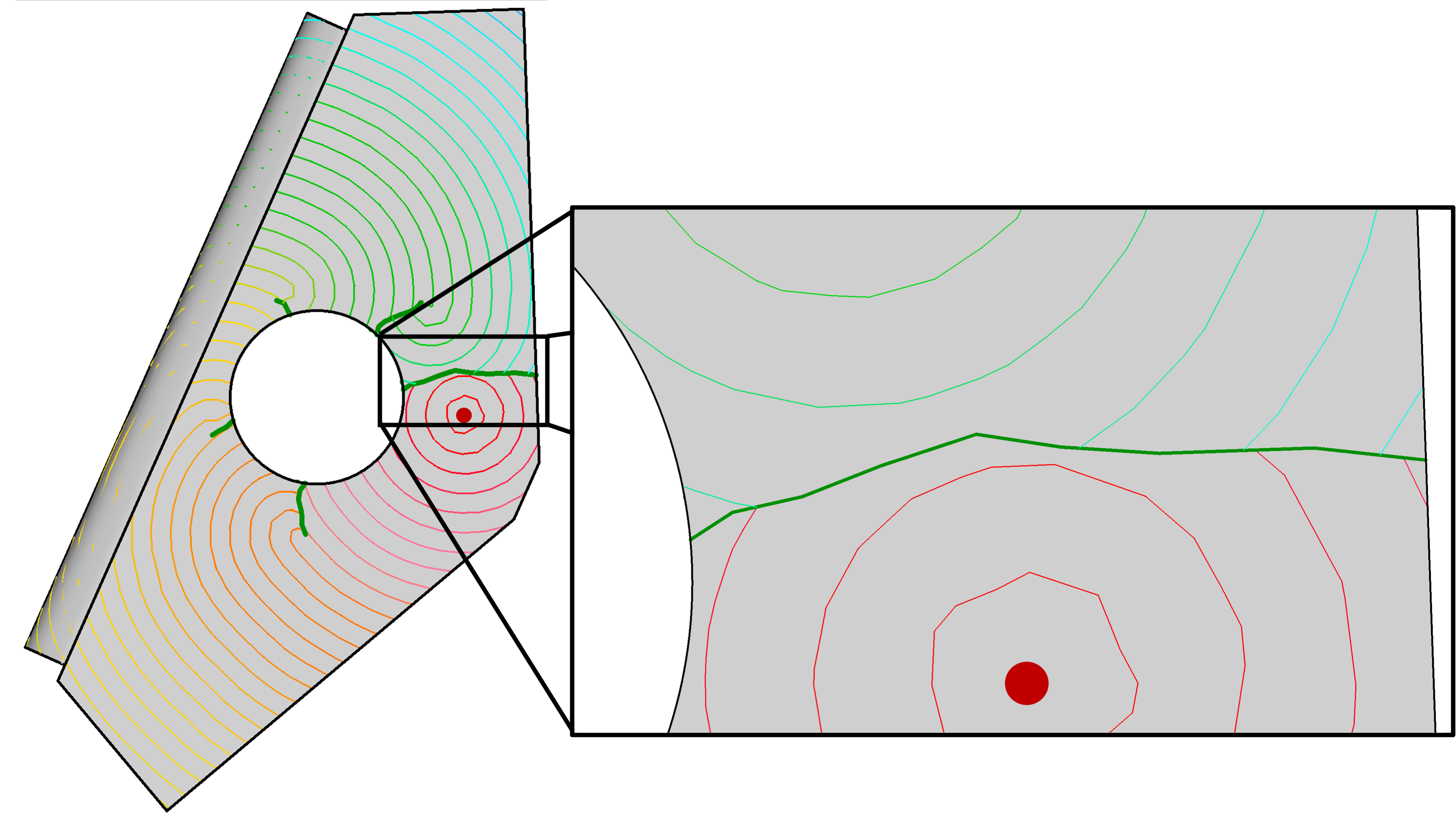}
	\caption{Level sets of distances on a surface from the point in red are displayed. Cuts, displayed in dark green, break continuity and alter distances (and thus neighborhoods) across a cut.  Under the topology defined by this metric, a Cauchy sequence on one side of the \cuttinggraph{} converging to a point in Euclidean space will converge to a different point than a Cauchy sequence converging to the same point in Euclidean space on the other side of the \cuttinggraph{}. }\label{fig:completion}
\end{figure}

Here, note that $\quotientmap$ is not only an inclusion operator, but also a quotient map from $\completion{\surfmsinggraph}$ to $\surf$.
The identification is $p \sim q \iff \quotientmap(p) = \quotientmap(q)$.
Throughout the remainder of this work, for $p \in \surf$ (respectively $A \subset \surf$) we take $\quotientmap^{-1}(p)$ (respectively $\quotientmap^{-1}(A)$) to mean the preimage of $p$ (respectively $A$) under the quotient map.

\subsection{Definition of the Immersion}

After removal of the \cuttinggraph{} from the surface, the local vector field representation of a \textcrossfield{} is both well-defined globally and integrable.

\begin{lemma}\label{lem:integrable_fields}
Given a \textquadmeshmetric{} $\emptyquadmeshmetric$ on $\surf$, the induced \textcrossfield{} on $\surfmsinggraph$ of $\quadcrossfield$ decomposes into four distinct, rotationally symmetric integrable vector fields $\{X_i\}_{i=0}^3$ which are well-defined over each connected component of $\surfmsinggraph$.
\end{lemma}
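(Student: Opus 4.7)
The plan is to exploit the simple connectedness of each component of $\surf - (\singpts \cup \graph)$ together with the flatness of the \textquadmeshmetric{} away from cone singularities. First I would verify that each connected component $C$ of $\surf - (\singpts \cup \graph)$ is simply connected. By Definition \ref{def:cutting_graph} the components of $\surf - \graph$ are simply connected, and $\singpts \cup \graph$ differs from $\graph$ only by those boundary singularities $q \in \bdrysingpts$ not already in $\graph$, which form a discrete set of isolated boundary points. Removing such boundary points from a simply connected surface-with-boundary does not introduce new loops (any small loop around such a puncture bounds a half-disk on which the puncture lies on the boundary and hence is contractible in $C$), so $C$ remains simply connected.

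Next I would construct the four vector fields. Fix a basepoint $p \in C$ and choose a unit vector $X_0(p) \in T_p \surf$ belonging to the cross $\quadcrossfield(p)$. Since $\emptyquadmeshmetric$ is flat on $C$ (by \textbf{P1}, as $C$ contains no singularities) and $C$ is simply connected, the Levi-Civita connection has trivial holonomy on $C$; hence parallel transport of $X_0(p)$ along any path from $p$ to $q \in C$ yields the same vector, defining a smooth vector field $X_0$ on $C$. Set $X_i := R^i X_0$ for $i = 1,2,3$, where $R$ denotes counterclockwise rotation by $\pi/2$ in each tangent plane (using the fixed orientation of $\surf$). Because parallel transport under the Levi-Civita connection preserves both the metric and the orientation, it commutes with $R$, so each $X_i$ is itself parallel and smooth on $C$, and the four fields are pairwise distinct, unit-length, and rotationally symmetric by construction.

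To verify that $\{X_0(q), X_1(q), X_2(q), X_3(q)\} = \quadcrossfield(q)$ for every $q \in C$, note that by \textbf{P3} the cross field is itself built by parallel transport of a unit cross from a single point. Since $\{X_i(p)\}_{i=0}^3 = \quadcrossfield(p)$ by our choice, and parallel transport of the entire cross is path-independent on $C$, the identification holds at every $q \in C$.

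Finally, for integrability, each $X_i$ is smooth and globally defined on $C$, so standard ODE theory furnishes unique integral curves through every point, and by \textbf{P4} these are geodesics of $\emptyquadmeshmetric$. Moreover, since any two of the $X_i$ are parallel, $[X_i,X_j] = \nabla_{X_i} X_j - \nabla_{X_j} X_i = 0$, so any orthogonal pair forms a commuting frame and yields a local flat coordinate chart via the flow-box theorem for commuting vector fields. The main technical obstacle is ensuring the decomposition is globally consistent on $C$ despite $\quadcrossfield$ being only locally decomposable on all of $\surf$; this is precisely what simple connectedness of $C$ buys, since nontrivial holonomy through a singularity is the sole obstruction to unambiguously labeling the four branches.
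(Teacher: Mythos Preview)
Your proposal is correct and follows essentially the same route as the paper: simple connectedness of each component of $\surf-(\singpts\cup\graph)$ forces trivial holonomy, so parallel transport of a cross member yields four well-defined rotationally symmetric unit fields, and since these are parallel their Lie brackets vanish, giving integrability. The paper's in-text proof is terse (``integrability holds by construction''), while its appendix supplies the Lie-bracket computation via an explicit local isometry to $\mathbb{R}^2$; your argument that $\nabla X_i=0$ directly implies $[X_i,X_j]=0$ is the same idea, stated a bit more cleanly, and your explicit treatment of why the stray boundary singularities in $\bdrysingpts\setminus\graph$ do not spoil simple connectedness is a detail the paper leaves implicit.
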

\begin{proof}
After cutting $\surf$ into a set of simply connected components, the holonomy group on each connected component is trivial.
Then parallel translation of any component of the \textcrossfield{} on a connected component of $\surfmsinggraph$ yields a well-defined vector field.
Integrability holds by construction.
\end{proof}


{\Bd Figure \ref{fig:frame_field_cuts} shows pictorially how, after removal of the \cuttinggraph{}, a \textframefield{} on a surface in Euclidean space can be decomposed into four well-defined vector fields.
Without introduction of these cuts, the smooth vector fields would not be well-defined, as seen by the {\Rd change of direction of the fields across the \cuttinggraph{}.}
(Recall that a \textframefield{} is a \textcrossfield{} under a non-Euclidean metric).
}

\begin{figure}
		\centering
		\begin{subfigure}{1\textwidth}
			\includegraphics[trim=0cm 0cm 0cm 0cm, clip, width=1.0\textwidth]{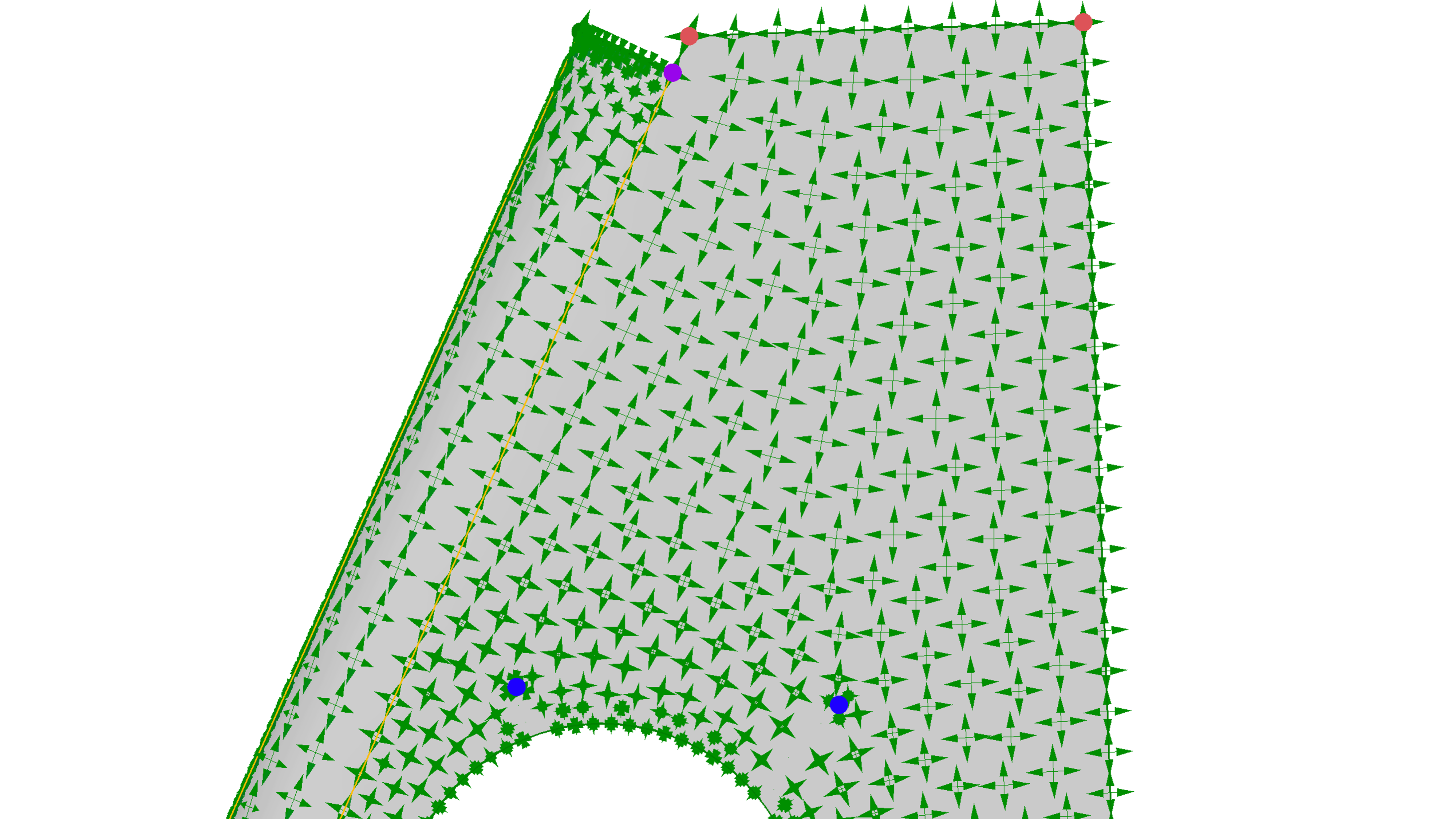}
			\caption{\Rd \textFramefield{} and singularities defined on the surface without cuts}
		\end{subfigure}
		\\
		\begin{subfigure}{0.485\textwidth}
			\includegraphics[trim=5cm 0cm 5cm 0cm, clip, width=1.0\textwidth]{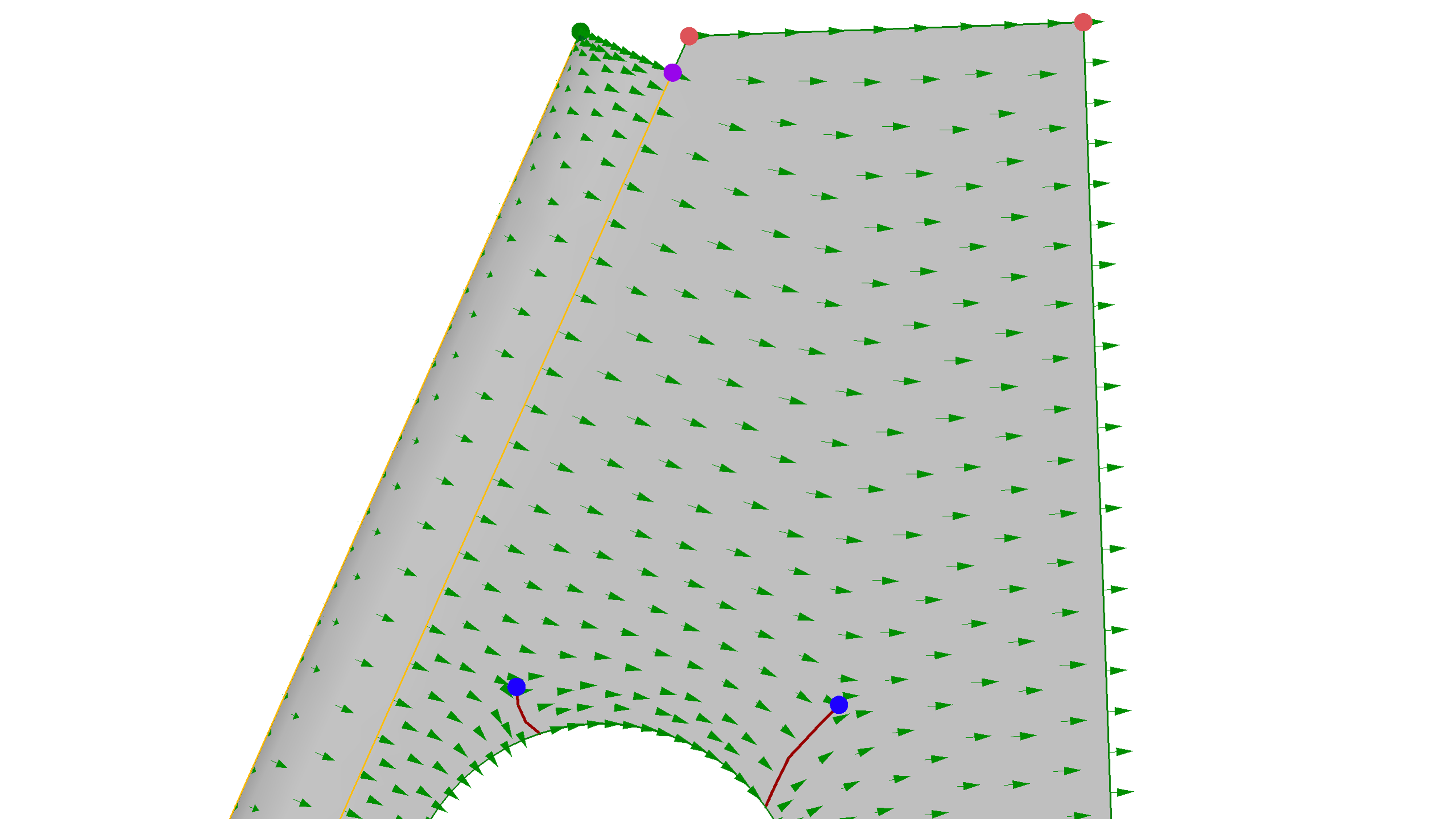}
			\caption{\Rd $1^{\mathrm{st}}$ vector field component after cutting }
		\end{subfigure}
		\;
		\begin{subfigure}{0.485\textwidth}
			\includegraphics[trim=5cm 0cm 5cm 0cm, clip, width=1.0\textwidth]{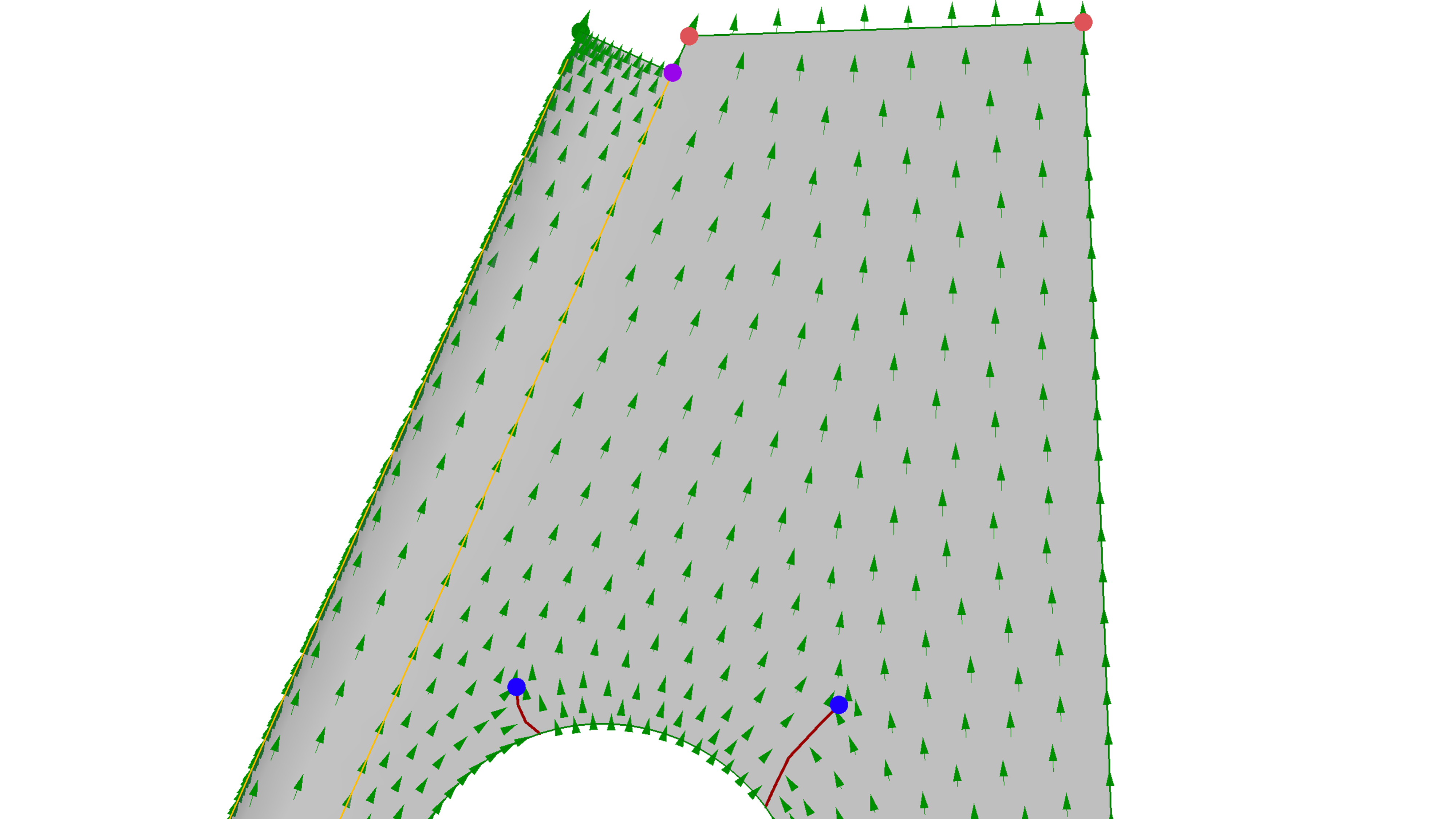}
			\caption{\Rd $2^{\mathrm{nd}}$ vector field component after cutting}
		\end{subfigure}
	\caption{{\Rd After removing the \cuttinggraph{} (in red) from the surface, a \textframefield{} can be decomposed into a set of four well-defined vector fields, two of which are linearly independent and shown above.
	Without introduction of these cuts, these smooth vector fields would generally not be well-defined, as is seen by the change in direction of the vector field along the boundary on either side of the cut.
	}}\label{fig:frame_field_cuts}
\end{figure}

With the results of Lemma \ref{lem:integrable_fields}, we are prepared to discuss an isometric immersion (via the \textquadmeshmetric{}) of $\surf-\left(\singpts\cup\graph\right)$ into the plane.

\begin{proposition}\label{prop:quad_to_immersion}
The \textquadmeshmetric{} $\emptyquadmeshmetric$ on $\surfmsinggraph$ induces a map 
\[
	\bar{\immersion}:\completion{\surfmsinggraph}\rightarrow \mathbb{R}^2
\]
which is an isometric immersion on $\surfmsinggraph$ and locally injective for each $p$ in which $\quotientmap(p) \not \in \singpts \cup \partial\graph$. 
This can be taken to have coordinate functions as integral curves of \textcrossfield{} $\quadcrossfield$ vectors.
\end{proposition}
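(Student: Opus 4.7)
The plan is to construct $\bar{\immersion}$ in three stages and then verify local injectivity by case analysis. For the construction, I would fix a connected component $C$ of $\surfmsinggraph$ together with a base point $p_0 \in C$. Lemma \ref{lem:integrable_fields} provides four globally-defined, mutually orthogonal unit vector fields $\{X_i\}_{i=0}^3$ on $C$, obtained by parallel transport of a cross at $p_0$. Letting $\omega_0, \omega_1$ denote the $\emptyquadmeshmetric$-duals of $X_0, X_1$, the fact that the Levi-Civita connection is torsion-free together with the parallelism of $X_0, X_1$ gives $[X_0, X_1] = 0$, so $d\omega_0 = d\omega_1 = 0$. Since $C$ is simply connected by the defining property of a cutting graph, each $\omega_i$ is exact; writing $\omega_i = du_i$ with $u_i(p_0) = 0$ and setting $\immersion|_C(p) = (u_0(p), u_1(p))$ produces an isometric immersion whose coordinate lines are precisely integral curves of $\quadcrossfield$. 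Repeating this on every component (choosing base points independently) yields $\immersion: \surfmsinggraph \to \mathbb{R}^2$.

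Next, I would extend $\immersion$ to the Cauchy completion. Because $\immersion$ is locally an isometry, $\|\immersion(\gamma(1)) - \immersion(\gamma(0))\|_{\mathbb{R}^2} \le L(\gamma)$ for every smooth curve $\gamma$ in $\surfmsinggraph$. Taking infima over curves connecting $p$ and $q$ gives the Lipschitz estimate $\|\immersion(p) - \immersion(q)\|_{\mathbb{R}^2} \le \quaddistancemetric(p,q)$, so $\immersion$ sends $\quaddistancemetric$-Cauchy sequences to Euclidean-Cauchy sequences and extends uniquely to a continuous $\bar{\immersion}: \completion{\surfmsinggraph} \to \mathbb{R}^2$ by the universal property of metric completion.

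For local injectivity at $p \in \completion{\surfmsinggraph}$ with $\quotientmap(p) \notin \singpts \cup \partial\graph$, I would split into three cases. If $\quotientmap(p) \in \mathring{\surf} - (\singpts \cup \graph)$, then $p \in \surfmsinggraph$ and the inverse function theorem applied to the immersion $\immersion$ yields a neighborhood of $p$ on which $\bar{\immersion}$ is a diffeomorphism onto its image. If $\quotientmap(p) \in \partial\surf - (\singpts \cup \graph)$, Properties \textbf{P1} and \textbf{P3} of Definition \ref{def:qmetric} give a neighborhood of $\quotientmap(p)$ isometric to the regular half-disk $HB_\epsilon$, which lifts through $\quotientmap$ to a neighborhood of $p$ on which $\bar{\immersion}$ restricts to an isometric embedding. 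Finally, if $\quotientmap(p)$ lies in the interior of an arc of $\graph$ (so it is neither a node nor a singularity), then a sufficiently small neighborhood of $\quotientmap(p)$ in $\surf$ is cut by a single smooth arc into two half-disks; the preimage under $\quotientmap$ of the side corresponding to $p$ is again isometric to $HB_\epsilon$, and the same argument applies.

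The hard part will be the rigorous identification of neighborhoods in $\completion{\surfmsinggraph}$ for points $p$ with $\quotientmap(p) \in \partial\surf \cup \mathring{\graph}$. I would need to verify that the abstract metric completion actually produces the one-sided neighborhoods $\{B_i(p)\}$ described informally before Figure \ref{fig:completion}, using the local flatness of $\emptyquadmeshmetric$ together with the piecewise smooth, transverse embedding of $\graph$ to show that distinct sectors of $U(p) - \graph$ yield distinct Cauchy classes. Once this identification is in place, the three injectivity cases each reduce to a standard isometric extension, and the coordinate-curve description of $\bar{\immersion}$ follows directly from the construction via the $u_i$.
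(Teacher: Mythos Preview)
Your construction via global integration of closed one-forms is correct and genuinely different from the paper's argument. The paper builds $\bar{\immersion}$ as a developing map: it classifies points of $\completion{\surfmsinggraph}$ into four neighborhood types (interior disk, half-disk, boundary cone at nodes/singularities, half-disk along arcs), extracts a finite subcover by compactness, and then glues the local isometries inductively, aligning overlaps by rotations, reflections, and translations. Your route exploits simple connectivity more directly: once $[X_0,X_1]=0$ gives $d\omega_i=0$, Poincar\'e's lemma hands you global primitives $u_i$ in one stroke, and the coordinate-line alignment with $\quadcrossfield$ is automatic rather than arranged chart by chart. The paper's approach makes the boundary-cone structure at each completion point explicit up front, which it reuses in Proposition \ref{prop:partial_metric_to_partial_immersion}; yours is cleaner for the immersion itself but defers that structural analysis to the extension step.

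One small gap in your local-injectivity case split: your three cases cover $\quotientmap(p)\notin\singpts\cup\mathcal{N}$, but the statement asks for $\quotientmap(p)\notin\singpts\cup\partial\graph$, and $\mathcal{N}$ can be strictly larger than $\singpts\cup\partial\graph$ (it also contains splitting junctions and non-smooth points of $\graph$). At such a node the completion gives $p$ a sector neighborhood that is a boundary cone of angle strictly less than $2\pi$ (being a proper sector of a flat disk or half-disk), so $\bar{\immersion}$ is still injective there; you should add this fourth case. The paper handles it by its last paragraph, noting that boundary cones with $\curvaturemap<2\pi$ are locally injective.
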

\begin{proof}
We proceed under the assumption that $\completion{\surfmsinggraph}$ is a single connected component; if not, operate on each individually.

By construction, each point $p \in \completion{\surfmsinggraph}$ is of one of four types:
\begin{enumerate}
	\item $p \in \big(\surfmsinggraph\big) - \partial \surf$, in which case it has a neighborhood which is isometric to a disk in $\mathbb{R}^2$.
	\item $p \in \big(\surfmsinggraph\big) \cap \partial \surf$, in which case it has a neighborhood which is isometric to an open half-disk $HB_{\epsilon_p}$.
	\item $\quotientmap(p) \in \singpts \cup \mathcal{N}$, in which case it has a neighborhood isometric to a boundary cone $H\mathcal{C}(p, \curvaturemap_p), \curvaturemap_p \leq \curvaturemap\big(\quotientmap(p)\big)$.
	\item $\quotientmap(p) \in \graph - (\singpts \cup \mathcal{N})$, in which case it has a neighborhood which is isometric to a boundary cone $H\mathcal{C}(p, \pi)$.
\end{enumerate}

Let $\mathcal{U}$ be an open cover of $\completion{\surfmsinggraph}$ using the above neighbhorhoods.
Specifically, for $p_i$ in which $\quotientmap(p_i) \in \graph \cap (\singpts \cup \mathcal{N})$, take the neighborhood to be a subset of a small open ball of radius $\epsilon$ about $p_i$.
Define this set to be $\{B_{\epsilon}(p_i)\}_{i=1}^N$ for some finite $N$.
For each member of $\mathcal{U},$ subtract the closed balls $\{\bar{B}_{\frac{\epsilon}{2}}(p_i)\}_{i=1}^N$; call this $\tilde{\mathcal{U}}$.
Then $\tilde{\mathcal{U}} \cup \{B_{\epsilon}(p_i)\}_{i=1}^N$ is an open cover of $\completion{\surfmsinggraph}$ with only one (small) neighborhood on each $p_i$. Because $\completion{\surfmsinggraph}$ is compact, this set has a finite subcover, $\mathcal{V}$.

Let $U \in \mathcal{V} - \{B_{\epsilon}(p_i)\}_{i=1}^N$.
Then by construction, it is an isometric immersion (actually an embedding) on $U \cap \big(\surfmsinggraph\big)$, and locally injective on $U \cap \big(\completion{\surfmsinggraph}\big)$.
By Lemma \ref{lem:integrable_fields}, we can make integral curves of $X_0$ parallel to the $u$-axis and integral curves of $X_1$ parallel to the $v$-axis, potentially after a rotation and/or a reflection (both of which preserve isometries).
Define this map to be $\immersion_1$.

Next, let $V \in \mathcal{V} - \{B_{\epsilon}(p_i)\}_{i=1}^N, V \cap U \neq \emptyset, V \neq U$.
Because of the isometry $\varphi_V:V\rightarrow \mathbb{R}^2$, $\varphi_{V}$ is also an immersion and locally injective as described.
Furthermore, because lengths and angles are preserved, after a rotation, reflection, and/or translation, the map of $\varphi_V$ exactly equals $\immersion_1$ on $U \cap V$.
Then define $\immersion_2$ as $\varphi_V$ if $p\in V, \immersion_1$ if $p \in U$, which is well-defined and has the advertised properties of the final immersion map.

Proceeding inductively, we get an immersion $\tilde{\immersion}$ defined by $V \in \mathcal{V} - \{B_{\epsilon}(p_i)\}_{i=1}^N$.
Finally, each $B_\epsilon(p_i)$ is isometric to a boundary cone singularity.
Under this isometry's image (after a potential translation, rotation, or reflection), $\varphi_{B_\epsilon(p_i)}\big(B_\epsilon(p_i)\big)$ must exactly align with $\tilde{\immersion}$ where intersections with $B_\epsilon(p_i)$ and another $V \in \mathcal{V}$ are non-empty.
These are isometries away from the actual cone point, and locally injective on all boundary points.
Then the desired immersion is the union of the map $\tilde{\immersion}$ with maps $\varphi_{B_\epsilon(p_i)}$ over all of $\completion{\surfmsinggraph}$.

Finally, note that for boundary cones with angle $\curvaturemap < 2\pi$, local injectivity holds. 
Thus, the result can be relaxed to guarantee local injectivity on all members of $\mathcal{N} \cup \singpts$ for which this is the case.
\end{proof}

Beyond being an immersion, the map $\bar{\immersion}$ has additional important qualities that will be used to show an equivalence between it and a \affineflatmetric{}.
The following definitions will be used to make some of these qualities clear.

\begin{definition}[{\Rd Conical Function}]\label{def:conical_function}
A \textbf{conical function} $\phi:\surf\rightarrow \mathbb{R}$ is a (discontinuous) function, together with a discrete set $\singpts \subset \surf,$ such that 
	\[
		\phi(p) =
		\begin{cases}
			2\pi & \text{ if } p \in \interior{\surf} - \singpts\\
			\pi & \text{ if } p \in \partial \surf - \singpts\\
			\phi_p \in \frac{k\pi}{2}, k \in \mathbb{Z}_{>0} & \text{ otherwise. }
		\end{cases}
	\]
Furthermore, it obeys the following Gauss-Bonnet relationship:
	\begin{equation}
		\sum_{q \in \partial \surf} \left(\pi - \phi(q)\right) 
			+ \sum_{p \in \interior{\surf}} \left(2\pi - \phi(p)\right) 
			= 2\pi \chi(\surf)
	\end{equation}
\end{definition}

\begin{definition}[{\Rd \AffineFlatImmersion{}}]\label{def:affine_flat_immersion}
Let $\emptymetric$ be a Riemannian metric defined on all of surface $\surf$.
	Let $\phi$ be a conical function (see Definition \ref{def:conical_function}), together with its discrete set $\singpts,$ and a \cuttinggraph{} $\graph(S,\singpts)$ (see Definition \ref{def:cutting_graph}).
	Write $\distancemetric:\surfmsinggraph\times\surfmsinggraph\rightarrow\mathbb{R}$ as the distance metric on $\surfmsinggraph$ induced by $\emptymetric$, and the induced topology denoted as 
	$\toposurfmsinggraph$.
	A function $\bar{\immersion}:\completion{\toposurfmsinggraph}\rightarrow \mathbb{R}^2$ is defined to be a \textbf{\affineflatimmersion{}} if it satisfies the following:
	\begin{description}
		\item[Q1] $\bar{\immersion}$ is locally injective in a neighborhood of each $p \in \completion{\toposurfmsinggraph} -  \quotientmap^{-1}  (\singpts \cup \mathcal{N})$, and is an orientation-preserving smooth immersion  on $\toposurfmsinggraph$ whose Jacobians are bounded from above and below by constants independent of location.
		\item[Q2] 
		For any $p \in \surf$  with $\{\bar{p}_i\}_{i=1}^n = 
		 \quotientmap^{-1}  (p)$, there is some simply-connected open neighborhood of $U_p \subset \surf$ with $U_{\bar{p}_i}$ being connected components of the completion of $U_p-\graph$, written $\completion{(U_p - \graph)_d}$ such that $\bar{\immersion}(\completion{(U_p-\graph})_d)$ is either
		\begin{itemize}
			\item Isometric to a set of boundary cones, $H\mathcal{C}\big(\bar{\immersion}(U_{\bar{p}_i}),\phi_{\bar{p}_i}\big)$ with $\sum_{i=1}^n \phi_{\bar{p}_i} = \phi(p).$ (These correspond to points that are either in $ \graph$ or $\partial \surf$).
			\item A single connected component which is isometric to a two-dimensional Euclidean ball. (These correspond to points of $\surf$ that are not in $\graph$ or $\partial \surf$.)
		\end{itemize}
		\item[Q3] 
		Let $E \in \mathcal{A}$ be an oriented arc of the \cuttinggraph{} $\graph$. 
		Define $\omega_-,\omega_+:(0,1)\rightarrow \completion{\toposurfmsinggraph}$ as the curves associated with $E$ in the completion.
		Then for any $t \in (0,1), \bar{\immersion}\big(\omega_-(t)\big) = T\Big(\bar{\immersion}\big(\omega_+(t)\big)\Big)$ for $T:\mathbb{R}^2\rightarrow\mathbb{R}^2$ a translation and rotation by $\frac{j\pi}{2}, j \in \mathbb{Z}.$
		\item[Q4] 
		Under $\bar{\immersion}$, each connected component of $\quotientmap^{-1}\big(\partial{\surf} - (\singpts\cup\graph)\big)$ is an open arc with constant Euclidean coordinates in $u$ or $v$.
	\end{description}

\end{definition}

\begin{proposition}\label{prop:partial_metric_to_partial_immersion}
A \affineflatmetric{}, together with a set of (boundary) cone singularities $\singpts$ and a \textbf{cutting graph} $\graph$, induce a \affineflatimmersion{}.
\end{proposition}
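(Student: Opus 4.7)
The plan is to build directly on Proposition~\ref{prop:quad_to_immersion}. Although that result is stated for a full \textquadmeshmetric{}, its proof invokes only properties \textbf{P1}--\textbf{P4}, so it applies verbatim to the \affineflatmetric{} here and supplies a map $\bar{\immersion}:\completion{\surfmsinggraph}\rightarrow\mathbb{R}^2$ whose coordinate functions are integral curves of the local decomposition of $\quadcrossfield$. The remainder of the argument is a verification that this $\bar{\immersion}$ meets the four conditions \textbf{Q1}--\textbf{Q4}.

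First I would dispatch \textbf{Q1} and \textbf{Q2} together, since both encode the local geometry already supplied by Proposition~\ref{prop:quad_to_immersion}. Local injectivity off $\quotientmap^{-1}(\singpts \cup \mathcal{N})$ is inherited from the disk, half-disk, and angle-$<2\pi$ cone cases in that proof. The Jacobian is bounded above and below because $\bar{\immersion}$ is an honest isometry from $\emptyquadmeshmetric$ to the Euclidean metric on $\toposurfmsinggraph$, so its Jacobian determinant is identically $1$; orientation preservation is then arranged by fixing the initial basis $(X_0,X_1)$ to match the ambient orientation in the inductive construction. Condition \textbf{Q2} translates the four-case analysis directly: away from $\graph$ the preimage is a single Euclidean disk, whereas on $\graph$ or $\partial\surf$ the preimage splits into finitely many boundary cones. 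The angle-sum identity $\sum_i \phi_{\bar p_i} = \phi(p)$ follows because cutting along $\graph$ merely partitions the total angle at $p$ while leaving its Gauss--Bonnet defect unchanged.

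The main obstacle is \textbf{Q3}. For an oriented arc $E \in \mathcal{A}$ with lifts $\omega_\pm$, the two branches of $\bar{\immersion}$ on tubular neighborhoods of $\omega_-$ and $\omega_+$ arise from parallel-transporting the chosen component of $\quadcrossfield$ around a small loop in $\surf$ that crosses $E$. Property \textbf{P2} forces this holonomy to lie in $\mathcal{R}$, so the two branches differ by a rotation $R = e^{ij\pi/2}$ for some $j \in \mathbb{Z}$; the accompanying translation is determined by matching any single point. The subtle part is that $j$ must not depend on which point of $E$ we compare, which I would justify by continuity together with the discreteness of $\mathcal{R}$, forcing $j$ to be locally, hence globally, constant on the connected arc $E$. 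This yields the rigid motion $T$ required by \textbf{Q3}.

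Finally, \textbf{Q4} follows from boundary alignment (\textbf{P3}): on each connected component of $\partial\surf - (\singpts \cup \graph)$ one of the four vector fields $X_i$ is everywhere tangent to the boundary. Under the identification $X_0 \leftrightarrow \partial_u$, $X_1 \leftrightarrow \partial_v$ used to build $\bar{\immersion}$, such a component maps to an integral curve of $\partial_u$ or $\partial_v$, i.e.\ a Euclidean line of constant $v$ or constant $u$, completing the verification.
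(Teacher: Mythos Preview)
Your approach matches the paper's: invoke Proposition~\ref{prop:quad_to_immersion} to produce $\bar{\immersion}$, then verify \textbf{Q1}--\textbf{Q4} one by one. Two points deserve comment.

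For \textbf{Q1}, your claim that the Jacobian determinant is identically~$1$ is correct only when measured against the quad-layout metric $\emptyquadmeshmetric$. But Definition~\ref{def:affine_flat_immersion} requires the data to include a Riemannian metric $\emptymetric$ defined on \emph{all} of $\surf$, and $\emptyquadmeshmetric$ is not such a metric (it is singular at $\singpts$). The paper therefore fixes the Euclidean metric inherited from an embedding $\surf\hookrightarrow\mathbb{R}^3$, invokes Lemma~\ref{lem:completion_same_nograph} and Corollary~\ref{corol:completion_same_graph} to identify the completions, and then appeals to compactness of $\completion{\surfmsinggraph}$ to obtain global Lipschitz bounds. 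Your isometry observation is the right starting point, but you still need this compactness-plus-metric-equivalence step to land inside the definition.

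For \textbf{Q3}, your argument and the paper's differ in emphasis but reach the same conclusion. The paper first observes that $\bar{\immersion}$ is an isometry, so $\bar{\immersion}(\omega_-)$ and $\bar{\immersion}(\omega_+)$ have pointwise equal length and geodesic curvature and hence differ by a rigid motion; it then pins the rotation to $\mathcal{R}$ via the cross-field alignment. You go directly to the rotation via holonomy and then integrate. One caution on your phrasing: the relevant loop is not a ``small loop in $\surf$ that crosses $E$'' (such a loop is contractible in $\surf-\singpts$ and has trivial holonomy), but rather the loop in $\surf-\singpts$ obtained by projecting a path in the cut surface from a point of $\omega_-$ to the corresponding point of $\omega_+$. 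With that correction your continuity-plus-discreteness argument for constancy of $j$ along $E$ is sound.
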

\begin{proof}

	Any orientable surface can be embedded in $\mathbb{R}^3$, and as such inherits the Euclidean metric of $\mathbb{R}^3$.
	Take $\phi:\surf\rightarrow\mathbb{R}$ by $\phi(p) = \curvaturemap(p)$ as the conical function and $\graph(\surf,\singpts)$ as a \cuttinggraph{}.
	Take $\bar{\immersion}:\completion{\surfmsinggraph}\rightarrow \mathbb{R}^2$ by Proposition \ref{prop:quad_to_immersion}. 
	Because the topologies on $\surf$ and $\completion{\surfmsinggraph}$ are strongly equivalent via Lemma \ref{lem:completion_same_nograph} and Corollary \ref{corol:completion_same_graph}, all maps are equivalently represented on the topology of the \textquadmeshmetric{} or on the Euclidean metric.
	But $\completion{\surfmsinggraph}$ is compact, so $\bar{\immersion}$  has Lipschitz constants globally bounded from above and below by positive constants. Then Property \textbf{Q1} holds.
	
	Next, because $\bar{\immersion}$ is isometric, cones will be split into sets of boundary cones and boundary cones into boundary cones of smaller angle whose sum add to the original cone.
	Then Property \textbf{Q2} holds.
		
	Now, let $E \subset \mathcal{A}$ be an arc in the \cuttinggraph{} $\graph$, and $\omega_-,\omega_+$ its associated curves in the completion of $\completion{\surfmsinggraph}$. 
	Because $\bar{\immersion}$ is isometric, the length and (geodesic) curvature of $\bar{\immersion}(\omega_-)$ pointwise equals that of $\bar{\immersion}(\omega_+)$.
	Then they are just a rotation and translation of each other.
	But some component of $\quadcrossfield$ was chosen to align to the $u$ axis for each connected component of $\completion{\surfmsinggraph}$ by construction of $\bar{\immersion}$.
	Then the component vector of $\quadcrossfield$ represented in $\bar{\immersion}(\omega_-)$ can only be represented by a vector which has rotated by $\frac{k\pi}{2}$ radians in $\bar{\immersion}(\omega_+)$.
	Hence, Property \textbf{Q3} holds.
	
	Again, by definition of $\bar{\immersion}$ there are unique the vector fields $X_0, X_1$ on $\completion{\surfmsinggraph}$ induced by the \textcrossfield{} $\quadcrossfield$ of the \textquadmeshmetric{}.
	Because the original \textcrossfield{} is boundary-aligned, so each component of $\partial\surf - (\singpts \cup \graph)$ is an arc with constant Euclidean coordinates in either $u$ or $v$ under the immersion mapping $\bar{\immersion}$(Property \textbf{Q4}).
	
\end{proof}

Next, a partial converse is shown.

\begin{proposition}\label{prop:partial_converse}
A \affineflatimmersion{} induces a flat metric on $\surf$ with cone singularities $\singpts$ obeying Properties \textbf{P1}, \textbf{P2}, and \textbf{P3} of Definition \ref{def:qmetric}
\end{proposition}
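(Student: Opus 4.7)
The plan is to pull back the Euclidean inner product by $\bar{\immersion}$ to obtain a flat metric on the cut surface, use Property \textbf{Q3} to show this metric descends across the interior of the cutting graph, and use Property \textbf{Q2} to identify the behavior at nodes of $\graph$ and at points of $\singpts$. The cross field and holonomy statements will then fall out of the same construction.

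First I would set $\emptyquadmeshmetric := \bar{\immersion}^{*}\metric{\cdot}{\cdot}_{\mathbb{R}^2}$ on $\surfmsinggraph$. By \textbf{Q1}, $\bar{\immersion}$ is a smooth orientation-preserving immersion with Jacobian bounded uniformly away from $0$ and $\infty$, so this defines a genuine Riemannian metric there, flat because the pull-back of a flat metric by a local diffeomorphism is flat. For descent across the cutting graph, fix an arc $E \in \mathcal{A}$; \textbf{Q3} gives that the two sides $\omega_-,\omega_+$ of $E$ are related by a Euclidean rigid motion (translation composed with rotation by $\frac{j\pi}{2}$), which is an isometry of $\metric{\cdot}{\cdot}_{\mathbb{R}^2}$, so the two pull-backs agree along $E$ and $\emptyquadmeshmetric$ extends smoothly through arc interiors. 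At a non-singular node $n \in \mathcal{N}-\singpts$, \textbf{Q2} presents a neighborhood as the isometric gluing of finitely many boundary cones whose angles sum to $\phi(n)$; since $n\notin\singpts$, the conical function takes the regular value $2\pi$ (interior) or $\pi$ (boundary), and the glued neighborhood is a flat disk or half-disk, so the metric extends through $n$ with no singularity.

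At each $p \in \singpts$, the same \textbf{Q2} identifies a neighborhood of $p$ with a standard cone $\mathcal{C}(p,\phi(p))$ when $p\in\interior{\surf}$ or a linear boundary cone $H\mathcal{C}(p,\phi(p))$ when $p\in\partial\surf$. This matches Definition \ref{def:flat_w_cone}, so $\emptyquadmeshmetric$ is a flat metric with cone singularities at $\singpts$, and the Gauss-Bonnet identity required by \textbf{P1} is precisely the defining relation of the conical function $\phi$ (Definition \ref{def:conical_function}). For \textbf{P2}, any loop $\gamma \subset \surf-\singpts$ can be homotoped to meet $\graph$ transversely in finitely many arc-interior crossings; by \textbf{Q3}, parallel transport along $\gamma$ is a finite composition of rotations by $\frac{j\pi}{2}$, so $\holonomy(\connection{}) \leq \mathcal{R}$. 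For \textbf{P3}, I would pull back the $4$-symmetric set $\{\pm e_1,\pm e_2\}$ via $\bar{\immersion}$ to get a cross field $\quadcrossfield$ on the cut surface; the $\frac{j\pi}{2}$ rotations of \textbf{Q3} permute this set, so $\quadcrossfield$ descends to a smooth cross field on $\surf-\singpts$, which, being locally the standard Euclidean cross in the immersed picture, is trivially obtained by parallel transport of any unit cross. Boundary alignment then comes for free from \textbf{Q4}.

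I expect the non-singular node case to be the main bookkeeping obstacle, because one has to rule out that the pull-back metric accumulates a hidden cone singularity at a point of $\mathcal{N}-\singpts$ where several arcs meet. This is exactly what the regular-value clause of the conical function in Definition \ref{def:conical_function} provides: it forces the cone angles stipulated by \textbf{Q2} to sum to $2\pi$ or $\pi$ at such a node, so that the metric is honestly flat there rather than merely having rotational holonomy in $\mathcal{R}$. Once this is observed, all other verifications reduce to applying isometries of $\mathbb{R}^2$ on the images produced by $\bar{\immersion}$ and translating those back through the quotient map $\quotientmap$.
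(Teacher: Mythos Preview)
Your proposal is correct and follows essentially the same route as the paper: pull back the Euclidean metric via $\bar{\immersion}$, use \textbf{Q3} to glue across arc interiors, use \textbf{Q2} to handle nodes and singular points (distinguishing $\mathcal{N}-\singpts$ from $\singpts$ via the regular values of the conical function), then establish \textbf{P2} by homotoping a loop transverse to $\graph$ and summing the $\frac{j\pi}{2}$ rotations across crossings, and finally obtain \textbf{P3} by pulling back the standard Euclidean cross and invoking \textbf{Q4} for boundary alignment. The paper additionally spells out, as a preliminary step, that the Jacobian bounds in \textbf{Q1} make the pulled-back distance metric strongly equivalent to the original one, so that the Cauchy completion (and hence the quotient map $\quotientmap$) is unchanged; you use this implicitly when you ``extend smoothly through arc interiors,'' so it would be worth stating explicitly.
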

\begin{proof}

	First, note that $\bar{\immersion}$ is an immersion on $\toposurfmsinggraph$, and as such, $\bar{\immersion}^*(\emptymetric_{\mathbb{R}^2})$ pulls back to a flat Riemannian metric $\emptymetric_{\mathbb{R}^2}$.
	This in turn yields another metric on the domain, denoted $\euclideandistancemetric$ with its associated topology $\toposurfmsinggraphrtwo$.
	The completion of this space is then denoted as $\completion{\toposurfmsinggraphrtwo}$.
	However, by boundedness of the Jacobians of $\immersion$, it is bi-Lipshitz locally with global bounds on the Lipschitz constants.
	Thus the metrics $\euclideandistancemetric$ and $\distancemetric$ are strongly equivalent, so the topology of $\toposurfmsinggraphrtwo$ is equivalent to the topology of $\toposurfmsinggraph$, and the completions are identical.
	With this, we may transfer assumptions defined on the topology of $\toposurfmsinggraph$ to the topology $\toposurfmsinggraphrtwo$.
	
	For an edge $E \subset \mathcal{A}$, let $\omega_-,\omega_+$ boundary segments of $\completion{\toposurfmsinggraphrtwo}$, with $q_- \in \omega_-, q_+ \in \omega_+, \quotientmap_{\mathbb{R}^2}(q_-) = \quotientmap_{\mathbb{R}^2}(q_+)$. 
	Let $Y_0,Y_1$ be unit vectors locally parallel and orthogonal, respectively, to $\bar{\immersion}(\omega_-)$ at $\bar{\immersion}(q_-)$. 
	Then by Property \textbf{Q3}, the corresponding vectors in the tangent space of $\bar{\immersion}(q_+)$ are $R(Y_0),R(Y_1)$, where $R$ is a rotation by $\frac{k\pi}{2}, k \in \mathbb{Z}$. 
	Then as in Lemma \ref{lem:integrable_fields}, these locally correspond to coordinate systems $w^1,w^2$ and $Rw^1,Rw^2$. 
	In these coordinates, the metric tensor near $\bar{\immersion}(q_-)$ takes the form $\emptymetric_{\mathbb{R}^2} = \sum_{i,j=1,2} \delta_{ij} dw^i dw^j$, and near $\bar{\immersion}(q_+)$ it is $\emptymetric_{\mathbb{R}^2} = \sum_{i,j=1,2} \delta_{ij} d(Rw^i) d(Rw^j)$. 
	Under the quotient map these basis vectors for the tangent spaces are equivalent.
	Because the coordinates functions of the metric tensor are identical for both, the metric tensor has a well-defined extension in the quotient topology.
	This identification is locally consistent by Property \textbf{Q3}.
	Then under the quotient map there is a well-defined flat metric on $\surf - (\singpts \cup \mathcal{N})$.

	Now, let  $p \in \surf \cap (\singpts \cup \mathcal{N})$ and $\quotientmap^{-1}(p) = \{p_i\}_{i=1}^n $.
	Then by Property \textbf{Q2}, the quotient of all of their neighborhoods will be a boundary cone of angle $\phi(p) = \sum_{i=1}^n \phi_{p_i}$ if on the boundary of $\surf$ or a cone of the same angle if in the interior of $S$.
	Then the Riemannian metric can be extended to all $p \in \mathcal{N} - \singpts$ because the tangent space on these nodes is well-defined and consistent with the rest of the manifold.
	Then the surface is a flat manifold with cone singularities (see Definition \ref{def:flat_w_cone}), with map $\curvaturemap(p) := \phi(p)$.
	Then by $\textbf{Q1}$, condition \textbf{P1} holds.
	Denote this flat metric as $\emptyflatmetric$.
	
	Now, let $X_p$ be a vector in $T_p\surf$ for $p \in \surfmsinggraph$.
	Choose any loop.
	For a flat manifold, the homotopy class of the loop preserves the holonomy.
	Thus we can choose the loop to be transverse to $\graph$ by a smooth homotopic deformation of the loop.
	Call this deformed, transverse loop $\gamma$.
	Let $\connection{t}(X)$ {\Bd be} the Levi-Cevita connection map by parallel translation on this loop.
	Because intersections of compact sets are compact, there are a finite number of intersection points between the image of $\gamma$ and $\graph$.
	Taking the restriction of $\connection{t}(X)$ onto $\surfmsinggraph$ and then its pushforward (and extension) to $\bar{\immersion}\big(\completion{\surfmsinggraph}\big)$, we find that $\connection{0}(X)$ corresponds exactly to some vector $(u_X,v_X)$ at $\bar{\immersion}(p)$.
	Because $\mathbb{R}^2$ is flat, parallel translation of the vector between points of intersection with $\graph$ will keep the immersed vector having the same representation.
	However, across cuts, the vector will rotate by $\frac{j\pi}{2},j\in \mathbb{Z}$ according to Property \textbf{Q3}. 
	The ultimate rotation in $T_p\surf$ induced by parallel translation along this loop must, then, be the sum of $\frac{j\pi}{2}, j \in \mathbb{Z}$.
	But both the point and loop were arbitrary up to homotopy, so the total holonomy group must be a subset of $\mathcal{R}$, giving Property \textbf{P2}.
	
	Now, let $X_1 = (1,0), X_2 = (0,1)$ be orthogonal unit vectors at $\bar{\immersion}(p) \in \mathbb{R}^2$ for some $p \in \surfmsinggraph$. 
	Pull back both to $\surfmsinggraph$ using the immersion map; view these as a two vectors in $\surf - \singpts$.
	Define a unit cross in $T_p\surf{}$ by the vectors $\{X_1,X_2,-X_1,-X_2\}$, which are orthonormal under $\emptyflatmetric$.
	Because Property \textbf{P2} holds, this cross can be parallel translated over all of $\surf-\singpts$ to yield a global \textcrossfield{}.
	Property \textbf{Q4} ensures that the \textcrossfield{} obeys \textbf{P3}.

\end{proof}

To establish a notion of equivalence between a \affineflatmetric{} and a \affineflatimmersion{}, we need to describe the notion of an integral curve on the immersion.
To accomplish this, we use the following {\Bd notations and} definitions.

{\Bd First, when two curves, $\gamma_1, \gamma_2 : (-\epsilon, \epsilon) \rightarrow \surf$ are transversal at $p = \gamma_1(0) = \gamma_2(0)$, this is denoted as $\gamma_1 \transverse_p \gamma_2$.
}
Let $\bar{\immersion}_u, \bar{\immersion}_v:\completion{\surfmsinggraph}\rightarrow \mathbb{R}$ be functions defined by $\bar{\immersion}(p) = \big(\bar{\immersion}_u(p),\bar{\immersion}_v(p)\big)$ in Euclidean coordinates.

\begin{definition}[{\Rd Coordinate Lines}]\label{def:coord_line}
Let $p \in \completion{\surfmsinggraph} - \singpts.$  Define
\begin{equation}
	\coordline{p}{u} = \left\{ q \in \completion{\surfmsinggraph} - \quotientmap^{-1}(\singpts): \bar{\immersion}_u(q) = \bar{\immersion}_u(p), q \text{ connected to } p \text{ in } \bar{\immersion}^{-1}_u(p)\right\}. 
\end{equation}
Similarly, let
\begin{equation}
	 \coordline{p}{v} = \left\{ q \in \completion{\surfmsinggraph} - \quotientmap^{-1}(\singpts): \bar{\immersion}_v(q) = \bar{\immersion}_v(p), q \text{ connected to } p \text{ in } \bar{\immersion}^{-1}_v(p)\right\}. 
\end{equation}
Then $\coordline{p}{u}, \coordline{p}{v}$ are called the \textbf{coordinate lines} of $p$ under $\bar{\immersion}$.
\end{definition}

\begin{lemma}\label{lem:extension}
For a \affineflatimmersion{}, fix $p \in \completion{\surfmsinggraph} - \quotientmap^{-1}(\singpts)$ with coordinate line $\coordline{p}{u}$ (respectively $\coordline{p}{v}$) non-discrete and intersecting $\partial \big(\completion{\surfmsinggraph{}}\big)$ at $q$. 
If  $\quotientmap(q) \not \in \singpts$ and $\quotientmap(\coordline{p}{u})$ (respectively $\quotientmap(\coordline{p}{v})$) {\Bd $\not \transverse_{\quotientmap(q)} \partial \surf$}, the quotient map of the coordinate line has a extension across the \cuttinggraph{} that is geodesic for $\emptyflatmetric{}$.
\end{lemma}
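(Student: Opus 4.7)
The plan is to identify $\quotientmap(q)$ on $\surf$, invoke Property \textbf{Q3} to transfer the coordinate line across the cut via a rigid Euclidean motion, and verify that the concatenated curve is geodesic because the tangent vectors match under the quotient identification. I focus on $\coordline{p}{u}$; the $v$-case is symmetric.

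First, I would pin down the location of $\quotientmap(q)$. Since $q$ lies in the boundary of $\completion{\surfmsinggraph}$, $\quotientmap(q) \in \graph \cup \partial\surf$. The hypothesis $\quotientmap(q) \notin \singpts$ rules out cone singularities, so $\quotientmap(q)$ is a smooth (non-node) point on an arc. The non-transversality hypothesis, combined with Property \textbf{Q4} (which forces $\partial\surf$ to map to $u$- or $v$-constant lines under $\bar{\immersion}$), prevents the case where the coordinate line exits transversely through $\partial\surf$ alone. Hence $\quotientmap(q)$ lies in the interior of an arc $E \in \mathcal{A}$ of $\graph$, and it makes sense to speak of an extension across the cutting graph.

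Next, I would use \textbf{Q3} to construct the extension. Let $\omega_-, \omega_+$ be the two boundary curves in $\completion{\surfmsinggraph}$ associated with $E$, with $q \in \omega_-$, and let $q' \in \omega_+$ with $\quotientmap(q') = \quotientmap(q)$. By \textbf{Q3}, $\bar{\immersion}(\omega_-(t)) = T(\bar{\immersion}(\omega_+(t)))$ for a rigid motion $T$ consisting of a translation and a rotation by $\frac{j\pi}{2}$. The image $\bar{\immersion}(\coordline{p}{u})$ is a straight $u$-constant segment in $\mathbb{R}^2$ terminating at $\bar{\immersion}(q)$. Extending it as a straight line past $\bar{\immersion}(q)$ and pulling back through $T^{-1}$, the continuation corresponds to a straight line emanating from $\bar{\immersion}(q')$ whose direction is the $v$-axis rotated by $-\frac{j\pi}{2}$. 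Since rotations by multiples of $\frac{\pi}{2}$ permute the axis directions $\{\pm\hat{u}, \pm\hat{v}\}$, this extension lies along a coordinate line through $q'$, and hence admits a preimage in the $\omega_+$-side component of $\completion{\surfmsinggraph}$, yielding the desired extension of $\quotientmap(\coordline{p}{u})$ past $\quotientmap(q)$.

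Finally, I would verify geodesic-ness. The immersion $\bar{\immersion}$ is a local isometry on $\toposurfmsinggraph$ by \textbf{Q1}, so each piece on either side of $\quotientmap(q)$ pulls back to a geodesic of $\emptyflatmetric{}$. The tangent vectors at $q$ and $q'$, being related by $T_*$, are identified under the quotient identification constructed in the proof of Proposition \ref{prop:partial_converse}, ensuring no kink at $\quotientmap(q)$. Since $\quotientmap(q) \notin \singpts$, $\emptyflatmetric{}$ is Euclidean in a neighborhood, and a $C^1$ curve consisting of two locally geodesic pieces meeting with matching tangents at a smooth point is itself a geodesic. The main obstacle is the bookkeeping in the middle step: one must track how $T$ rotates the tangent to $\coordline{p}{u}$ into a direction along $\pm\hat{u}$ or $\pm\hat{v}$ so that the extension really is a coordinate line on the $\omega_+$ side rather than merely a straight line in the target. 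Everything else is a direct consequence of the local isometric structure.
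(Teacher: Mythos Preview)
Your proposal is correct but follows a more explicit, immersion-side route than the paper. The paper's proof is shorter because it leverages Proposition~\ref{prop:partial_converse}, which has already done the work of gluing the flat metric across the cutting graph. Since $\quotientmap(q)\notin\singpts$, the flat metric $\emptyflatmetric$ is smooth in a neighborhood $U_{\quotientmap(q)}\subset\surf$, so there is an isometry $\varphi:U_{\quotientmap(q)}\to\mathbb{R}^2$; under $\varphi$, the image of $\quotientmap(\coordline{p}{u})$ is a straight segment, which extends uniquely to a longer straight segment, and pulling back by $\varphi^{-1}$ gives the geodesic extension. No explicit tracking of the rigid motion $T$ from \textbf{Q3}, and no separate tangent-matching verification, is needed: both are absorbed into the existence of the single flat chart on $\surf$ established earlier.

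Your approach, by contrast, reproves part of that gluing at the specific point $\quotientmap(q)$: you use \textbf{Q3} to transport the coordinate line across the cut and then argue $C^1$-matching of the two geodesic pieces. This is perfectly valid and arguably more self-contained, but it duplicates work already packaged in Proposition~\ref{prop:partial_converse}. One small caveat in your first paragraph: the hypothesis only excludes $\quotientmap(q)\in\singpts$, not $\quotientmap(q)\in\mathcal{N}$, so $\quotientmap(q)$ could be a non-singular node of $\graph$ (a splitting junction or a point where $\graph$ is not smoothly embedded) rather than an interior point of an arc; at such a node, \textbf{Q3} as stated for arcs does not directly apply. The paper's intrinsic flat-chart argument handles this case uniformly without needing the arc structure.
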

\begin{proof}
First, under the quotient topology, the coordinate line is a  curve in $\surf$.
Using the flat metric of Proposition \ref{prop:partial_converse}, the coordinate line is a geodesic (being a line in the immersion).
Because the metric is flat, there is a neighborhood of $\quotientmap(q), U_{\quotientmap(q)} \subset \surf$ with an isometry $\varphi: U_{\quotientmap(q)} \rightarrow \mathbb{R}^2$.
	Then $\varphi\big(\quotientmap(\coordline{p}{})\big)$ is a line segment, which can be extended uniquely to a line segment dividing $U_{\quotientmap(q)}$.
	Call this $\gamma$.
	The set $\quotientmap(\coordline{p}{}) \cup \varphi^{-1}(\gamma)$ is such an extension.
\end{proof}
Such an extension will be called a $\textbf{geodesic extension}$ of the coordinate line at $q$.

Now, for $A \in \completion{\surfmsinggraph}$, define the sets 
\[
	u_+(A) = \left\{q \in \bar{A}: \bar{\immersion}_u^{-1}\big(\max_{p \in \bar{A}} \bar{\immersion}_u(p)\big)\right\},
\]
and
\[
	u_-(A) = \left\{q \in \bar{A}: \bar{\immersion}_u^{-1}\big(\min_{p \in \bar{A}} \bar{\immersion}_u(p)\big)\right\}.
\]
Let $v_+(A), v_-(A)$ be defined analogously.

\begin{definition}[{\Rd Quotient Curve}]
Fix $p \in \completion{\surfmsinggraph} - \quotientmap^{-1}(\singpts)$ with $\coordline{p}{u} =: \coordline{p_0}{} \neq \{p_0\}$.
Take $q_{0+} \in v_+(\coordline{p_0}{})$.
Inductively define $\coordline{p_i}{}$ for $i > 0$ as follows:
\begin{itemize}
	\item If $\quotientmap(q_{i-1}) \in \singpts$, terminate the induction.
	\item {\Bd If $\quotientmap(q_{i-1}) \in \partial \surf, \quotientmap(\coordline{p_{i-1}}{}) \transverse_{\quotientmap(q_{i-1})} \partial \surf,$ terminate the induction.}
	\item Define $\gamma_1 := \quotientmap(\coordline{p_{i-1}}{})$ and $\gamma_2$ as its geodesic extension at $q_{i-1}$ (see Lemma \ref{lem:extension}) in a neighborhood $U_{\quotientmap(q_{i-1})}$ isometric to a ball in $\mathbb{R}^2$ via $\varphi$.
	Then $\omega := \overline{\varphi^{-1}(\gamma_2 - \gamma_1)}$ is a closed arc in $\surf$.
	Take a non-discrete connected component $A$ of $\quotientmap^{-1}(\omega)$ with $A \ni \tilde{q} \in \quotientmap^{-1}(q_{i-1}).$
	Define $\coordline{p_i}{}$ as the unique coordinate curve containing $A$.
	Define $q_{i}$ by the following
	\[
		q_i \in \begin{cases}
			v_+(\coordline{p_i}{}) & \text{ if } \bar{\immersion}_u(\coordline{p_i}{}) \text{ is discrete}, \tilde{q} \in v_-(A)\\
			v_-(\coordline{p_i}{}) & \text{ if } \bar{\immersion}_u(\coordline{p_i}{}) \text{ is discrete}, \tilde{q} \in v_+(A)\\
			u_+(\coordline{p_i}{}) & \text{ if } \bar{\immersion}_v(\coordline{p_i}{}) \text{ is discrete}, \tilde{q} \in u_-(A)\\
			u_-(\coordline{p_i}{}) & \text{ otherwise.}
		\end{cases}
	\]
\end{itemize}
Define $\coordline{p_{i}}{}$ for $i < 0$ analogously, with $q_{0-} \in v_-(\coordline{p_0}{})$.
Then the \textbf{quotient curve} of $p$ in the $u$ direction is defined to be the set
\begin{equation}
	\quotientcurve{p}{u} = \barcoordline{p_0}{u} \cup \Big(\bigcup_{i=1}^{N+} \barcoordline{p_i}{} \Big)\cup\bigcup_{i=1}^{N_-} \Big(\barcoordline{p_{-i}}{}\Big)
\end{equation}
where $0 \leq N\pm \leq \infty$. 

Similarly, the \textbf{quotient curve} of $p$ in the $v$ direction is defined to be the set
\begin{equation}
	\quotientcurve{p}{u} = \barcoordline{p_0}{v} \cup \Big(\bigcup_{i=1}^{N+} \barcoordline{p_i}{} \Big)\cup\bigcup_{i=1}^{N_-} \Big(\barcoordline{p_{-i}}{}\Big)
\end{equation}
where, as before, $\barcoordline{p_0}{v}$ is not discrete.
\end{definition}
A \textbf{finite} quotient curve is one in which $N\pm < \infty$.
A \textbf{periodic} quotient curve is one in which $\coordline{p_i}{} = \coordline{p_j}{}$ for some $i,j \in \mathbb{Z}$ both contained in the quotient curve.
{\Bd A finite, self-intersecting quotient curve comprised of two coordinate lines on a partial quad layout immersion is displayed on the right side of Figure \ref{fig:finite_quotient_curve}.
The integral curves of the pullback of the coordinate one-forms onto the original spatial geometry with cuts are depicted on the left of the same figure.}

\begin{figure}
		\centering
	\includegraphics[trim=0cm 0cm 0cm 0cm, clip, width=1.0\textwidth]{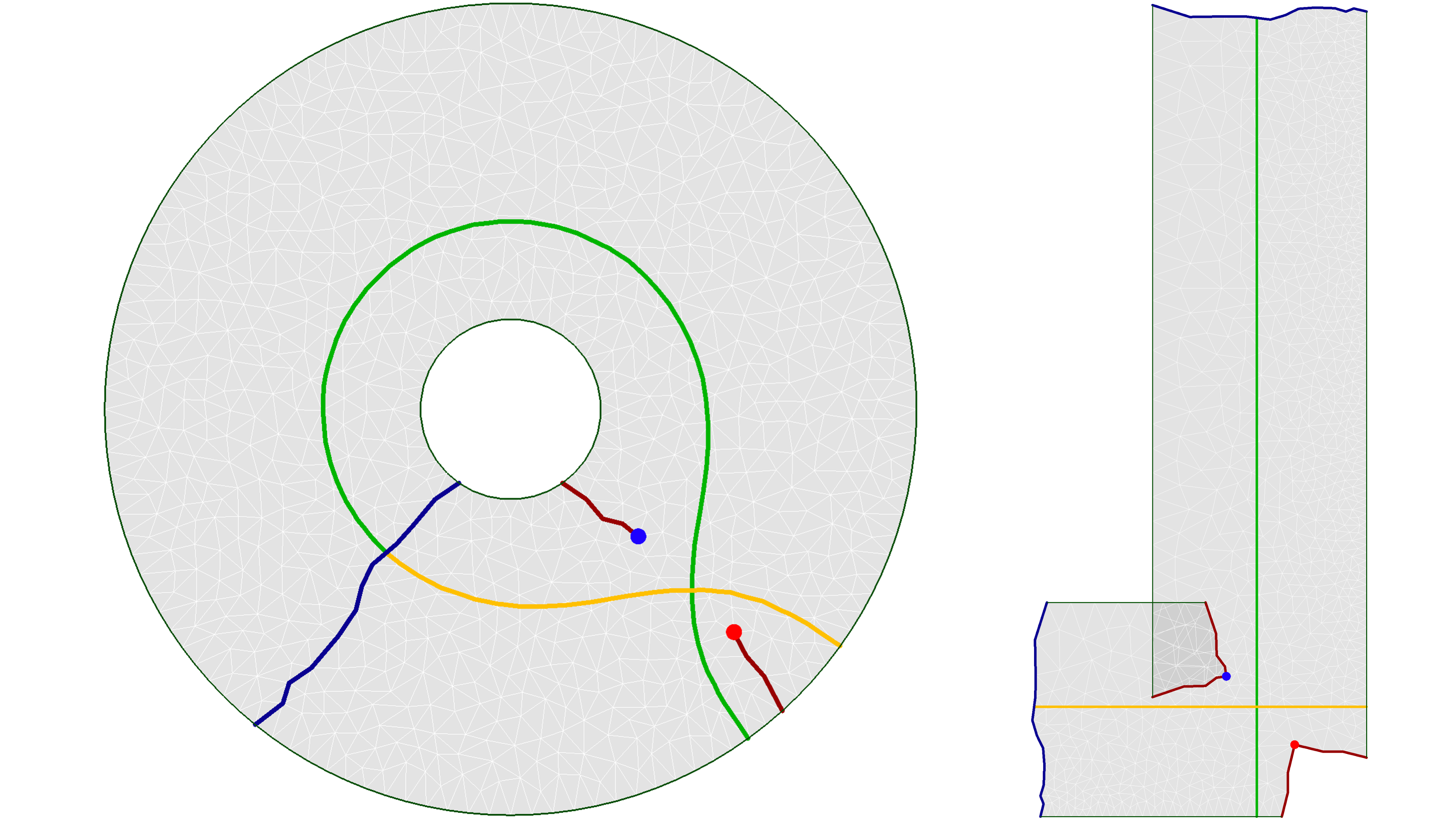}
	\caption{{\Bd A cut annulus with two cone singularities (on the left) is mapped via a (partial) quad layout immersion to the domain on the right.
	Two coordinate lines, shown in green and gold, comprise a single finite quotient curve in this immersion.
	The curves in green and gold on the left geometry are the pullback of the coordinate one-forms tracing the green and gold (respectively) coordinate lines in the immersed geometry.
	Notice that the quotient curve may intersect itself, but can only do so with orthogonal coordinate lines in the immersed geometry. 
	}}\label{fig:finite_quotient_curve}
\end{figure}

\begin{lemma}\label{lem:quotient_curve}
Given a \affineflatimmersion{} $\bar{\immersion}$ on $\surf$, quotient curves of $\bar{\immersion}$ are geodesic integral curves of the induced \textcrossfield{} on $\surf$ from $\bar{\immersion}$.

Conversely, {\Bd let $\omega$ be a \textcrossfield{} integral curve of a \affineflatmetric{} on $\surf$. 
Then the preimage $\quotientmap^{-1}(\omega)$ contains a quotient curve on its induced \affineflatimmersion{}.}
\end{lemma}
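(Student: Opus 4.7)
The plan is to prove each direction by a local-then-global argument that handles, in order, (i) pieces away from the cutting graph, where $\bar{\immersion}$ is a local isometry by Property \textbf{Q1}, (ii) transitions across arcs of $\graph$, governed by Property \textbf{Q3}, and (iii) termination at singularities or transverse boundary intersections.

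For the forward direction, I would first show that each constituent coordinate line $\coordline{p_i}{u}$ of a quotient curve $\quotientcurve{p}{u}$ pushes forward under $\quotientmap$ to a geodesic segment in $(\surf - \singpts, \emptyquadmeshmetric)$ that is an integral curve of a single component of $\quadcrossfield$. This is immediate from $\bar{\immersion}$ being a local isometry together with the fact that the cross field constructed in Proposition \ref{prop:partial_converse} is obtained by parallel transport of the unit cross $\{(\pm 1,0),(0,\pm 1)\}$ pulled back through $\bar{\immersion}$. At a junction point $q_{i-1}$ with $\quotientmap(q_{i-1}) \in \graph - (\singpts \cup \mathcal{N})$, Lemma \ref{lem:extension} produces a geodesic extension; Property \textbf{Q3} guarantees that the two sides of the arc differ by a translation and a rotation by $\tfrac{j\pi}{2}$, so the preimage of the extension is again a coordinate line, and its tangent is another component of the cross (by invariance of the cross under $\tfrac{\pi}{2}$-rotations). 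Concatenating these coordinate lines yields a $\quadcrossfield$-integral geodesic of $\emptyquadmeshmetric$ that terminates exactly at points of $\singpts$ or transverse boundary hits, matching the termination cases in the definition of the quotient curve.

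For the converse direction, given a $\quadcrossfield$-integral curve $\omega$ on $\surf$, the plan is to lift $\omega$ through $\quotientmap$ piecewise and inductively match the lift with the quotient curve construction. Starting from a regular point $\omega(t_0) \notin \singpts \cup \graph$, I would pick any $\bar{p}_0 \in \quotientmap^{-1}(\omega(t_0))$ and choose orientation so that $\bar{\immersion}_* \omega'(t_0)$ aligns with the $u$-axis. The local isometry of $\bar{\immersion}$ then forces the lift of $\omega$ near $\bar{p}_0$ to coincide with $\coordline{\bar{p}_0}{u}$, and this coincidence extends over the full coordinate line until $\omega$ meets $\graph \cup \partial\surf$ or $\singpts$. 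Whenever $\omega$ crosses an interior arc of $\graph$, Property \textbf{Q3} ensures that the lift on the other side is a coordinate line whose direction is rotated by $\tfrac{j\pi}{2}$, which is precisely the next line $\coordline{p_i}{}$ selected by the $u_\pm, v_\pm$ rules in the definition of the quotient curve. Inducting in both directions recovers a quotient curve contained in $\quotientmap^{-1}(\omega)$.

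The main obstacle is the bookkeeping at junctions with $\graph$: I must verify that the side ($u_\pm, v_\pm$) and direction selected by the inductive step of the quotient curve is the one that continues $\omega$, rather than reversing it. This amounts to tracking how the rotation in \textbf{Q3} permutes the four components of $\quadcrossfield$ and matching that permutation with the connected component of $\quotientmap^{-1}$ of the local geodesic extension from Lemma \ref{lem:extension} that contains the lift of $\omega$ past the crossing. Once this matching is verified, equivalence of the termination cases in the two frameworks follows from enumeration: \textbf{Q2} handles exits into $\singpts$, and \textbf{Q4} handles transverse boundary exits.
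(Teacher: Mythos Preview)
Your proposal is correct and follows the same underlying approach as the paper: both directions rest on the definition of the quotient curve together with the construction of $\bar{\immersion}$ in Proposition~\ref{prop:quad_to_immersion} (and its partial converse in Proposition~\ref{prop:partial_converse}). The paper's own proof is a one-line appeal to these ingredients, whereas you have carefully unpacked the local-isometry argument away from $\graph$, the $\tfrac{\pi}{2}$-rotation bookkeeping across arcs via \textbf{Q3}, and the termination cases; this is exactly the content the paper leaves implicit.
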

\begin{proof}
These follow by definition of the quotient curve and by construction of $\bar{\immersion}$ in Proposition \ref{prop:quad_to_immersion}.
\end{proof}

{\Bd Combining the results of Lemma \ref{lem:quotient_curve} with Propositions \ref{prop:partial_metric_to_partial_immersion} and \ref{prop:partial_converse}, the following equivalence holds.}

{\Bd
\begin{theorem}\label{thrm:equivalence_of_partials}
	A \affineflatmetric{} on $\surf$, together with a \cuttinggraph{} induces a \affineflatimmersion{}. 
	Similarly, an \affineflatimmersion{} induces a unique \affineflatmetric{}.
\end{theorem}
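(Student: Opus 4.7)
The plan is to bundle the theorem out of the three results stated immediately above it, so no genuinely new machinery should be needed. The forward direction (metric gives immersion) is essentially a citation: given a \affineflatmetric{} and a \cuttinggraph{} $\graph$, Proposition \ref{prop:partial_metric_to_partial_immersion} already produces $\bar{\immersion}:\completion{\surfmsinggraph}\rightarrow\mathbb{R}^2$ and verifies all of \textbf{Q1}--\textbf{Q4}, so I would simply invoke it.

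For the reverse direction, the main tool is Proposition \ref{prop:partial_converse}, which uses \textbf{Q1}--\textbf{Q3} to construct a flat Riemannian metric on $\surf$ with cone singularities $\singpts$ as the pushforward of the pullback $\bar{\immersion}^*(\emptymetric_{\mathbb{R}^2})$ through the quotient map $\quotientmap$. The pushforward is well-defined precisely because \textbf{Q3} guarantees that across any arc of $\graph$ the immersion differs by a rotation by $\frac{k\pi}{2}$ and a translation, both of which preserve $\emptymetric_{\mathbb{R}^2}$; \textbf{Q2} supplies the extension across $\mathcal{N}$ and $\singpts$ as (boundary) cones. That proposition already yields \textbf{P1}, \textbf{P2}, and \textbf{P3}, leaving only \textbf{P4} to recover the full partial-metric structure.

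To obtain \textbf{P4} I would call on Lemma \ref{lem:quotient_curve}: the integral curves of the cross field induced on $\surf$ are, by that lemma, the images under $\quotientmap$ of the quotient curves $\quotientcurve{p}{u}, \quotientcurve{p}{v}$ of $\bar{\immersion}$. Each quotient curve is by construction a concatenation of horizontal or vertical Euclidean line segments in $\mathbb{R}^2$, and since the flat metric on $\surfmsinggraph$ is the pullback of $\emptymetric_{\mathbb{R}^2}$ these line segments descend to geodesics of the induced metric; the geodesic property extends through interior points of arcs of $\graph$ by the isometric gluing across the cut, and through nodes by the neighborhood isometries from \textbf{Q2}. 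This gives \textbf{P4} and thus a \affineflatmetric{}. Uniqueness is immediate once one observes that any Riemannian metric making $\bar{\immersion}$ an isometric immersion on $\surf - (\singpts\cup\graph)$ must equal the pullback of $\emptymetric_{\mathbb{R}^2}$ there, and the extension across $\graph$ and $\mathcal{N}$ is forced by continuity.

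The main obstacle I expect is bookkeeping around the $\frac{k\pi}{2}$ ambiguity of \textbf{Q3} at the cutting graph: one must verify that when a quotient curve enters the cut transversely from one side, the chosen geodesic extension on the other side is indeed a coordinate line of $\bar{\immersion}$ (not merely a geodesic of the pulled-back metric), so that Lemma \ref{lem:quotient_curve} applies. This reduces to checking that the $\frac{k\pi}{2}$-rotation in \textbf{Q3} permutes horizontal and vertical coordinate directions among themselves, which is precisely the content of the cross field symmetry in \textbf{P3}; this compatibility is what ultimately makes both the metric and the cross field structure well-defined on $\surf$, and it is where I would concentrate the argument.
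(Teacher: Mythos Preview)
Your proposal is correct and follows precisely the paper's approach: the paper treats the theorem as an immediate consequence of combining Proposition~\ref{prop:partial_metric_to_partial_immersion} (forward direction), Proposition~\ref{prop:partial_converse} (\textbf{P1}--\textbf{P3} in the reverse direction), and Lemma~\ref{lem:quotient_curve} (supplying \textbf{P4}). Your added uniqueness argument and the careful bookkeeping around the $\frac{k\pi}{2}$ transitions give more detail than the paper does, but the underlying structure is identical.
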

}

Finally, we are ready to define a \quadmeshimmersion{,} the object which induces a \textquadmeshmetric{} on $\surf$.
\begin{definition}[{\Rd \QuadMeshImmersion{}}]\label{def:quad_mesh_immersion}
	A \textbf{\quadmeshimmersion{}} $\bar{\immersion}$ is a \affineflatimmersion{} with the following additional property:
	\begin{description}
		\item[Q5] 
			All quotient curves emanating from singularities are finite.
			For a torus or annulus without singularities, two transverse quotient curves are finite or periodic.
	\end{description}
\end{definition}

\begin{theorem}\label{thrm:equivalence}
	A \quadmeshimmersion{} induces a \textquadmeshmetric{}. 
	Conversely, a \textquadmeshmetric{} on $\surf$ with cone singularities $\singpts$, accompanied by a \cuttinggraph{} $\graph(\surf,\singpts)$, induces a \quadmeshimmersion{}.
\end{theorem}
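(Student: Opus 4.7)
The plan is to build directly on the equivalence between \affineflatmetric{s} and \affineflatimmersion{s} established in Theorem \ref{thrm:equivalence_of_partials}, together with the identification of cross-field integral curves and quotient curves given by Lemma \ref{lem:quotient_curve}. Under this correspondence, Properties \textbf{P1}--\textbf{P4} already match Properties \textbf{Q1}--\textbf{Q4}, so the only remaining task is to verify that Property \textbf{P5} of the \textquadmeshmetric{} holds if and only if Property \textbf{Q5} of the \quadmeshimmersion{} holds.

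For the direction from a \textquadmeshmetric{} to a \quadmeshimmersion{}, any quotient curve possessing a singularity as an endpoint corresponds (via Lemma \ref{lem:quotient_curve}) to an integral curve reaching that singularity. By \textbf{P5}, this integral curve is periodic or of finite length; periodicity is, however, incompatible with the inductive construction of the quotient curve terminating at the singularity, so the curve must in fact be finite. This yields the first clause of \textbf{Q5}. When $\surf$ is a torus or annulus without singularities, \textbf{P5} applied to two transverse integral curves produces the second clause directly.

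For the converse direction, Property \textbf{Q5} together with Lemma \ref{lem:quotient_curve} first gives that every separatrix is of finite length. For integral curves avoiding all singularities I would split into two cases. If $\singpts = \emptyset$, then by the Gauss--Bonnet constraint in \textbf{P1}, $\chi(\surf) = 0$, so $\surf$ is a torus or annulus and the second clause of \textbf{Q5} yields \textbf{P5} immediately. Otherwise, I would adjoin the finite separatrix graph $\Gamma_\mathrm{sep}$ to $\graph$, decomposing $\surf$ into finitely many simply connected pieces. By Properties \textbf{Q3} and \textbf{Q4}, the immersion $\bar{\immersion}$ sends each piece isometrically onto a bounded planar polygon whose boundary is a union of horizontal and vertical segments; within each piece, coordinate-direction integral curves are line segments terminating on this boundary. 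Upon reassembling $\surf$ across the $\Gamma_\mathrm{sep}$ identifications, these segments concatenate into either finite sequences (terminating at a singularity or $\partial\surf$) or periodic cycles, giving \textbf{P5}.

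The principal obstacle is justifying this last concatenation step, namely ruling out that the gluing process produces an infinite non-periodic orbit (a minimal component of the foliation). The decisive ingredient is the rigidity imposed by the rotational holonomy constraint in $\mathcal{R}$ (Properties \textbf{P2}/\textbf{Q3}) together with the finiteness of the separatrix graph: the Poincar\'e first-return map on any transversal arc is then a piecewise isometry composed of translations and $\frac{\pi}{2}$-rotations, whose orbits on a compact transversal must be periodic, precluding the irrational dynamics needed for a dense orbit.
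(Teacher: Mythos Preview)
Your argument for the direction $(\Leftarrow)$ is sound and matches what the paper does (the paper is terser, simply invoking Theorem~\ref{thrm:equivalence_of_partials}, but your unpacking of why \textbf{P5} forces the quotient curves in \textbf{Q5} to be finite is the intended content).

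For $(\Rightarrow)$, you correctly isolate the obstacle---ruling out an infinite non-periodic orbit of the leaf dynamics---but the resolution you propose does not work. The claim that ``a piecewise isometry composed of translations and $\frac{\pi}{2}$-rotations'' on a compact transversal must have only periodic orbits is false: interval exchange transformations are exactly piecewise isometries built from translations (no rotations at all), and generically they are minimal, with every orbit dense. The rotational constraint $\mathrm{Hol}(\tau)\leq \mathcal{R}$ restricts only the \emph{angular} part of the holonomy; it places no rationality condition on the translational part, and it is precisely irrational translations that produce the dense-orbit dynamics you are trying to exclude. So the holonomy argument alone cannot close this gap.

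What the paper does instead is to argue that the cells cut out by the separatrix graph are not merely rectilinear polygons but actual \emph{rectangles}: since the separatrices are integral curves of the boundary-aligned cross field and no new cone singularities may be introduced in the interior of a cell, each cell must have exactly four right-angle corners. Once this is established, the first-return map along any closed cycle of adjacent rectangles is the \emph{identity} (opposite sides of a rectangle have equal length, so a horizontal leaf enters and exits each rectangle at the same height), which immediately gives periodicity. The paper's main proof shortcuts even this by appealing to the equivalence of \cite{Chen:2019} between a quadrilateral layout and a \textquadmeshmetric{}; the direct rectangle argument appears in the appendix. Either way, the missing ingredient in your outline is the step that forces the cells to be rectangles rather than arbitrary rectilinear polygons---without it the return map is a genuine interval exchange and your periodicity claim fails.
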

\begin{proof}
	($\Leftarrow$)
	{\Bd This result holds from Theorem \ref{thrm:equivalence_of_partials}.}
	
	($\Rightarrow$)
	{\Bd
	Properties \textbf{P1}, \textbf{P2},  \textbf{P3}, and \textbf{P4} hold by Theorem \ref{thrm:equivalence_of_partials}.}
	Then {\Bd it is only necessary to show} that the assumptions on singular quotient curves guarantee that all integral curves of the induced \affineflatmetric{} are finite.
	
	By property \textbf{Q3}, boundaries are coordinate curves, which combine to form quotient curves.
	Note that quotient curves of boundaries must have finite length.
	If a singularity lies on the boundary, this holds by property \textbf{Q5}.
	Otherwise, the quotient curve traces the entire boundary component and must be periodic.
	
	A finite length quotient curve containing a singularity must terminate at either two (possibly identical) singularities or a singularity and a boundary.
	Combined with finiteness or periodicity of boundary curves, these curves will segment the surface into a cellular structure.
	(For a torus with no singularities or an annulus without singularities, property \textbf{Q5} guarantees finite-length quotient curves, which also split into cellular structure).
	Each of these cells must necessarily be a quadrilateral to ensure that the cellular structure aligns with integral curves of the induced \textcrossfield{} without {\Bd addition of} cone singularities.
	
	Pick some quadrilateral cell on the surface and some $p$ in its interior.
	Construct an isometric immersion $\tilde{\immersion}$ of the quadrilateral cell containing $p$ under the newly-defined metric on $\surf, \emptyquadmeshmetric$, as was done in Proposition \ref{prop:quad_to_immersion}.
	This is a parameterization of the cell.
	But the cell was chosen arbitrarily, so the skeletal structure of the separatrices, combined with the new metric, induces a quadrilateral layout on $\surf$.
	Then by the equivalence of a quadrilateral layout to a \textquadmeshmetric{}, $\emptyquadmeshmetric{}$ must be a \textquadmeshmetric{}, which has periodic or finite-length integral curves.
\end{proof}

\begin{remark}
A \quadmeshimmersion{} is a generalization of both a translation surface and  a half-translation surface, which are used in Riemannian surface theory and are equivalent to holomorphic one forms and holomorphic quadratic forms, respectively. A generalized translation surface was introduced in \cite{Chen:2019} to motivate \textquadmeshmetric{s}, but criteria were not given which would make them a \quadmeshimmersion{}. Furthermore, the immersed space of a surface under the image of a \quadmeshimmersion{} need not be a polygon.
\end{remark}

{\Bd Thus far, the \quadmeshimmersion{} has relied on principals of smooth topology to define a bijective, well-defined immersion into the plane.
It is worth noting, however, that the immersion mapping} on the completion just needs to be locally bi-Lipschitz with globally bounded Lipschitz constants for all theory to hold.
As a result, the theory for extracting a \quadlayout{} from a \quadmeshimmersion{} explicitly generalizes to compact Lipschitz (and thus two-dimensional piecewise-linear \cite{Rosenberg:1987}) surfaces.
As such, the following comparison with surface triangulations and integer grid maps can be made.

\begin{proposition}\label{prop:integer_grid_map}
	Assume $\surf$ is a piecewise-linear triangulation. A boundary-aligned integer grid map \cite{Bommes:2009} that is also an immersion is a \quadmeshimmersion{}. However, there exist \quadmeshimmersion{s} which are not integer grid maps.
\end{proposition}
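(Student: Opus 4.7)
The plan is to verify, in the forward direction, that a boundary-aligned integer grid map which is an immersion satisfies the five properties \textbf{Q1}--\textbf{Q5} of a \quadmeshimmersion{} (Definitions \ref{def:affine_flat_immersion} and \ref{def:quad_mesh_immersion}), and then to exhibit a concrete \quadmeshimmersion{} that is not an integer grid map. Recall that a boundary-aligned integer grid map in the sense of \cite{Bommes:2009} is a piecewise-linear map of $\surf$ cut open along a \cuttinggraph{} $\graph$ into $\mathbb{R}^2$ whose transition maps across arcs of $\graph$ are compositions of rotations by $\frac{k\pi}{2}$ with integer translations, whose cone singularities map to integer lattice points, and whose boundary components land on integer grid lines.

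For \textbf{Q1}--\textbf{Q4} I would essentially unpack the definition. \textbf{Q1} follows from the immersion hypothesis together with the finiteness of the triangulation: the Jacobians of a piecewise-linear immersion are drawn from a finite list of per-triangle affine maps and are therefore uniformly bounded above and away from zero. \textbf{Q2} follows because the sum of the image angles of the triangles incident to a singularity is a positive multiple of $\frac{\pi}{2}$, which yields both the conical function value $\phi(p)$ and the required decomposition of $\quotientmap^{-1}(p)$ into boundary cones whose angles sum to $\phi(p)$. \textbf{Q3} is immediate, since integer grid map transitions are already of the form required, with the integer constraint on the translation being strictly stronger than \textbf{Q3} requires. \textbf{Q4} is exactly the boundary-alignment hypothesis.

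The main obstacle will be \textbf{Q5}, namely that every quotient curve emanating from a singularity is finite. My approach is to pull back the standard integer lattice $\mathbb{Z}^2 \subset \mathbb{R}^2$ via the integer grid map. Because transition maps across $\graph$ preserve $\mathbb{Z}^2$, this pullback is a well-defined piecewise-linear 1-complex $\Lambda$ on $\surf$; compactness of $\surf$ together with the piecewise-linear structure ensures that $\Lambda$ has only finitely many 0-cells and 1-cells, and the integer lattice constraint at singularities places every element of $\singpts$ in the 0-skeleton of $\Lambda$. Any singular coordinate line lies in $\Lambda$, and each step of the quotient curve construction (geodesic extension through a regular 0-cell, or transition across an arc of $\graph$) keeps the curve in $\Lambda$. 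Hence a singular quotient curve is a ``straight walk'' through a finite 1-complex. A combinatorial argument then finishes the job: if the walk does not reach a singularity it must eventually revisit an oriented edge, forcing it to close up into a periodic loop that does not pass through its starting singularity, contradicting the fact that the walk began there. The torus and annulus clause of \textbf{Q5} is handled by the same finiteness of $\Lambda$, which now forces two transverse coordinate lines to be periodic.

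For the strict containment, I would take a minimal counterexample: the flat torus obtained by identifying opposite sides of $[0,a]\times[0,b]$ with $a, b > 0$ irrational, empty singular set, a \cuttinggraph{} consisting of two generators cutting the torus back to a rectangle, and $\bar{\immersion}$ the tautological identification into $\mathbb{R}^2$. Properties \textbf{Q1}--\textbf{Q5} hold trivially, since coordinate lines are periodic in both directions; but the transitions are translations by $(a,0)$ and $(0,b)$ which cannot be rendered integer-valued by any global rigid motion of $\mathbb{R}^2$, so this \quadmeshimmersion{} is not an integer grid map.
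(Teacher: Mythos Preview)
Your proposal is correct and follows essentially the same approach as the paper: verifying \textbf{Q1}--\textbf{Q4} directly from the integer grid map axioms, establishing \textbf{Q5} via the observation that singular quotient curves are confined to the (finite) pullback of the integer lattice, and exhibiting an irrational-side rectangle or torus as the counterexample. Your \textbf{Q5} argument is somewhat more explicit than the paper's---pulling $\mathbb{Z}^2$ back to a finite $1$-complex on $\surf$ and invoking backward determinism to exclude the eventually-periodic case---whereas the paper argues more tersely in the image and concludes ``finite length or periodic'' without separately ruling out periodicity; your version closes that small gap.
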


\begin{proof}
	An integer grid map cuts a surface into disk topology and cuts to all singularities, after which it seeks a planar immersion (written embedding in \cite{Bommes:2009, Ebke:2013}) of the surface.
	By computation of a viable \textcrossfield{}, a discrete Poincare-Hopf formula is met \cite{Knoppel:2013} which is analogous to the discrete Gauss-Bonnet formula of \textbf{Q1}.
	By boundary alignment, \textbf{Q3} is met.
	Furthermore, by rotational constraints on cut edges, \textbf{Q4} is met.
	These two, in conjunction with the singularity index of nodes, implies \textbf{Q2}.
	
	Then the only other object of interest is \textbf{Q5}.
	But singular vertices are constrained to lie in $\mathbb{Z}^2$. 
	Furthermore, opposite sides of an arc $E \in \mathcal{A}$ of the surface are constrained to be rotations and integer translations of each other.
	As such, a quotient curve with coordinate line lying on an integer value continues to another coordinate line lying on an integer value.
	Then quotient curves of singularities must lie entirely on integers.
	But $\completion{\surfmsinggraph}$ is compact, so there are only a finite number of members of $\mathbb{Z}^2$ in $\bar{\immersion}(\surfmsinggraph)$.
	Thus quotient curves must be finite length or periodic.
	
	Next, as a simple converse, take the Euclidean axis-aligned rectangle with diagonal bisecting line from vertices $(0,0)$ and $(a,b)$, with $a,b$ irrational.
	This is a \quadlayout{}.
	Identifying pairs of sides will also lead to a \quadlayout{} on an annulus or a torus.
\end{proof}

Note that the above example of a non-integer coordinate layout is in some sense trivial, and can easily be rescaled.
In general, after a set of rescalings, any parametric description of a \quadlayout{} can be given parametric integer coordinates.
However, this rescaling must generally be globally anisotropic.
Without prior knowledge of the entire cell structure, producing such a globally anisotropic rescaling is highly nontrivial.
Furthermore, without care it will distort the original metric.

\section{Computational Results}\label{sec:computation}

In this section, some basic results are shown to computationally verify the above theory.
As suggested by Proposition \ref{prop:integer_grid_map}, no computation makes use of integer grid maps.
{\Rd Instead, linear constraints enforcing that the \quadlayout{} align to surface features and boundaries are prescribed as  \cite{Hiemstra:2020};
topological constraints enforcing quotient curves to be of finite length are enforced as proposed in \cite{Campen:2014b,Hiemstra:2020}.}

\subsection{Annulus with Singularities}

Perhaps the simplest non-trivial example which demonstrates the above theory is presented by an annulus with a cone singularity on a cone of angle $\frac{3\pi}{2}$ and a cone singularity on a cone of angle $\frac{5\pi}{2}$.
Such a configuration is displayed in Figure \ref{fig:annulus}, with the cone of larger angle shown as a point in blue, the lesser angle cone in red.
After computation using discrete surface Ricci flow \cite{Jin:2008,Yang:2009} with Neumann boundary conditions, the surface is cut to a topological disk (curves in blue) with additional cuts to the singularities (cuts in red).
After cutting, vertices and edges of the mesh along the cuts are multiply defined, as expected from the completion topology.
Using this information, the completion of the cut surface can then be immersed as seen on the right of Figure \ref{fig:annulus}.

Notice that the image on the right is not an embedding, but rather simply an immersion.
This can be seen by the domain overlap near the point in blue on the left.
Furthermore, note that the boundary curves (shown in green) are represented by curves with constant $u$ and $v$ coordinates.
Additionally, positive and negative sides of cut curves under the immersion are given by rotations of $\frac{k\pi}{2}, k \in \mathbb{Z}$ and translations.
Finally, the \quadlayout{} is given as unions of curves which are constant in $u$ or $v$, and is depicted by the curves given in black and green (the minimal set of  quotient curves).
Though discontinuous in the cut topology, these curves are continuous and smooth in the original surface topology which is rebuilt as the quotient space of the cut mesh topology.
Notice that all quotient curves are either periodic (e.g. green boundary curves) or finite (e.g. all curves emanating from singularities).

\subsection{Curved L-Shaped Domain}

\begin{figure}
	\centering
	\includegraphics[trim=0cm 0cm 0cm 0cm, clip, width=.95\textwidth]{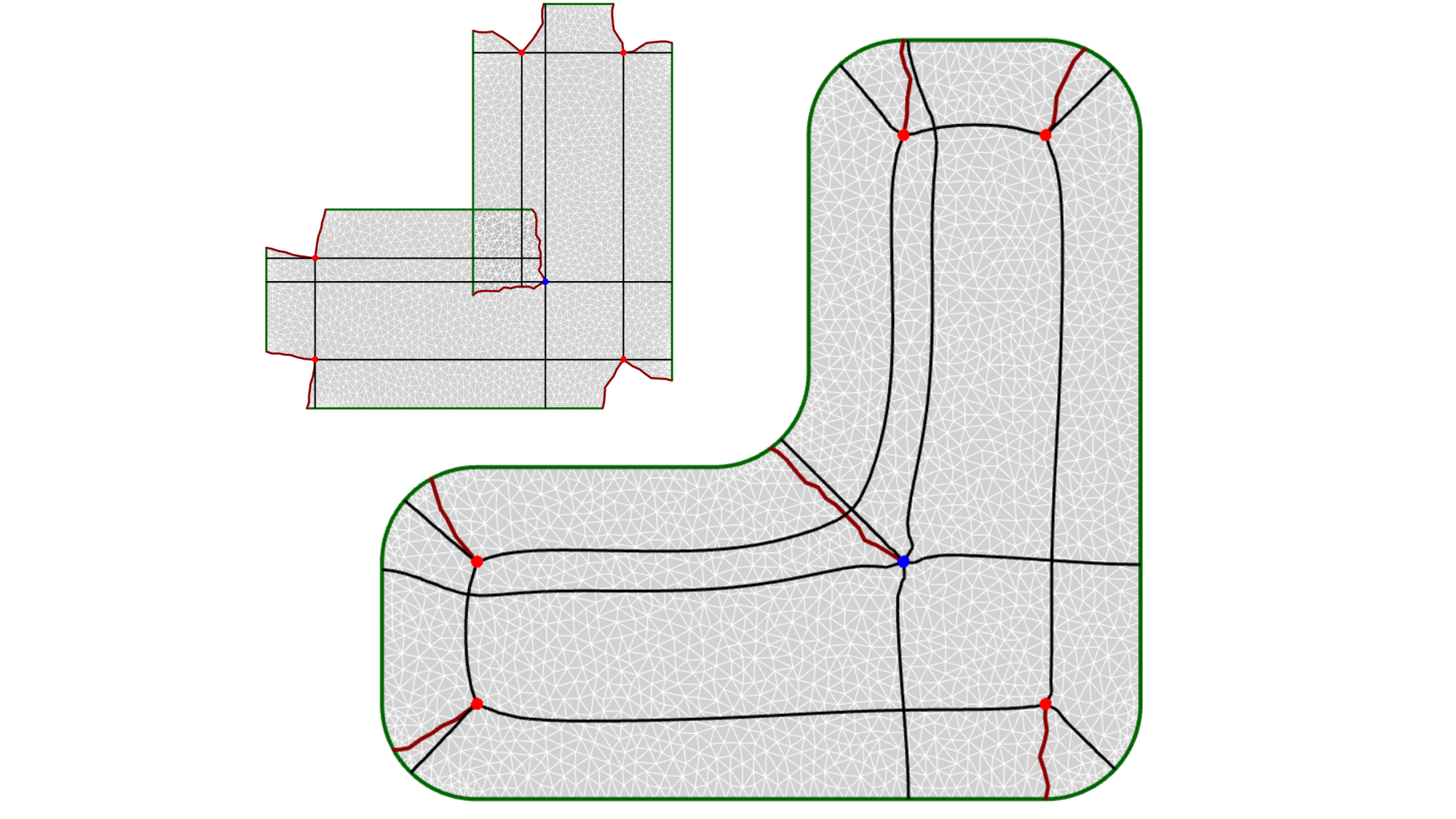}
	\caption{A \quadlayout{} (below-right) and its \quadmeshimmersion{} (above-left) of a curved L-shaped domain is shown.
	Here, an internal cone singularity with cone of angle $\frac{5\pi}{2}$ is shown in blue, while internal cone singularities with cones of angles $\frac{3\pi}{2}$ red. 
	Note that curvilinear arcs on the original surface geometry emanating from singularities (black) and tracing boundaries (green) are combinations of curves on the immersed geometry which have constant $u$ or $v$ coordinate.
	Because this is a topological disk, no cuts are necessary to simplify the surface homology, though jagged curves in red are cuts made to cone singularities.
	Note that in the immersed topology, opposite sides of the same curves have parametric coordinates that are simple rotations of each other by $\frac{k\pi}{2}, k \in \mathbb{Z}$.
	 \label{fig:l_shaped}}
\end{figure}

Another basic geometry illustrating the proposed theory is the curved L-shaped domain of Figure \ref{fig:l_shaped}.
This surface was generated by imposing constraints as in \cite{Hiemstra:2020} on boundaries, holonomy, and connectivity between singularities while  minimizing a symmetric Dirichlet energy \cite{SmithSchaefer:2015} using composite majorization \cite{Shtengel:2017} via Progressive Parameterizations \cite{Liul:2018}. 
Again, note that curvilinear objects in the quadrilateral layout are straight lines in the immersion which have been glued together.
Here, the boundary quotient curve is periodic, while all other curves displayed in the layout are finite and terminate between a singularity and a boundary or between two singularities.

\subsection{Double Torus}

{\Bd The subtle difference between two partial \quadmeshimmersion{s}\textemdash{}one with finite length quotient curves and one without\textemdash{}is depicted in Figure \ref{fig:double_torus}.
Here, though both geometries have very similar \cuttinggraph{s} and cone singularity structures, one has finite-length quotient curves while the other heuristically represents an immersion that does not (for which extraction of quotient curves is instead terminated prematurely).
Both immersions are initially computed by integrating holomorphic one-forms \cite{Gu:2003}.
}

\ifthenelse{\boolean{isELS}}
{
\begin{figure}
		\centering
	\begin{subfigure}{1\textwidth}
	\centering
		\includegraphics[trim=0cm 0cm 0cm 0cm, clip, width=0.9\textwidth]{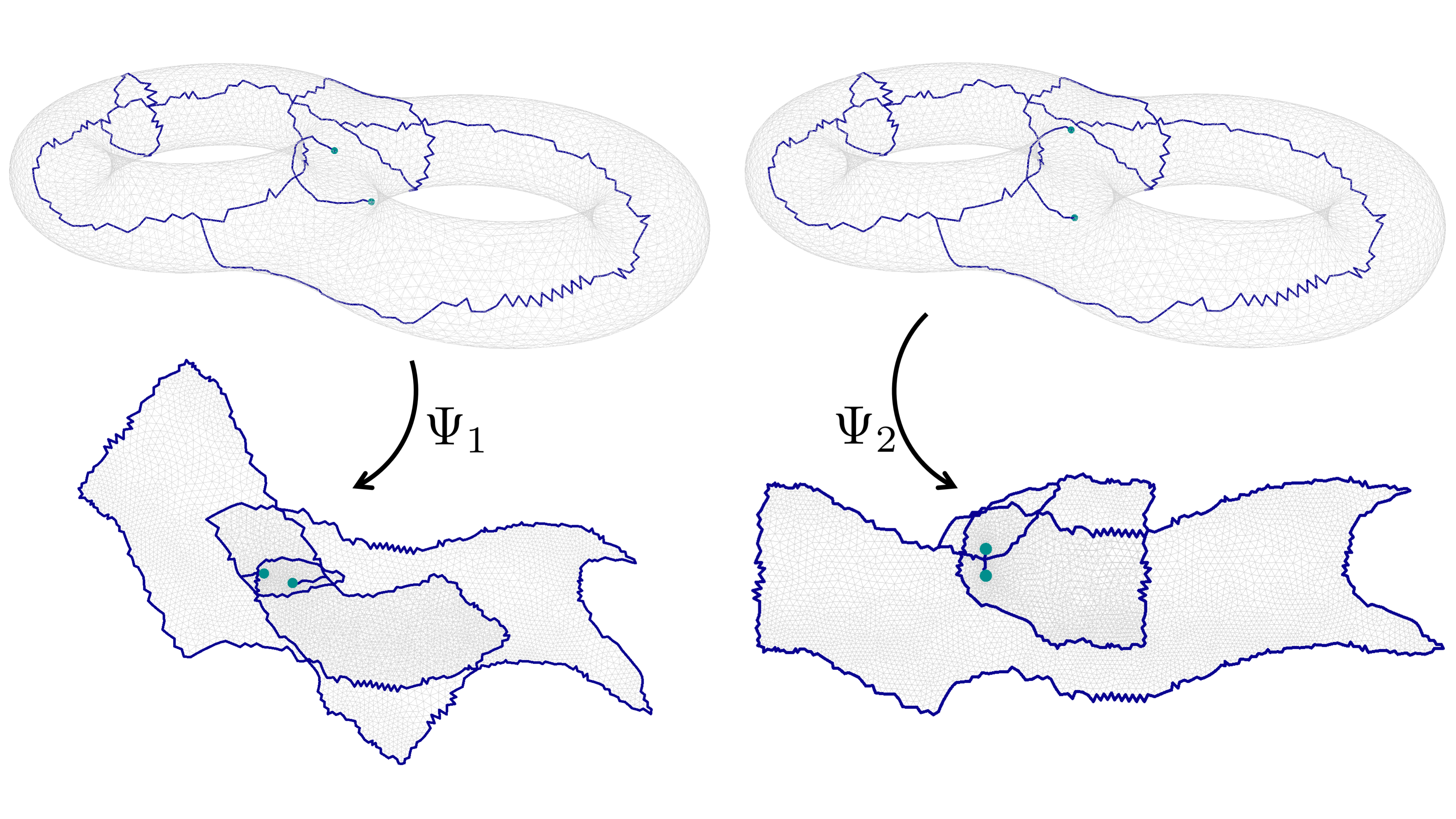}
		\caption{Two partial \quadmeshimmersion{s} on a double torus}
	\end{subfigure}
	
	\begin{subfigure}{1\textwidth}
		\centering
		\includegraphics[trim=0cm 0cm 0cm 0cm, clip, width=0.9\textwidth]{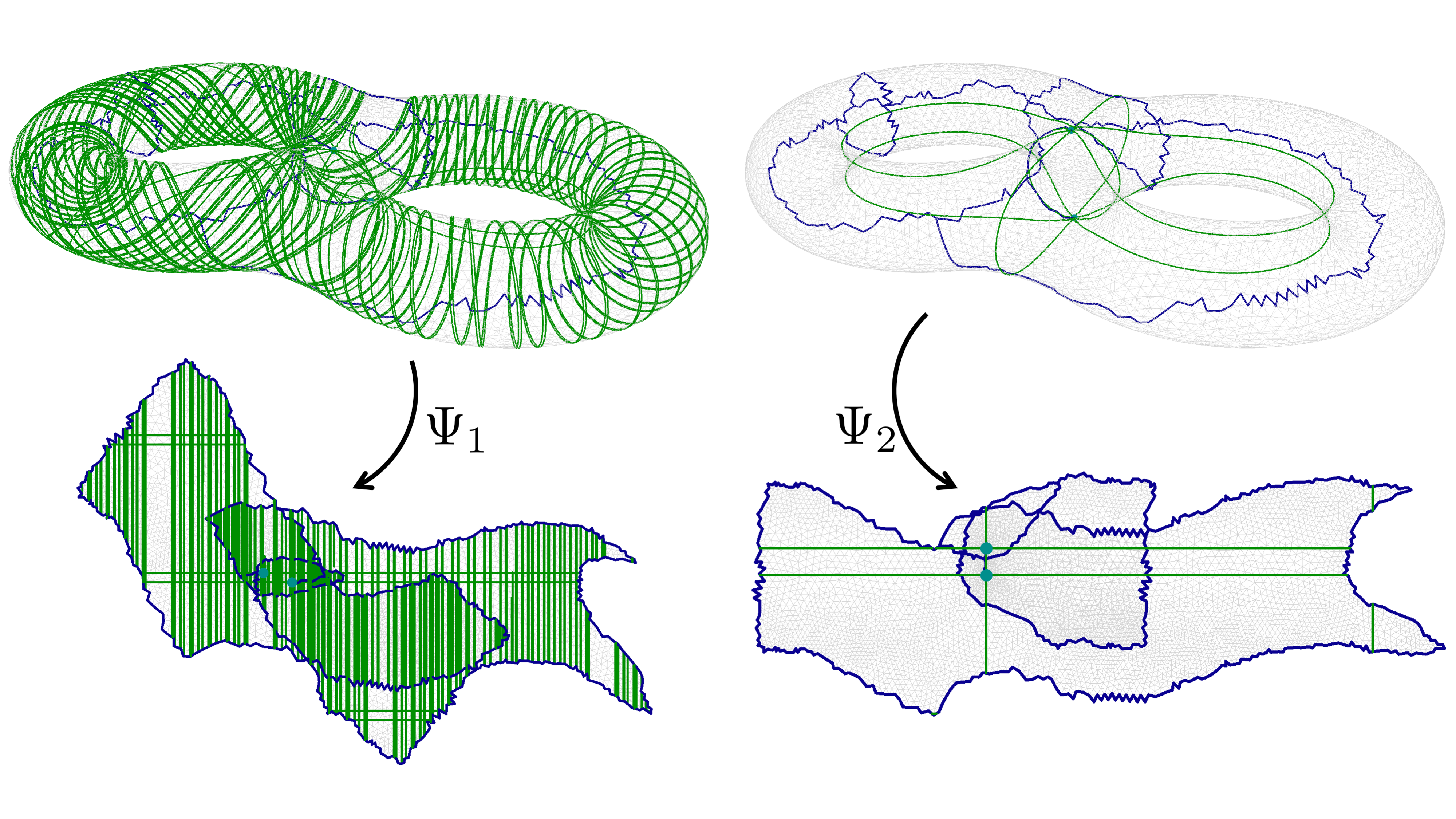}
		\caption{The set of quotient curves emanating from singularities of each immersion (in green)}
	\end{subfigure}
	\caption{{\Bd Two partial \quadmeshimmersion{s} are computed as the integrals of different holomorphic one-forms \cite{Gu:2003} with slightly different singularity locations.
	Above, \cuttinggraph{s} (in blue) and immersions for both structures are shown. 
	Below, all quotient curves emanating from singularities on both immersions are extracted and represented on the spatial domain.
	While the quotient curves of $\Psi_2$ are finite, the quotient curves of $\Psi_1$ are terminated prematurely.
	Heuristically, this represents a partial \quadmeshimmersion{} that does not have finite length quotient curves.
	}}\label{fig:double_torus}
\end{figure}
}
{
\begin{figure}
		\centering
	\begin{subfigure}{1\textwidth}
		\includegraphics[trim=0cm 0cm 0cm 0cm, clip, width=1.0\textwidth]{Partial_vs_Full_Quad_Layout_img1}
		\caption{Two partial \quadmeshimmersion{s} on a double torus}
	\end{subfigure}
	
	\begin{subfigure}{1\textwidth}
		\includegraphics[trim=0cm 0cm 0cm 0cm, clip, width=1.0\textwidth]{Partial_vs_Full_Quad_Layout_img2}
		\caption{The set of quotient curves emanating from singularities of each immersion (in green)}
	\end{subfigure}
	\caption{{\Bd Two partial \quadmeshimmersion{s} are computed as the integrals of different holomorphic one-forms \cite{Gu:2003} with slightly different singularity locations.
	Above, \cuttinggraph{s} (in blue) and immersions for both structures are shown. 
	Below, all quotient curves emanating from singularities on both immersions are extracted and represented on the spatial domain.
	While the quotient curves of $\Psi_2$ are finite, the quotient curves of $\Psi_1$ are terminated prematurely.
	Heuristically, this represents a partial \quadmeshimmersion{} that does not have finite length quotient curves.
	}}\label{fig:double_torus}
\end{figure}
}

\subsection{DEVCOM Generic Hull Bracket}

In addition to simple geometries, the theory of the \quadmeshimmersion{} also applies to more complicated geometries {\Bd of engineering interest}. 
In Figure \ref{fig:sym_bracket}, a \quadlayout{} and its respective feature-aligned \quadmeshimmersion{} on the symmetric part of a bracket of the DEVCOM Generic Hull vehicle are displayed.
The immersion here was computed using the approach outlined in \cite{Hiemstra:2020}, with the additional caveat that face-based singularities of the \textframefield{} were transferred to vertex-based singularities by subdividing the face and attaching the singularity to the vertex at the original face's barycenter.

\begin{remark}
In general, the cone singularity structure on a surface (e.g. embedded in $\mathbb{R}^3$) will not simply be a subset of that on the \textquadmeshmetric{}.
Particularly for this bracket, the surface embedded in three-dimensional Euclidean space has boundary cone singularities at all points with sharp angles, including all of those shown in magenta and red, as well as two displayed in green.
These points are typically called ``features.''
In this metric, there are no internal singularities.
Under the flat metric induced by the \quadmeshimmersion{}, each red and magenta point gets mapped to a boundary singularity.
However, the two feature points in green are mapped to boundaries in the immersion with constant $u$ coordinate.
As a result, these points are not boundary cones when viewed under the flat metric.
{\Rd Similarly, the \textquadmeshmetric{} introduces four interior singularities (shown in blue) that are not present in the original geometry. 
Here, boundary alignment constraints require cut segments of the bracket hole to be mapped to (geodesic) lines of constant $u$ or $v$ coordinates under the \textquadmeshimmersion{}.
The geodesic curvature lost by the surface in straightening these boundary segments in the \textquadmeshmetric{} is instead concentrated in these four additional interior singularities to ensure that the Gauss-Bonnet condition is still obeyed.}
By sampling all cone singularities---in both the flat metric as well as the three-dimensional Euclidean setting---a geometry-aware, feature preserving \quadlayout{} is attained.
Finally, note that feature curves in the original geometry (shown in gold) must also be represented by a set of lines in $u$ or $v$ only on the immersed geometry to remain features in the final parameterization.
\end{remark}

\begin{figure}
	\centering
	\includegraphics[trim=0cm 0cm 0cm 0cm, clip, width=.95\textwidth]{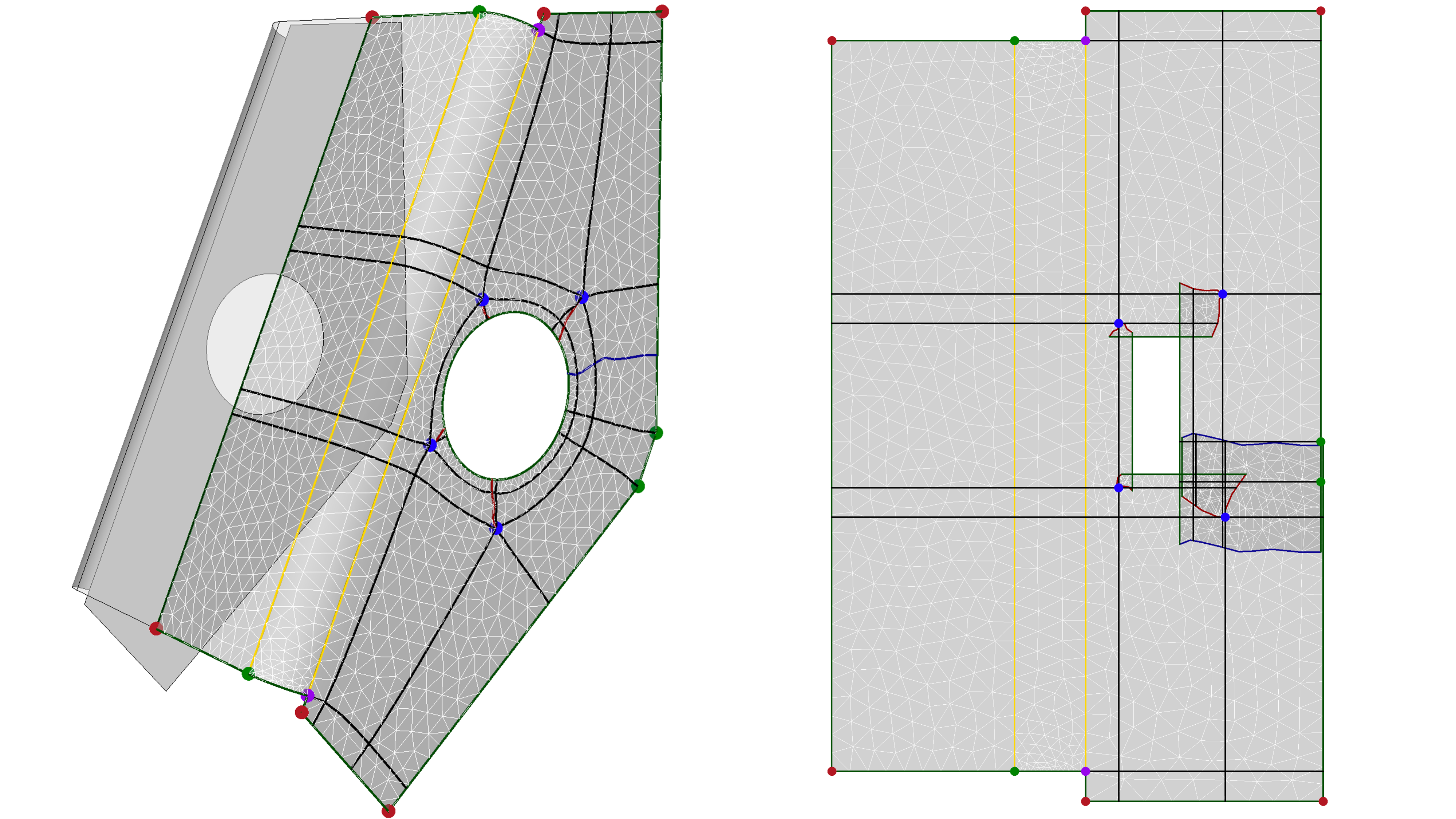}
	\caption{A \quadlayout{} (left) and its \quadmeshimmersion{} (right) of the symmetric part of a bracket of the DEVCOM Generic Hull is shown.
	Here, internal cones of angle $\frac{5\pi}{2}$ are shown in blue, boundary cones of angles $\frac{3\pi}{2}$ and $\frac{\pi}{2}$ are shown in magenta and red, respectively, and feature nodes which are not parametric cones are displayed in green. 
	Note that curvilinear arcs on the original surface geometry emanating from singularities (black), tracing boundaries (green), and following geometric features (gold) are combinations of curves on the immersed geometry which have constant $u$ or $v$ coordinate.
	Curves in blue represent cuts made on the mesh to make the bracket a topological disk, while curves in red are cuts made to cone singularities.
	Note that in the immersed topology, opposite sides of the same curves have parametric coordinates that are simple translations and rotations of each other by $\frac{k\pi}{2}, k \in \mathbb{Z}$.
	 \label{fig:sym_bracket}}
\end{figure}

\section{Conclusion}\label{sec:conclusions}

This paper mathematically establishes an equivalence between a special type of immersion---a \quadmeshimmersion{}---and a quadrilateral layout on a surface.
The theory for the \quadmeshimmersion{} holds for both smooth and \pl{} topology, making this immersion a strong computational tool.
Furthermore, the \quadmeshimmersion{} generalizes the notion of an integer grid map.
{\Bd Simple tests} were run to show the computational viability of the theory.

While the \quadmeshimmersion{} provides a general paradigm in which to extract quadrilateral layouts of surfaces, generation of \quadmeshimmersion{s} is still a non-trivial operation.
Many techniques for generation of \quadlayout{s} seek a set of $u$ and $v$ coordinates on surface triangulations that  satisfy the above constraints (e.g. \cite{Bommes:2013,Bommes:2009,Hiemstra:2020}.) 
However, these techniques cannot generally guarantee local invertibility of the parametetization (immersion) or finite length quotient curves (integral curves of \textframefield{s}).
Future research will explore computational methods with better guarantees.

{\Bd Much of the} current quadrilateral-only mesh generation literature exploits integer grid maps (see e.g. \cite{Bommes:2013,Bommes:2009,Campen:2015,Ebke:2013}).
While other frameworks (e.g. \cite{Campen:2012,Campen:2014,Hiemstra:2020,Viertel:2019,Viertel:2020}) for quadrilateral mesh generation exist, it is not clear that these have been explored fully.
Future work should look into methods which can yield quadrilateral layouts without the stipulations imposed by integer grid maps.

Finally, this work lays out mathematical theory, but it does not provide a comprehensive computational framework for how to extract a \quadlayout{} from a \quadmeshimmersion{}.
Subsequent work will describe data structures and algorithms necessary to extract a \quadlayout{} from a valid \quadmeshimmersion{}.



\section*{Acknowledgements}
K. Shepherd was supported by United States Department of Defense Navy Contract N6833518C0014 and the National Science Foundation Graduate Research Fellowship under Grant No. DGE-1610403.  R. R. Hiemstra and T. J. R. Hughes were partially supported by the National Science Foundation Industry/University Cooperative Research Center (IUCRC) for Efficient Vehicles and Sustainable Transportation Systems (EV-STS), and the United States Army CCDC Ground Vehicle Systems Center (TARDEC/NSF Project \# 1650483 AMD 2). Any opinion, findings, and conclusions or recommendations expressed in this material are those of the authors and do not necessarily reflect the views of the National Science Foundation.

\bibliographystyle{plain}
\bibliography{Combined_Bibliography}

\newpage{}
\ifthenelse{\boolean{isELS}}
{
\section{Supplementary Material}\label{sec:supplement}
\subsection{Mathematical Concept Review}\label{sec:math_background}
}
{
\section{Mathematical Concept Review}\label{sec:math_background}
}

The following section reviews a number of concepts from topology and geometry. Woven through this will be a discussion on concepts essential to the topological and geometric story of \quadlayout{s}. The reader more familiar with point set, algebraic, and differential topology, as well as differential geometry, may wish to simply skim this for relevant highlights and notation.


\ifthenelse{\boolean{isELS}}
{
\subsubsection{Topological Concepts}\label{sec:topology}
}
{
\subsection{Topological Concepts}\label{sec:topology}
}

Topology describes the notion of proximity between objects.
Of particular interest in topology is whether functions preserve proximity (called continuous functions) and whether these functions have nice properties such as reversibility.
A function which is continuous will be called a \textbf{map}.
An invertible continuous function whose inverse is also continuous is called a \textbf{homeomorphism}.
For our purposes, we say a neighborhood is ``connected'' if for every point in the neighborhood there is a path entirely in the neighborhood to every other point in the neighborhood.

For general domains  $X$ (not necessarily embedded in Euclidean space), a \textbf{metric} is a function $d:X\times X \rightarrow \mathbb{R}$ in which
\begin{enumerate}
	\item $d(x,y) \geq 0,$ with equality only if and only if $x = y$
	\item $d(x,y) = d(y,x)$
	\item $d(x,z) \leq d(x,y) + d(y,z)$ for all $y$
\end{enumerate}
A space with a metric is called a \textbf{metric space}, and it has an induced topology defined by open sets at each $x \in X$ whose distance from a base point is less than some value, and written
\[
	B_r(x) = \{ y \in X: d(x,y) < r\}.
\]
A \textbf{Cauchy sequence} is an infinite set of points $\{x_i\}_{i=1}^\infty\subset X$ such that for any $\epsilon > 0, \exists N \in \mathbb{N}$ such that $d(x_M,x_{M+1}) < \epsilon$ for each $M \geq N$.
A metric space $X$ is \textbf{complete} if every Cauchy sequence converges to a point in $X$.
The \textbf{Cauchy completion} of a metric space $X$ is the union of $X$ with the set of all limit points of Cauchy sequences in $X$, and is unique for the metric space.
If $\{x_i\}_{i=1}^\infty$ is a sequence in $X$ such that for any $\epsilon > 0, \exists N \in \mathbb{N}$ such that $d(x_M,x) < \epsilon$ for each $M \geq N$, the sequence limits to $x$, and $x$ is the \textbf{limit point}.
A set is \textbf{closed} if it contains all of its limit points.

In the Euclidean plane $\mathbb{R}^2$, an open ball of radius $r > 0$ at a point $p$, denoted $B_r(p)$ is
\[
	B_r(p) := \{ x \in \mathbb{R}^2: d_{\mathbb{R}^2}(p,x) < r \},
\]
where $d_{\mathbb{R}^2}:\mathbb{R}^2 \times \mathbb{R}^2 \rightarrow \mathbb{R}$ is the typical Euclidean metric.
Taken as a Banach space, this metric is written $d_{\mathbb{R}^2}(p,x) = ||p-x||_{\mathbb{R}^2}$.
An open \textbf{half-ball} at the origin of radius $r > 0$ is written as 
\[
	HB_r := \{ x = (x_1,x_2) \in \mathbb{R}^2: ||x||_{\mathbb{R}^2} < r, x_2 \geq 0 \}.
\]

Frequently, spaces are described as {\Bd an amalgamation} of local phenomena. 
An \textbf{open cover} of a domain $X$ is a family $\mathcal{U}$ of open sets $U_\iota$ such that $\bigcup_{\iota} U_\iota = X$.
By definition, a \textbf{compact} space is one in which every open cover has a subfamily which also covers $X$ and is finite.
This subfamily is called a \textbf{subcover}. 

\begin{figure}
	\centering
	\includegraphics[trim = 0cm  0cm 0cm 0cm, clip, width=.95\textwidth]{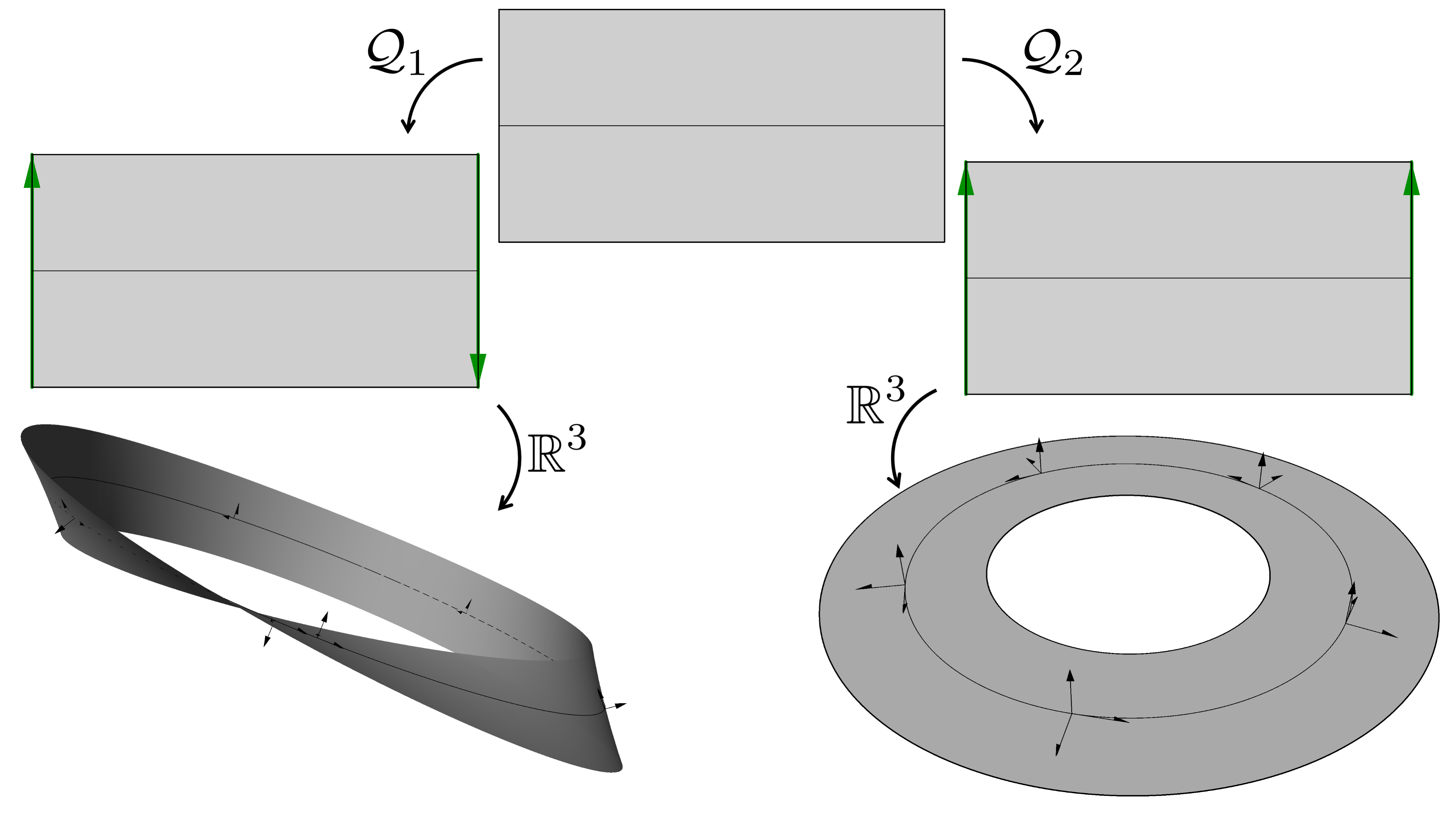}
	\caption{An initial rectangle is operated on by two different quotient maps, $\quotientmap_1$ and $\quotientmap_2$. Under both, opposite sides of the rectangle are glued to make the resulting arrow directions align. The first results in a M\"obius band, while the second gives an annulus. Note that when each is embedded in $\mathbb{R}^3$, the normals of the {\Bd annulus are well-defined} when transported about a loop, while the normals on the M\"obius band are not. }\label{fig:rectangle_quotient}
\end{figure}

A topology of particular interest in this work is the \textbf{quotient topology}, which is given as a mathematical gluing operation.
This is most easily expressed pictorially.
In Figure \ref{fig:rectangle_quotient}, two edges of a rectangle are glued in such a way that arrows on the glued sides, after gluing, point in the same direction.
The first of these results in a M\"obius band, while the second yields a cylinder.
Note that after gluing, a much shorter path can be used to travel between points on either side of the boundary than before. 

Another topology of interest in this paper is that of a two-manifold, also known as a surface. 

A \textbf{closed} surface is an object in which every point has a neighborhood which is homeomorphic to a two-dimensional open ball.
An \textbf{open surface} is an object in which every point has a neighborhood which is either homeomorphic to either a two-dimensional open ball or a two-dimensional open half-ball.
The \textbf{interior} of a surface is the set of all points with neighborhood homeomorphic to  an open ball; the \textbf{boundary} is the remainder of the surface.
When a surface's boundary is not empty, the union of all connected boundary points will be a set of simply closed curves  called the \textbf{boundary components}. 
Here, \textbf{simple} means a curve that does not intersect itself.

\begin{figure}
\centering
	\includegraphics[trim = 0cm  0cm 0cm 0cm, clip, width=.65\textwidth]{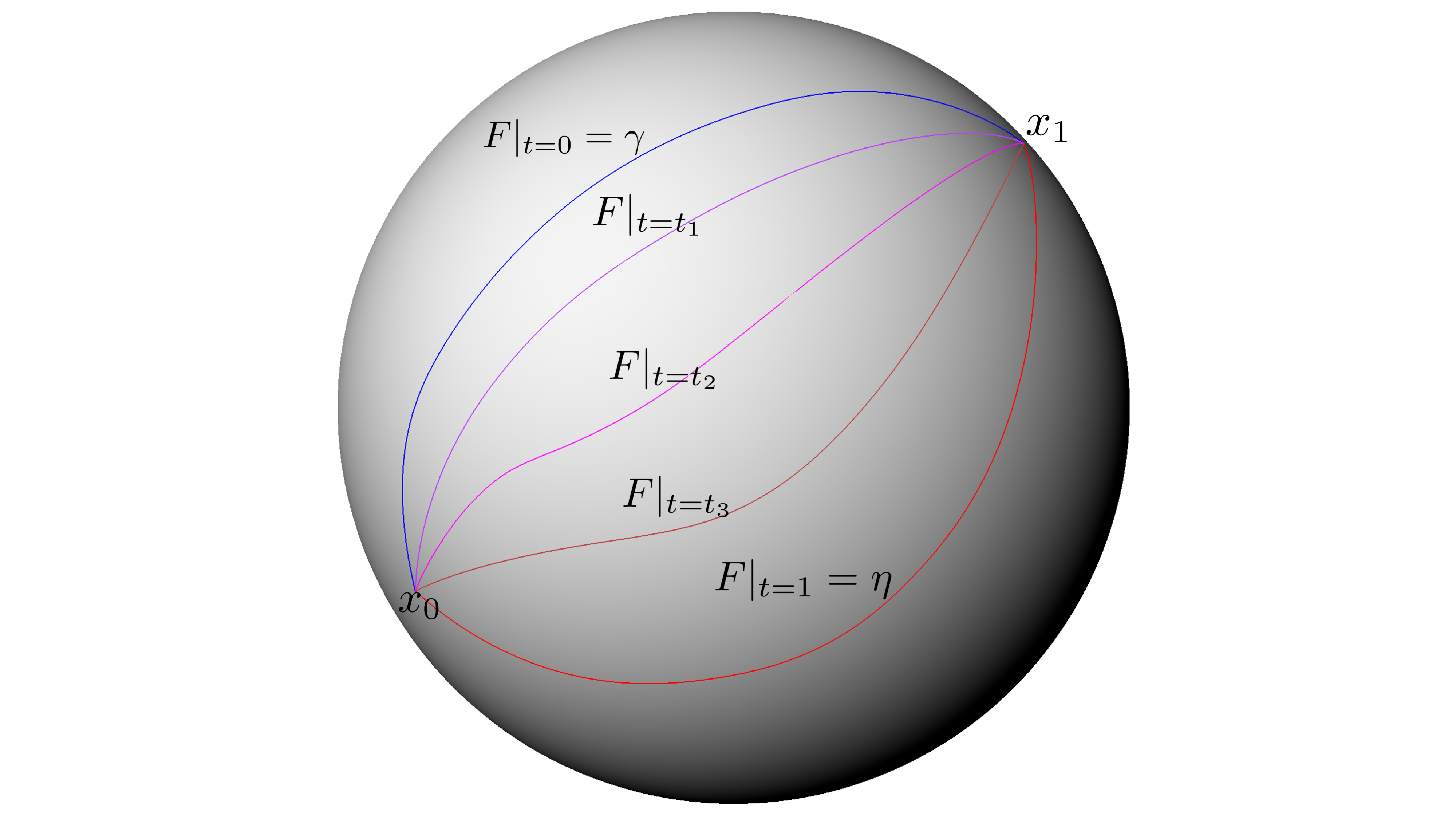}
	\caption{A homotopy between curves $\gamma_0$ and $\gamma_1$ on a sphere, relative to their endpoints. }\label{fig:homotopic_sphere_curves}
\end{figure}

A \textbf{path}, $\gamma:\mathbb{I}\rightarrow X$ is a continuous map from the unit interval to a space $X$.
If $\gamma(0) = \gamma(1)$, the path is \textbf{closed}; otherwise it is \textbf{open}.
Paths $\gamma_0, \gamma_1$ to a domain $X$ are \textbf{homotopic relative to their endpoints} if there is a continuous function $F:\mathbb{I}\times\mathbb{I}\rightarrow X$ in which $F\big|_{\{0\}\times\mathbb{I}} = \gamma_0, F\big|_{\{1\}\times\mathbb{I}} = \gamma_1$, and $F\big|_{\mathbb{I}\times\{0\}} = \gamma_0(0), F\big|_{\mathbb{I}\times\{1\}} = \gamma_0(1).$ 
Here, $F$ is called a \textbf{homotopy} between the curves.
One such homotopy between two curves on a sphere is depicted in Figure \ref{fig:homotopic_sphere_curves}. 
Two paths for which such a homotopy does not exist are called  \textbf{inequivalent}.

For two paths $\gamma_1,\gamma_2:\mathbb{I}\rightarrow X$, define path \textbf{composition} $\gamma_1 \cdot \gamma_2:\mathbb{I}\rightarrow M$ by 
\begin{equation}
\left(\gamma_1 \cdot \gamma_2\right)(t) =
\begin{cases}
\gamma_1(2 t) & 0 \leq  t \leq \frac{1}{2} \\
\gamma_2(2 t-1) & \frac{1}{2} \leq  t \leq 1
\end{cases}
\end{equation}
The \textbf{fundamental group} of space $X$ based at $p\in X$ is the group generated by equivalence classes of closed paths with start and end at $p$ in the domain $X$ which are homotopically inequivalent, and is written $\pi_1(X,p)$.
The group operation is given by path composition.
The zero element of this group is the set of paths which are homotopically equivalent to the constant path (a single point), and the inverse operation is traveling the same path in opposite direction.
Figure \ref{fig:plate_with_hole_fundamental_group} depicts two curves: $\eta$ is homotopy equivalent to the constant path, where $\gamma$ is homotopically non-trivial.
For a path-connected space, $\pi_1(X,p)$ is equivalent to $\pi_1(X,q)$ for any $p,q \in X$, and so this group is often abbreviated as $\pi_1(X)$.

\begin{figure}
	\centering
	\includegraphics[trim = 0cm  0cm 0cm 0cm, clip, width=.75\textwidth]{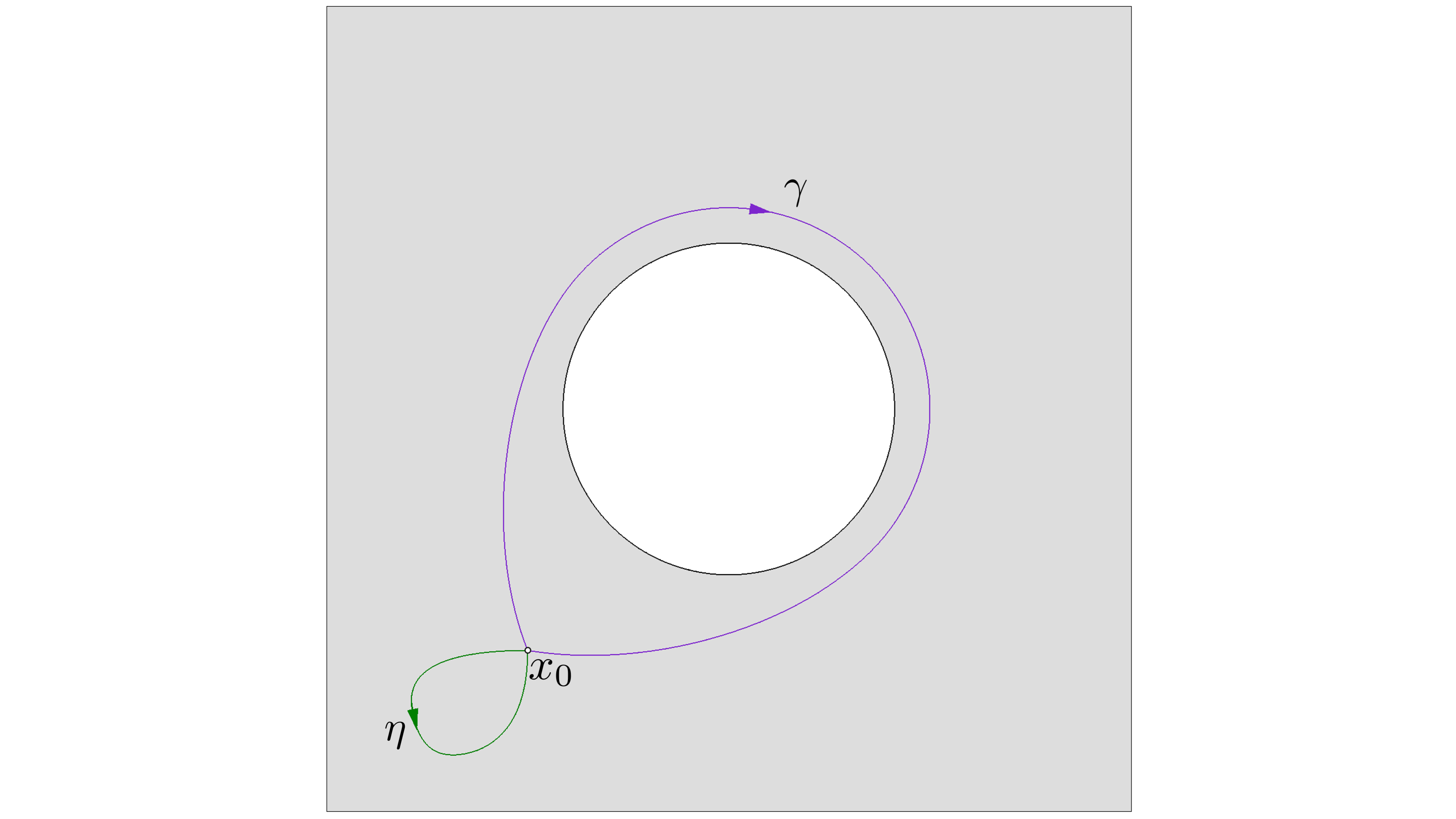}
	\caption{Two loops based at point $x_0$ are depicted in a plate with a hole. The loop $\gamma$ cannot be continuously deformed to a point without passing outside of the domain, and generates the fundamental group. The loop $\eta$ is homotopically equivalent to the constant path. }\label{fig:plate_with_hole_fundamental_group}
\end{figure}

 Let $f_0,f_1:X\rightarrow Y$  be two maps. $f_0$ is \textbf{homotopic} to $f_1$ if there is a map $F:X\times\mathbb{I}\rightarrow Y$ with  $F\big|_{X \times \{0\}} = f_0, F\big|_{X \times \{1\}} = f_1$, and is written $f_0 \simeq f_1$.
If $f:X\rightarrow Y$ and $g:Y\rightarrow X$ are maps with $g \circ f \simeq \mathrm{Id}_X$ and $f \circ g \simeq \mathrm{Id}_Y$, with $\mathrm{Id}_X$ being the identity function on $X$ (and similarly for $\mathrm{Id}_Y$), then the spaces $X$ and $Y$ are said to be \textbf{homotopy equivalent}.
A space which is homotopy equivalent to a single point is called \textbf{contractible}.
Figure \ref{fig:homotopy_contractible} pictorially displays how the space $\mathbb{R}^2-\{0\}$ is homotopy equivalent to the unit circle (which is not contractible) and how the L-shaped domain is contractible.

\begin{figure}
\begin{subfigure}{0.45\textwidth}
\centering
	\includegraphics[trim = 0cm  0cm 0cm 0cm, clip, width=1\textwidth]{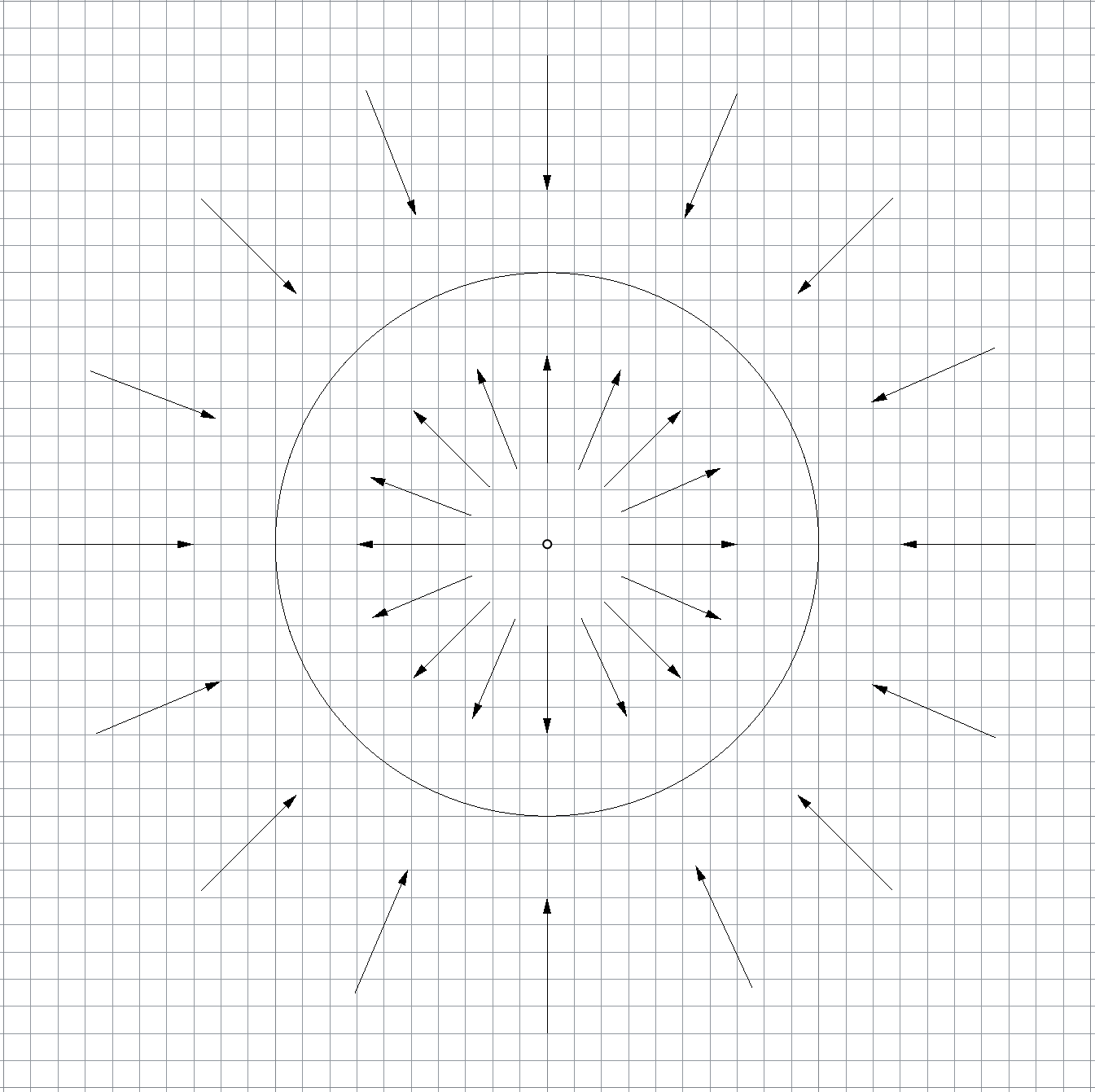}
	\caption{}
\end{subfigure}
\hspace{1em}
\begin{subfigure}{0.45\textwidth}
	\includegraphics[trim = 0cm  0cm 0cm 0cm, clip, width=1\textwidth]{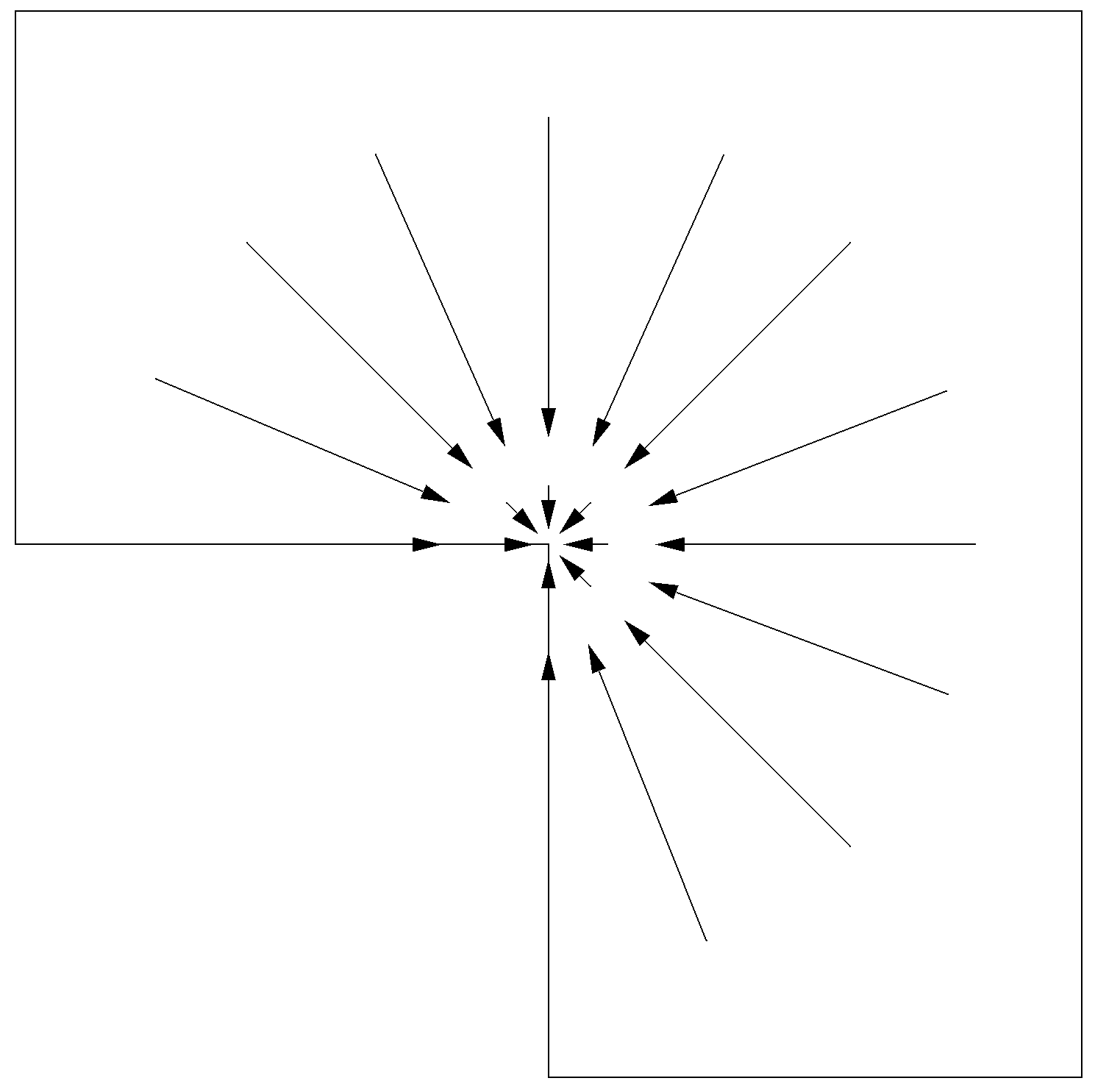}
	\caption{}
\end{subfigure}
\caption{The domain $\mathbb{R}^2-\{0\}$ is homotopy equivalent to the unit circle via the identity of the unit circle and the map $\frac{x}{||x||}$. Similarly, the L-shaped domain is contractible, via the identity on the reentrant point and the radial projection onto this point.}\label{fig:homotopy_contractible}
\end{figure}

For a connected open surface $\surf$, each boundary except one is homotopically inequivalent to concatenations of the other boundary components.
As a result, if $k$ denotes the number of boundary components in $\surf$, $\max\{0,k-1\}$ generators of $\pi_1(\surf{})$ are produced from boundary components.
The \textbf{genus} of a surface is half the number of generators of $\pi_1(\surf)$  that are not produced from a boundary component.
{\Bd (For a surface, the number of generators of $\pi_1(\surf)$ less the number of generators given by boundary components is always even.)}
Heuristically, it can be thought of as the number of ``holes'' in the object which are not boundary components.

An \textbf{orientable} surface is a surface into which a M\"obius band cannot be  injectively mapped.
Alternatively, one can think of a non-orientable surface as one that does not have a consistently defined normal, such as the M\"obius band in Figure \ref{fig:rectangle_quotient}.
All closed surfaces embedded in $\mathbb{R}^3$ are orientable.
The following is an important classical result of surface topology, originally the consequence of the combined theory from \cite{Brahana:1922, DehnHeegaard:1907, Rado:1925, Dyck:1888}, but which has since been presented more cohesively in works such as \cite{Thomassen:1992}.
\begin{theorem}[Classification of Surfaces:]\label{th:classification}
Every connected, compact, orientable surface is unique up to homeomorphism based on its genus $(\genus)$ and number of boundary components $(\boundarycomp)$.
\end{theorem}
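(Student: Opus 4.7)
The plan is to follow the classical strategy that combines triangulability with a combinatorial reduction to a normal form, then to distinguish the resulting normal forms by topological invariants derived from $\genus$ and $\boundarycomp$.

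First, I would invoke Radó's theorem to triangulate $\surf$. Since $\surf$ is compact, the triangulation is finite, yielding a simplicial complex homeomorphic to $\surf$. Next, I would convert this triangulation into a polygonal representation: choose a spanning tree of the dual graph of the triangulation and successively glue adjacent triangles across tree edges to build a single (possibly non-convex) polygon whose boundary encodes the remaining edge identifications as a cyclic word in generators (each interior edge appearing twice and each boundary edge appearing once, annotated so as to indicate the compatible orientation).

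Second, I would carry out the combinatorial normal-form reduction. Using the standard cut-and-paste moves (elimination of cancelling adjacent pairs $xx^{-1}$, bringing pairs of the form $\ldots x \ldots x \ldots$ into adjacency, and the handle reduction that turns any interleaved quadruple into a commutator block $a_i b_i a_i^{-1} b_i^{-1}$), every closed orientable polygonal word can be reduced to either $aa^{-1}$ (the sphere) or $\prod_{i=1}^{\genus} a_i b_i a_i^{-1} b_i^{-1}$. Orientability is preserved at every step because the moves never introduce an identification of the form $xx$. To treat the bordered case, I would first cap off each boundary component with a disk to obtain a closed orientable surface $\hat\surf$; by the closed-case reduction, $\hat\surf$ is determined up to homeomorphism by its genus $\genus$, and removing the open disks back out (in any homeomorphic placement, by the disk lemma) produces the surface $\surf$ with the prescribed $\boundarycomp$ boundary components. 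The resulting canonical model depends only on $(\genus,\boundarycomp)$.

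Third, I would show that $\genus$ and $\boundarycomp$ are themselves topological invariants, so that two surfaces with different invariants cannot be homeomorphic. The number of boundary components is invariant because the boundary is the set of points without a Euclidean-disk neighborhood, and homeomorphisms preserve this local property and partition $\partial \surf$ into the same number of connected components. The genus can be recovered from the Euler characteristic $\chi(\surf)=2-2\genus-\boundarycomp$, which is a simplicial invariant independent of triangulation, or equivalently from the rank of $\pi_1(\surf)$ minus the contributions from boundary loops as described earlier in the text.

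The main obstacle is the combinatorial reduction, where bookkeeping the handle moves and verifying that orientability is preserved throughout is delicate; the bordered case adds further care since the boundary loops must be tracked through every cut-and-paste operation. A secondary but nontrivial obstacle is justifying the appeal to Radó's triangulation theorem, which is the part of the argument that genuinely uses the two-dimensional nature of $\surf$ and is nontrivial to prove from scratch; for this proposal I would cite it rather than reprove it, treating triangulability as the foundational input on which the combinatorial machinery acts.
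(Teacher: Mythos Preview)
The paper does not prove this theorem; it is stated in the supplementary material purely as background and is attributed to the classical literature (Brahana, Dehn--Heegaard, Rad\'o, von Dyck, with a reference to Thomassen for a modern treatment). There is therefore no in-paper proof to compare against.

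Your outline is the standard classical argument and is correct in its broad strokes: triangulate via Rad\'o, pass to a polygonal word, reduce by cut-and-paste to a normal form, handle boundaries by capping, and separate the normal forms using $\chi$ and $\boundarycomp$. One small slip: in the orientable case every interior edge already appears in the boundary word as a pair $x \ldots x^{-1}$, so the move ``bring pairs of the form $\ldots x \ldots x \ldots$ into adjacency'' does not arise; that move is the crosscap reduction for the non-orientable case. The relevant orientable step is to take an interleaved quadruple $x \ldots y \ldots x^{-1} \ldots y^{-1}$ and consolidate it into a commutator block. Also, when you cap and then remove disks, the statement that the homeomorphism type is independent of disk placement is usually justified via homogeneity of the closed surface (any two embedded disks are ambient-isotopic), which deserves a citation just as Rad\'o does. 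With those two clarifications your sketch would be a faithful summary of the proof the paper cites.
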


Another useful object for surfaces is the \textbf{Euler characteristic}, which can be defined for connected, compact, orientable surfaces as
\[
	\chi(\surf) = 2 - 2g - k
\]
where $g$ is the genus of the surface and $k$ is its number of boundary components.
Some basic topologies of different genus and boundary component count are shown in Figure \ref{fig:surface_classification}.
Notice that the Euler characteristic can yield the same value for surfaces which are topologically distinct as per the Classification of Surfaces.

\begin{figure}
\captionsetup[subfigure]{justification=centering}
\centering
	\begin{subfigure}{0.26\textwidth}
	\includegraphics[trim = 0cm  0cm 0cm 0cm, clip, width=.95\textwidth]{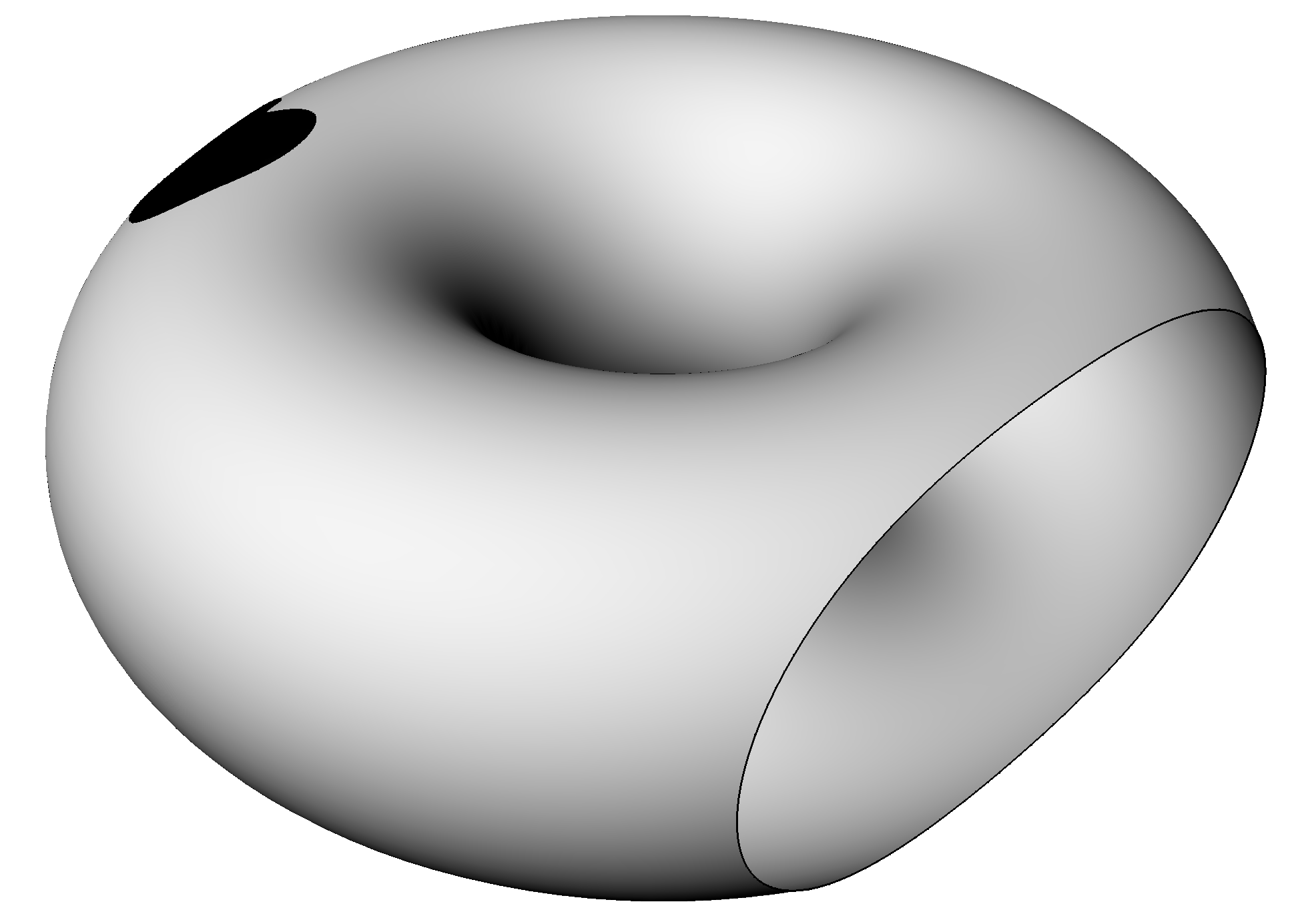}
	\caption{$g = 1, k = 2,$\\ $\chi(S) = -2$}	
	\end{subfigure}
\hspace{1em}
	\begin{subfigure}{0.235\textwidth}
	\includegraphics[trim = 0cm  0cm 0cm 0cm, clip, width=.95\textwidth]{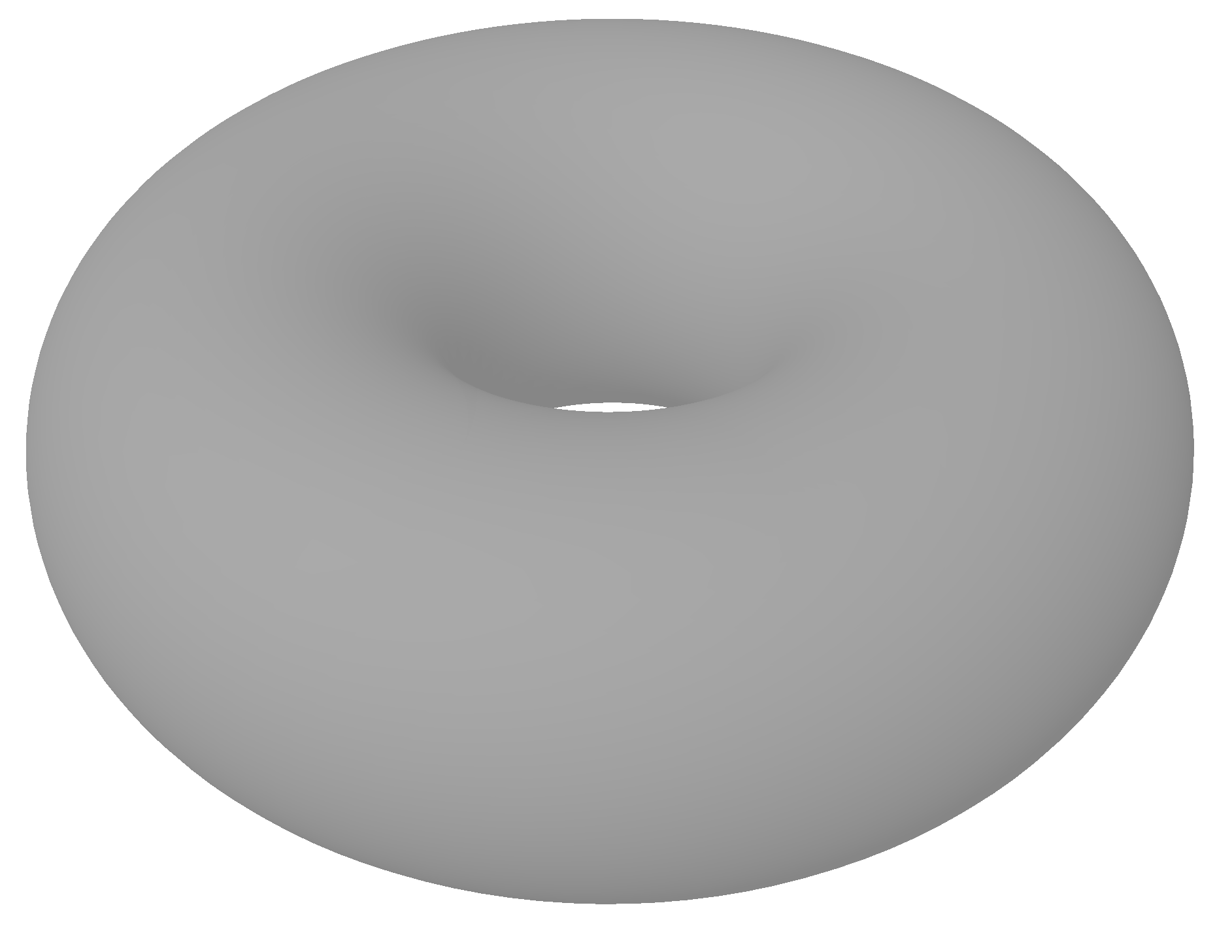}
	\caption{$g = 1, k = 0$\\ $\chi(S) = 0$}
	\end{subfigure}
\\
\hspace{1em}
	\begin{subfigure}{0.24\textwidth}
	\includegraphics[trim = 0cm  0cm 0cm 0cm, clip, width=.95\textwidth]{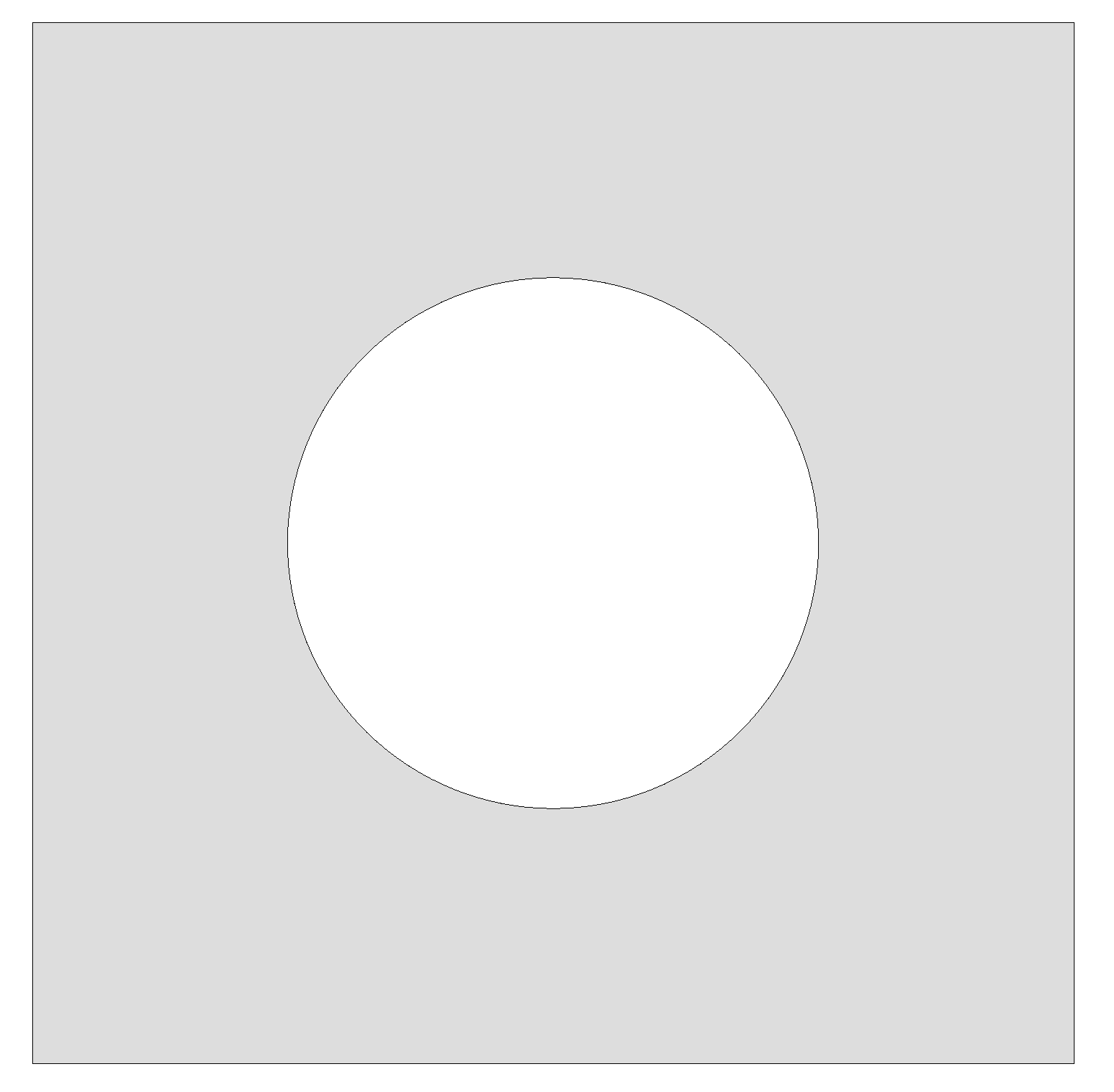}
	\caption{$g = 0, k = 2,$\\$ \chi(S) = 0$}
	\end{subfigure}
\hspace{1em}
	\begin{subfigure}{0.258\textwidth}
	\includegraphics[trim = 0cm  0cm 0cm 0cm, clip, width=.95\textwidth]{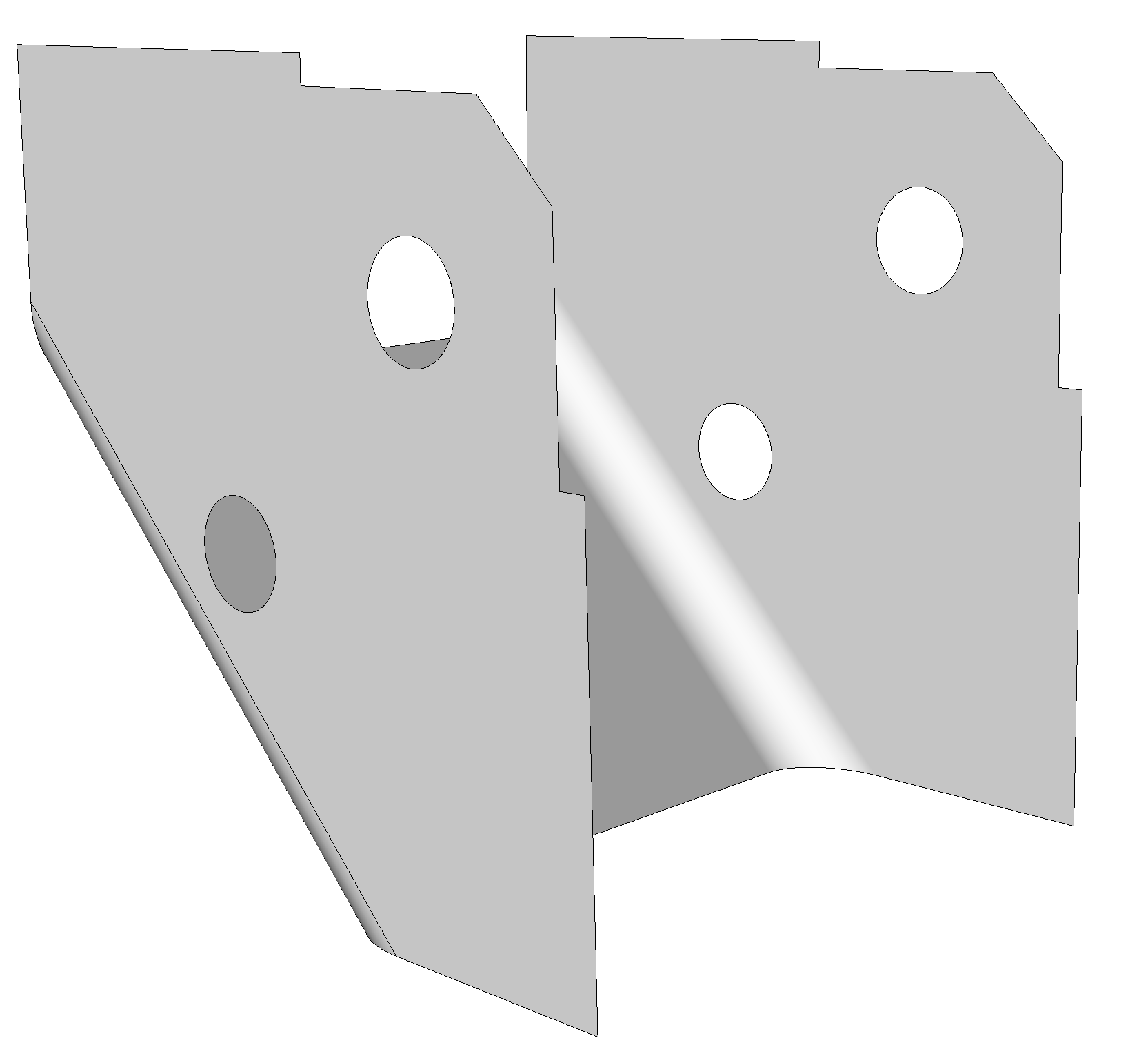}
	\caption{$g = 0, k = 5,$\\$ \chi(S) = -3$}
	\end{subfigure}
\caption{{\Bd Surfaces with different genus ($\genus$) and number of boundary components ($k$)} are depicted. Note that the Euler characteristic may be the same for topologically distinct objects.}\label{fig:surface_classification}
\end{figure}

\ifthenelse{\boolean{isELS}}
{
\subsubsection{Extensions of Basic Topology}
}
{
\subsection{Extensions of Basic Topology}
}

The only assumption made on maps in the previous section was that they are continuous. 
Often this representation is too general.
Smooth, conformal, and piecewise-linear topologies deal with maps that are additionally assumed to be differentiable, analytic (in the sense of complex analysis), and piecewise-linear, respectively.
With these additional assumptions come additional structure.
Of particular interest are constructs given in smooth topology.

A \textbf{smooth surface} $\surf$ is a surface in which every point $p \in \surf$ has a neighborhood $U(p) \subset \surf$ and an accompanying map $\phi_p:U(p) \rightarrow \mathbb{R}^2$ with the following structure.
\begin{enumerate}
	\item $U(p)$ is homeomorphic to $\phi_p\big(U(p)\big)$, which is homeomorphic  to an open ball in $\mathbb{R}^2$ if $p$ is in the interior of $\surf$, and it is homeomorphic to an open half-ball in $\mathbb{R}^2$ if it is on the boundary.
	\item If $p,q \in \surf$ with $V = U(p) \cap U(q)$, then $\phi_p \circ \phi_q^{-1} $ and $\phi_q \circ \phi_p^{-1}$ are differentiable (typically at least twice differentiable).
\end{enumerate}
A set $\big(\phi_{p_\iota},U_\iota\big)$ is called a \textbf{chart}, and a set of charts covering the surface is called an \textbf{atlas}.

A function, $f:M\rightarrow N$, between $k$-dimensional manifold $M$ and $\ell$-dimensional manifold $N$ is \textbf{differentiable} at $p\in M$ if for an atlas on $M,$ $(\phi_\iota:M\supset U_\iota \rightarrow \phi_\iota(U_\iota)\subset \mathbb{R}^k,U_\iota),$ and and atlas on $N$, $(\varphi_{\tilde{\iota}}:N\supset V_{\tilde{\iota}}\rightarrow \varphi_{\tilde{\iota}}(V_{\tilde{\iota}})\subset\mathbb{R}^\ell,V_{\tilde{\iota}}), \varphi_{\tilde{\iota}} \circ f \circ \phi_\iota^{-1}$ is differentiable at $\phi_\iota(p)$.
A smooth \textbf{immersion} is a differentiable function with a \textbf{Jacobian} (Frech\'et derivative) that is globally of full rank.
Under this representation, it has a well-defined inverse locally (by the Inverse Function Theorem).
An \textbf{embedding} is a bijective immersion.

Let $M$ denote an $n$-dimensional differentiable manifold, $p$ a point in $M$ and let $(\varphi:M\supset U\rightarrow \varphi(U)\subset\mathbb{R}^n,U)$ be a chart with $p \in U$. 
Let two curves $\gamma_1,\gamma_2:(-\epsilon,\epsilon)\rightarrow M$ be such that $\gamma_i(0) = p.$ 
The curves $\gamma_1$ and $\gamma_2$ are {\Bd defined to be} equivalent if and only if $\frac{d}{dt}(\varphi \circ \gamma_1) = \frac{d}{dt}(\varphi \circ \gamma_2),$ where $t \in (-\epsilon,\epsilon)$. 
The equivalence class of curve $\gamma$ is denoted $\gamma'(0)$. 
The \textbf{tangent space} of $M$ at $p$, denoted $T_pM$, is the set of all equivalence classes of curves passing through $p$, as seen in Figure \ref{fig:tangent_space}.
{\Bd In this sense, $\gamma^\prime(0)$ can be interpreted as a tangent vector at point $0$ on the curve $\gamma$.}
Two curves $\gamma_1,\gamma_2:(-\epsilon,\epsilon)\rightarrow \surf$ on a surface $\surf$ in which $\gamma_1(0) = \gamma_2(0) = p$ are \textbf{transversal} at $p$, written $\gamma_1 \transverse_p \gamma_2$, if  $\frac{d}{dt}(\varphi \circ \gamma_1) \neq \alpha \frac{d}{dt}(\varphi \circ \gamma_2)$ for any $\alpha \in \mathbb{R}$.

\begin{figure}
\centering
\includegraphics[trim = 0cm  0cm 0cm 0cm, clip, width=.75\textwidth]{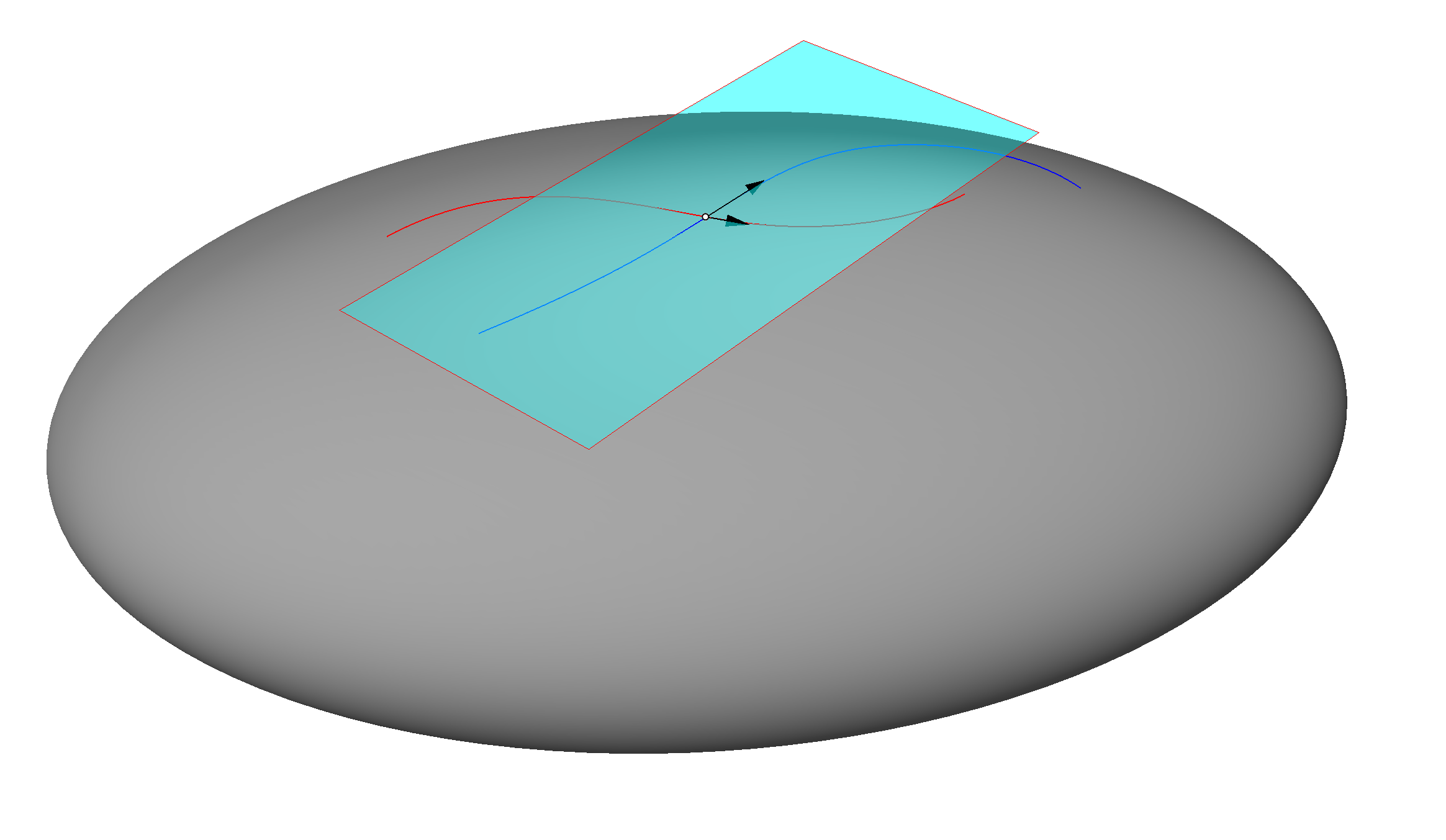}
\caption{The tangent {\Bd space} of a surface at a point is defined by the derivatives of smooth curves on the surface passing through the point. Here, two smooth, transversal curves have derivative vectors which form a basis for the tangent space.}\label{fig:tangent_space}
\end{figure}

Let $\varphi \circ \gamma(0) =: x \in \mathbb{R}^n$, and let $\eta_x := \frac{d}{dt}(\varphi \circ \gamma(t))|_{t=0}$ be the vector at $x$ of the derivative of the function composition. 
Then the $\textbf{push-forward}$ of $\eta_x$ onto $T_pM$ via $\varphi^{-1},$ denoted $(\varphi_*^{-1})_x(\eta_x):T_x\mathbb{R}^n\rightarrow T_pM,$ is given by $(\varphi_*^{-1})_x(\eta_x) = [D\varphi^{-1}(x)](\eta)_{\varphi^{-1}(x)}$, where $D\varphi^{-1}(x)$ is the Frech\'et derivative of $\varphi^{-1}$ at $x \in \mathbb{R}^n$ and the subscript $\varphi^{-1}(x) = \gamma(0) = p$ denotes that the vector is in the tangent space $T_pM$. 
Using the vector space structure in $\mathbb{R}^n$ and noting that the Frech\'et derivative is linear, $T_pM$ also inherits a vector space structure. 
More generally, if $f:M\rightarrow N$ is a differentiable function between $M$ and $N$, and $\gamma:(-\epsilon,\epsilon)\rightarrow M$ a curve with $\gamma(0) = p \in M$, then $(f_*)_{p}\big(\gamma'(0)\big) := (f \circ \gamma)'(0)$ is the \textbf{push-forward} of $\gamma'(0) \in T_pM$ to $(f_*)_{p}\big(\gamma'(0)\big) \in T_{f(p)}N$. 
Figure \ref{fig:push_forward} shows the push forward of a vector onto a manifold.

\begin{figure}
\centering
\includegraphics[trim = 0cm  0cm 0cm 0cm, clip, width=.95\textwidth]{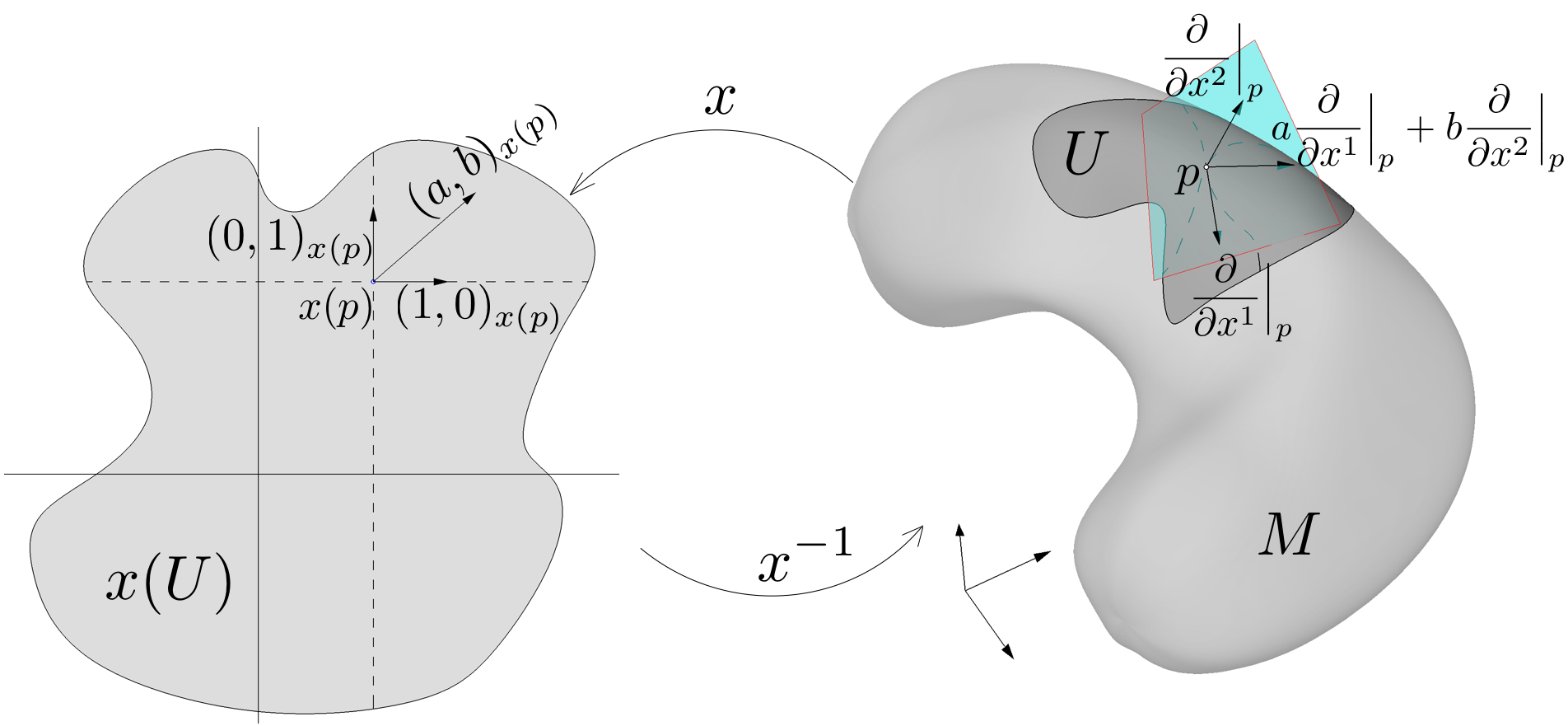}
\caption{The vector $(a,b)_x(p)$ in the tangent space at $p$ in $\mathbb{R}^2$ is pushed forward via the map $x^{-1}$ onto the manifold $M$, yielding the vector $a \frac{\partial}{\partial x^1}\Big|_p + b \frac{\partial}{\partial x^2}\Big|_p$.}\label{fig:push_forward}
\end{figure}

A continuous (or smooth) structure uniting each separate tangent space comes in the form of a tangent bundle. The \textbf{tangent bundle} of $M,$ written $TM,$ is given by a continuous projection from the disjoint union of each tangent space onto the manifold $\pi:\bigsqcup_{p \in M} T_pM \rightarrow M$ such that $\pi(\eta_p) = p$ for any $\eta_p \in T_pM$, such that addition and scalar multiplication between members in each individual tangent space  {\Bd(in the typical vector space manner)} is well-defined, and such that for every $p \in M,$ there is a neighborhood $U\ni p$ and a homeomorphism ${\Bd\hat{t}}:\bigsqcup_{{\Bd{q}} \in U}T_{\Bd q}M \rightarrow U\times\mathbb{R}^n$ which is also a vector space isomorphism from every $T_{\Bd{q}}M$ to ${\Bd{q}}\times\mathbb{R}^n$ for each ${\Bd{q}} \in U$ (called a \textbf{local trivialization}). 
If $f:M\rightarrow N$ is a differentiable map, then the \textbf{push-forward} map, $f_*:TM\rightarrow TN$ is the map defined by the union of each $(f_*)_p:T_pM\rightarrow T_{f(p)}N$ defined as in the previous paragraph.

A \textbf{vector field} is a continuous map $\eta:M\rightarrow \bigsqcup_{p \in M} T_pM$ such that $\pi(\eta(p)) = p$ for each $p \in M$, with $\pi$ the usual projection map from the tangent bundle to the base space. More generally, a \textbf{section} of a bundle is a continuous map from the manifold to the bundle---a vector field is simply a section of the tangent bundle. 
A \textbf{singularity} of a vector field is an isolated point at which the vector field is zero.
The Poincar\'e-Hopf Theorem implies that every vector field on a closed surface must have singularities if it is not a topological torus.

Accompanying each vector space, $V,$ is a dual space $V^*$ for which members are linear functionals on members of $V$ (i.e. $V^* \ni \alpha:V \rightarrow \mathbb{R}$). Using this construction, we write the dual to the tangent space $T_pM$ as $T_p^*M$. Following a similar construct as for the tangent bundle, the \textbf{dual bundle} is written as $T^*M$, and is given by the objects $(\bigsqcup_{p \in M} T_p^*M, \pi', M)$ in conjunction with vector addition and scalar multiplication on each $T_p^*M$. Here, $\pi':\bigsqcup_{p\in M} T_p^*M\rightarrow M$ acts as $\pi$, and ${\Bd\hat{t}}':\bigsqcup_{{\Bd{q}} \in U}T_{\Bd{q}}^*M \rightarrow U\times(\mathbb{R}^n)^*$ is a homeomorphism, and some isomorphism $w':(\mathbb{R}^n)^*\rightarrow \mathbb{R}^n$ is canonically chosen so that $(\text{Id}\times w') \circ t':U\times\mathbb{R}^n$ is a local trivialization. 
 A \textbf{covariant tensor} of order $m$ is a multilinear map $TM_1\times\dots\times TM_m=:(TM)^m\rightarrow \mathbb{R}$ defined on a tensor bundle.
Here, the tensor bundle is defined in a manner analogous to the definition of the covector bundle.
The space of covariant tensors is written as $\mathcal{T}^m(M)$.
Similarly, a \textbf{contravariant tensor} of order $m$ is a multilinear  map  map $T^{*}M_1\times\dots\times T^*M_m=:(T^*M)^m\rightarrow \mathbb{R}$ with space of contravariant tensors written $\mathcal{T}_m(M),$ and a \textbf{mixed tensor} a multilinear map $(TM)^k\times(T^*M)^\ell\rightarrow\mathbb{R},$ whose space is written $\mathcal{T}^k_\ell(M).$ 

Because the push-forward is a well-defined map, if $f:M\rightarrow N$ is differentiable with $(f_*)_p:T_pM\rightarrow T_{f(p)}N$ each a linear transformation, the dual on each tangent space may be defined and denoted as $(f^*)_p:T_{f(p)}^*N \rightarrow T_p^*M.$ If $\omega:N\rightarrow \bigsqcup_{q \in N} T^*_qN$ is a section of the cotangent bundle $T^*N$, then the \textbf{pull-back} of $\omega$ to $T^*M,$ denoted $f^*(\omega)$, is defined point-wise as $(f^*)_p\Big(\omega\big(f(p)\big)\Big)\circ (f_*)_p$, and is a section of $T^*M$. Similarly, for $A \in \mathcal{T}^k(N)$ is a covariant $k$-tensor, the \textbf{pull-back} $f^*(A) \in \mathcal{T}^k(M)$ is defined point-wise by $\big[\big(f^*(A)\big)(p)\big](X_1(p),\dots,X_k(p)\big) = A\big(f(p)\big)\Big((f_*)_p\big(X_1(p)\big),\cdots,(f_*)_p\big(X_k(p)\big)\Big),$ with $\{X_j\}_{j=1}^k$ sections of $TM$.

Finally, given two vector fields $X$ and $Y$, the \textbf{Lie bracket} $[X,Y]$ is the derivative of $Y$ along the vector field $X$.
If $[X,Y]$ is zero, then locally there is a well-defined coordinate system on the surface $\surf$ defined by the integral curves of $X$ and $Y$ \cite[pp.~158]{Spivak:1999v1}

\ifthenelse{\boolean{isELS}}
{
\subsubsection{Differential Geometry}
}
{
\subsection{Differential Geometry}
}

\begin{figure}
\centering
\begin{subfigure}{0.42\textwidth}
\includegraphics[trim = 0cm  0cm 0cm 0cm, clip, width=\textwidth]{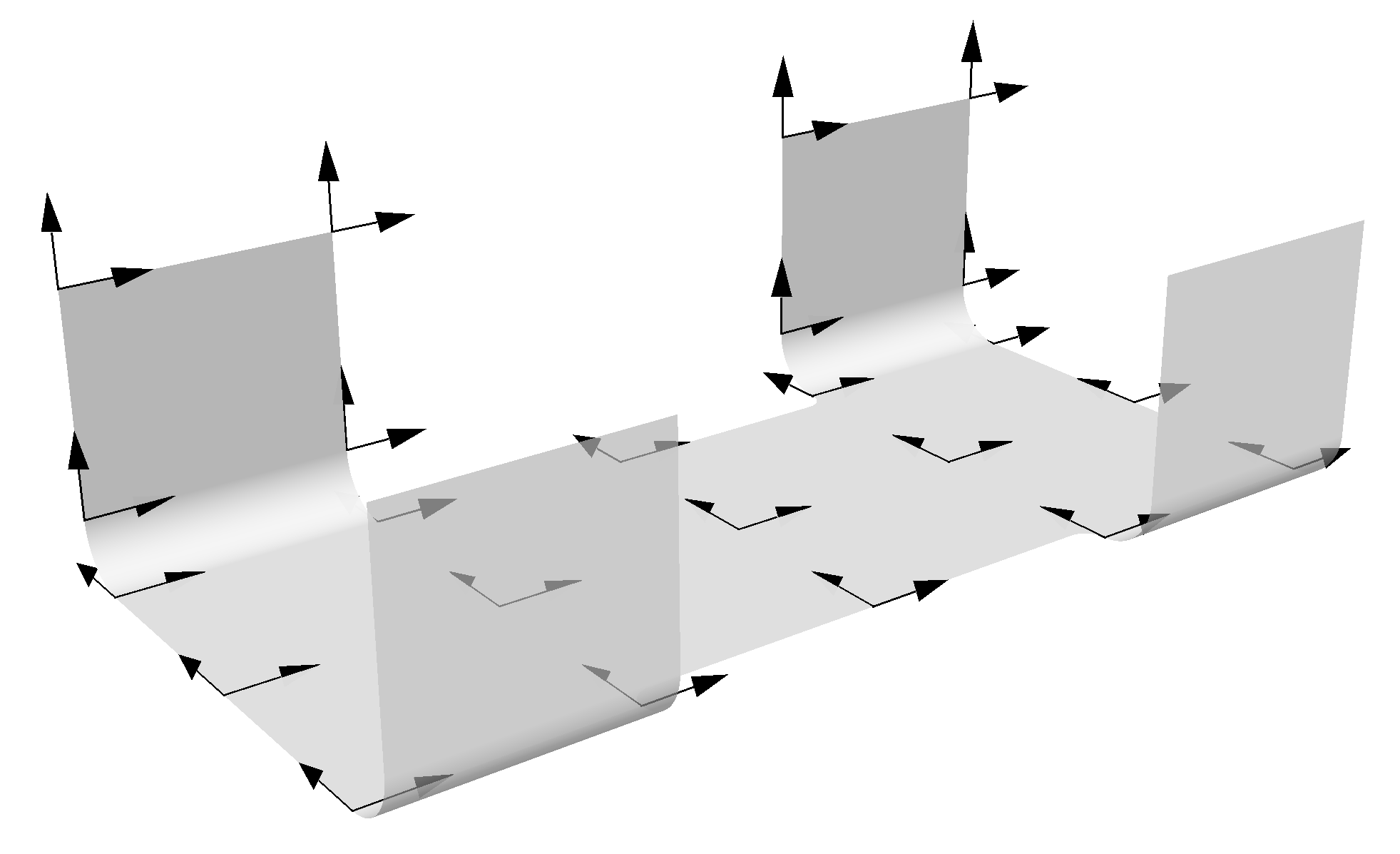}
\caption{Riemannian Metric Tensor}
\end{subfigure}
\hspace{1em}
\begin{subfigure}{0.455\textwidth}
\includegraphics[trim = 0cm  0cm 0cm 0cm, clip, width=\textwidth]{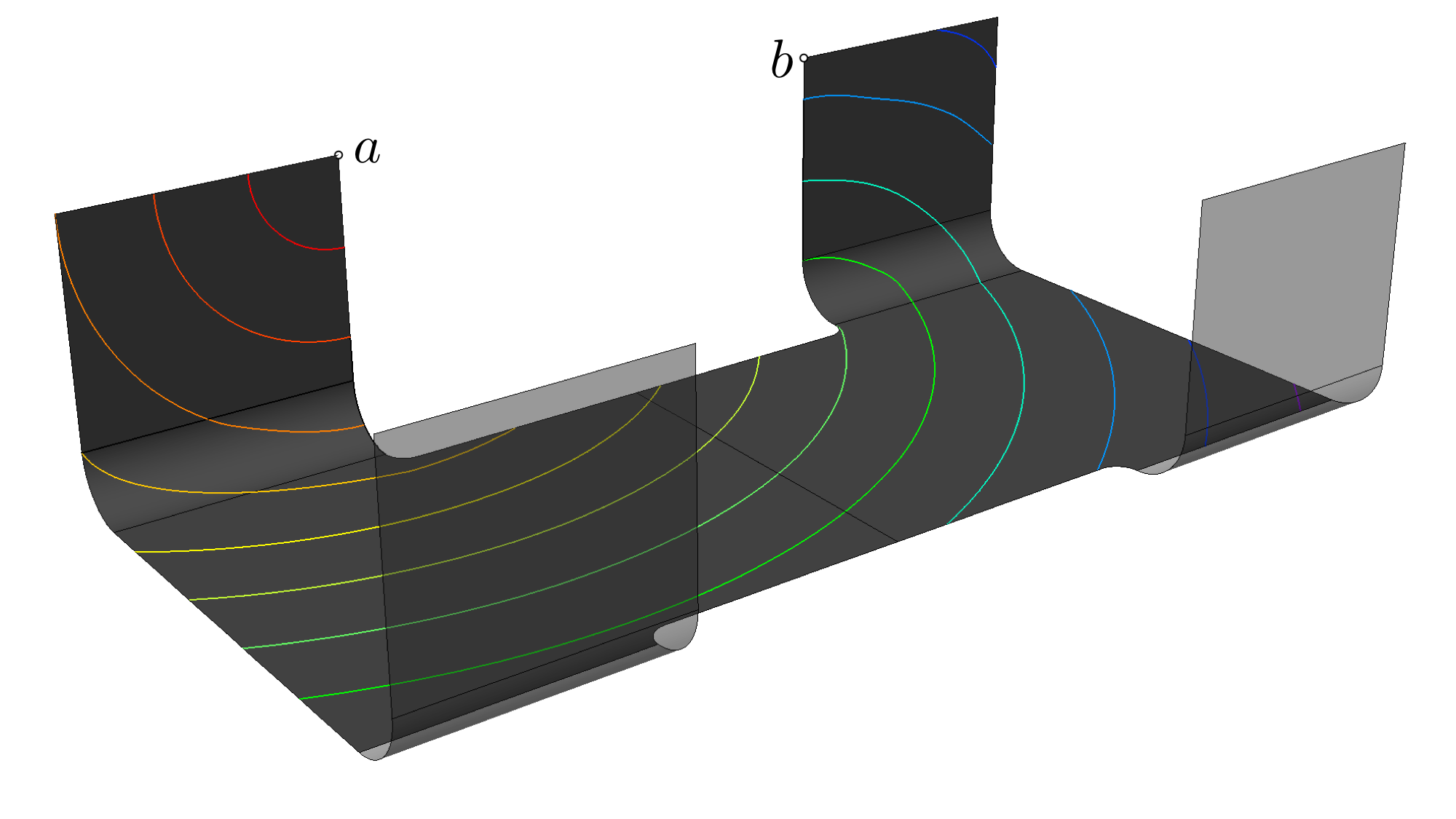}
\caption{Induced Surface Metric}
\end{subfigure}
\caption{In (a), {\Bd a set of unit-length orthogonal vectors under the Reimannian metric induced by $\mathbb{R}^3$ are depicted on the surface}. In (b), the {\Bd distance metric on the surface induced by the Riemannian metric of (a)} is depicted by contours of increasing distance from point $a$.}\label{fig:riemannian_metric}
\end{figure}

Of primary interest in geometry (and really, the defining object of ``geometry'') is the \textbf{metric tensor}: a symmetric, positive definite member of $\mathcal{T}^2(M).$ It is frequently denoted as $g$ or $\langle \cdot,\cdot \rangle$. 
The metric tensor generalizes the idea of an inner product onto the manifold, yielding an inner-product on each tangent space $T_pM$ of the manifold. 

For a given Riemannian metric $\emptymetric$ on an arbitrary path-connected manifold $M$, define the \textbf{length} of a piecewise-smooth curve $\gamma:\mathbb{I}\rightarrow M$ by
\begin{equation}\label{eq:curve_length}
	L(\gamma) = \int_0^1 \Big(\metric{\gamma^{\prime}(t)}{\gamma^\prime(t)}\Big)^{\frac{1}{2}}dt.
\end{equation}
The Riemannian metric induces a {\Bd distance} metric on the manifold, $\distancemetric:M\times M \rightarrow \mathbb{R}$.
For points $p,q \in M$ and the set of all curves $\{\gamma_\iota\}_\iota$ in which $\gamma_\iota(0) = p, \gamma_\iota(1) = q$, the distance is defined by {\Bd
\begin{equation}\label{eq:distance_metric}
	\distancemetric(p,q) = \inf_{\gamma \in \{\gamma_\iota\}_\iota} L(\gamma).
\end{equation}
}
{\Bd Despite their similar names, a Riemannian metric tensor and its induced distance metric on the surface are very different objects, as} shown in Figure \ref{fig:riemannian_metric}.

When $i:S \rightarrow\mathbb{R}^n$ is an immersion of surface $S$ in Euclidean space, the Euclidean metric tensor of $\mathbb{R}^n, \langle \cdot,\cdot\rangle_\mathbb{E}$ induces a metric tensor on $S$ via the pull-back of the immersion, $\mathrm{I} :=  i^*  \langle \cdot,\cdot\rangle_\mathbb{E}$---this is called the \textbf{first fundamental form}. 

\begin{figure}
\centering
\includegraphics[trim = 0cm  0cm 0cm 0cm, clip, width=.95\textwidth]{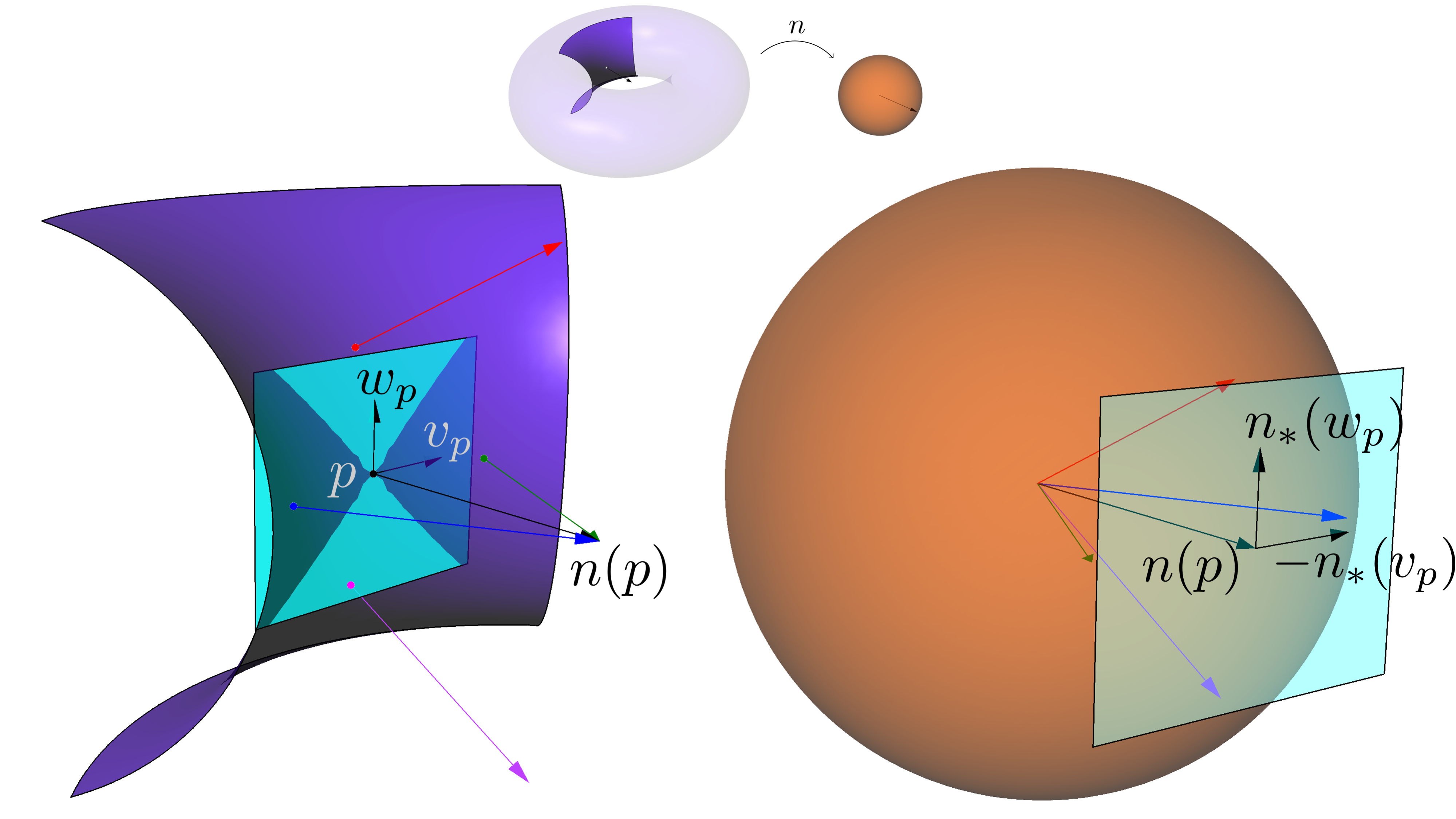}
\caption{The Gauss map is depicted on a torus. Above, a torus is mapped to the unit sphere via it's normal. A particular subsection is highlighted, which is enlarged in the bottom left. Vectors in various colors depict normal vectors, which are mapped to their respective points in the sphere on the right. The pushforward of vectors $v_p$ and $w_p$ in the tangent space of $p$ (used to define the Weingarten map) is also depicted.}\label{fig:gauss_map}
\end{figure}

An  orientable, differentiable surface $\surf$ embedded in $\mathbb{R}^3$ will have a well-defined normal, $n_p$ at each point $p \in S,$ which is continuous with normals in its neighborhood. The \textbf{Gauss map} for the surface is given by $n:\surf\rightarrow\mathbb{S}^2 \subset \mathbb{R}^3$ with $p\mapsto n_p$, as depicted in Figure \ref{fig:gauss_map}. 
Let $v_p \in T_pS,$ so $n_*(v_p) \in T_{n(p)}\mathbb{S}^2$. 
However, because we are in $\mathbb{R}^3$ and $n_p$ is the unit normal at both $p \in S$ and $n(p) \in \mathbb{S}^2$, their tangent planes may be identified via $\sim:T_{n(p)}\mathbb{S}^2\rightarrow T_{p}S, v_{n(p)} \mapsto v_p,$ where both are represented in a common basis for $\mathbb{R}^3$. Then the \textbf{Weingarten} map $w_n:TS\rightarrow TS$ is defined by $w_n(v_p) = \sim\big(n_*(v_p)\big)$. The \textbf{second fundamental form} $\rm{II} \in \mathcal{T}^2(S)$ is defined by
\begin{equation}
\mathrm{II}(p)(u_p,v_p) := -\mathrm{I}(p)\big(w_n(u_p),v_p\big) = -i^*\langle w_n(u_p),v_p\rangle_\mathbb{E}
\end{equation}
    If $\gamma:(-\epsilon,\epsilon)\rightarrow i(S)$ is a curve parameterized by arclength and $c(0) = p \in i(S), c'(0) = X \in T_p(S),$ then $\mathrm{II}(p)(X,X) = \langle c''(0),n(p)\rangle_{\mathbb{E}}$---the second fundamental form is the signed curvature at $p$ of the curve given by $U_p \cap N$, where $U_p \subset S$ is a neighborhood of $p$ and $N\subset \mathbb{R}^3$ is the plane intersecting $X$ and $ n(p)$ (see \cite[pp.~123]{Spivak:1999v2}). 
    Furthermore, the second fundamental form is symmetric, so for an orthonormal basis $X_p,Y_p$ of $T_pS$, the matrix representation of $\mathrm{II}(p)\big(X_p,Y_p\big)$ is symmetric. 
    As a result, eigenvalues $\kappa_1(p),\kappa_2(p)$ of of this matrix (called the \textbf{principal curvatures}) can be extracted, describing the maximal and minimal signed curvatures of curves in normal planes at $p,$ with orthogonal eigenvectors describing the directions. 
    The value of the function $\kappa:S\rightarrow \mathbb{R}, p\rightarrow \kappa_1(p)\cdot\kappa_2(p)$ is called the \textbf{Gaussian curvature} of $S$ at $p$. Depictions of negative and positive Gaussian curvature are represented in Figure \ref{fig:gaussian_curvature}.

\begin{figure}
\centering
\begin{subfigure}{0.41\textwidth}
\includegraphics[trim = 0cm  0cm 0cm 0cm, clip, width=\textwidth]{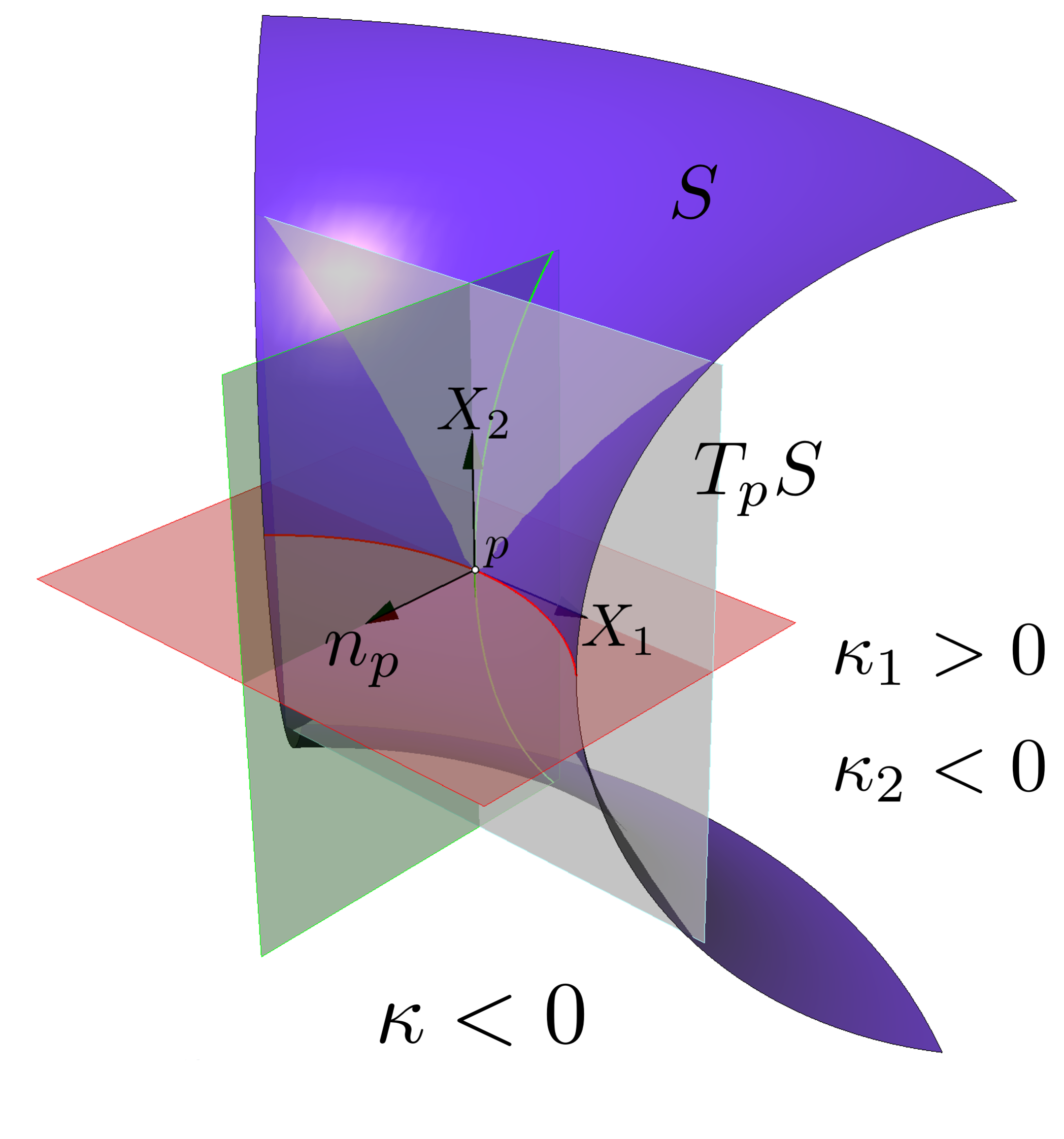}
\caption{Negative Gaussian Curvature}
\end{subfigure}
\begin{subfigure}{0.505\textwidth}
\includegraphics[trim = 0cm  0cm 0cm 0cm, clip, width=\textwidth]{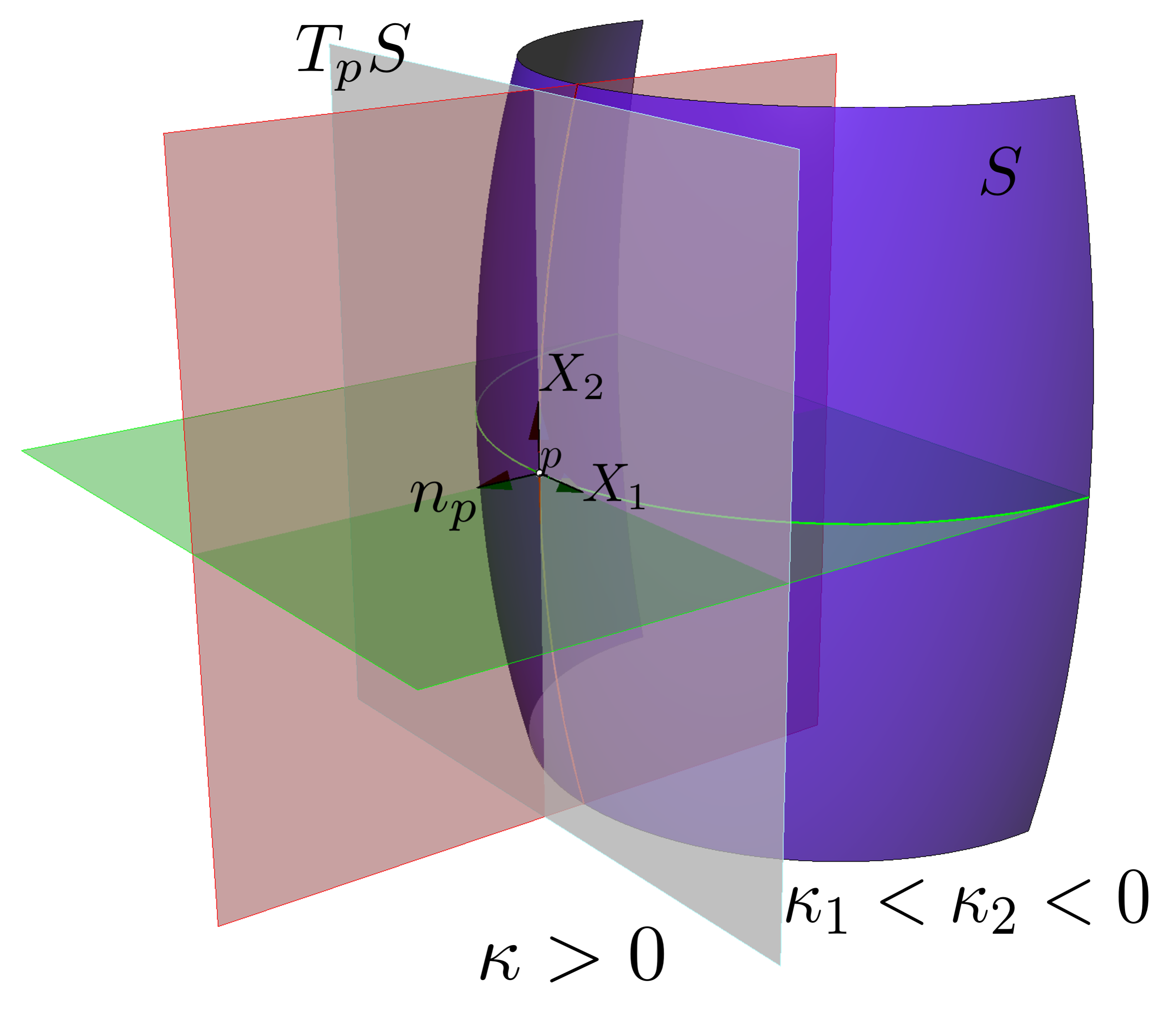}
\caption{Positive Gaussian Curvature}
\end{subfigure}
\caption{Depictions of principal and Gaussian curvatures at points $p$ are given. 
Here, principal curvatures established by intersections of normal planes in red and green with the surface, and are shown as curves in red and green. Gaussian curvature is the product of these principal curvatures. Note that for the negative curvature case, osculating circles have opposite directions, while in the positive curvature case, both curvature directions are the same.}\label{fig:gaussian_curvature}
\end{figure}

Finally, while this entire discussion hinged on an embedding in $\mathbb{R}^3$, Gauss's Theorema Egregium states that Gaussian curvature is invariant under local \textbf{isometries} (metric-preserving diffeomorphisms). 
Thus all that is required is a local isometry of a surface with metric in $\mathbb{R}^n$ (Gaussian curvature may be defined in larger codimensions \cite[pp.~191--194]{Spivak:1999v2}) to discuss it's Gaussian curvature. 
Such global isometries are guaranteed by the Nash Embedding Theorem, 
and local isometries are guaranteed by the Burstin-Janet-Cartan Theorem. 
Gaussian curvature will play a fundamental role in the parameterization and quad-layout extraction of a surface.

A neighborhood in a surface is called \textbf{flat} if it isometrically embeds in $\mathbb{R}^2$.
Each point $p$ in a flat neighborhood has zero Gaussian curvature \cite[pp.~178--179,190--191]{Spivak:1999v2}. 
If the Gaussian curvature on a surface induced by a metric is flat everywhere, it is called a \textbf{flat metric}. 

\begin{figure}
\centering
\includegraphics[trim = 0cm  0cm 0cm 0cm, clip, width=.65\textwidth]{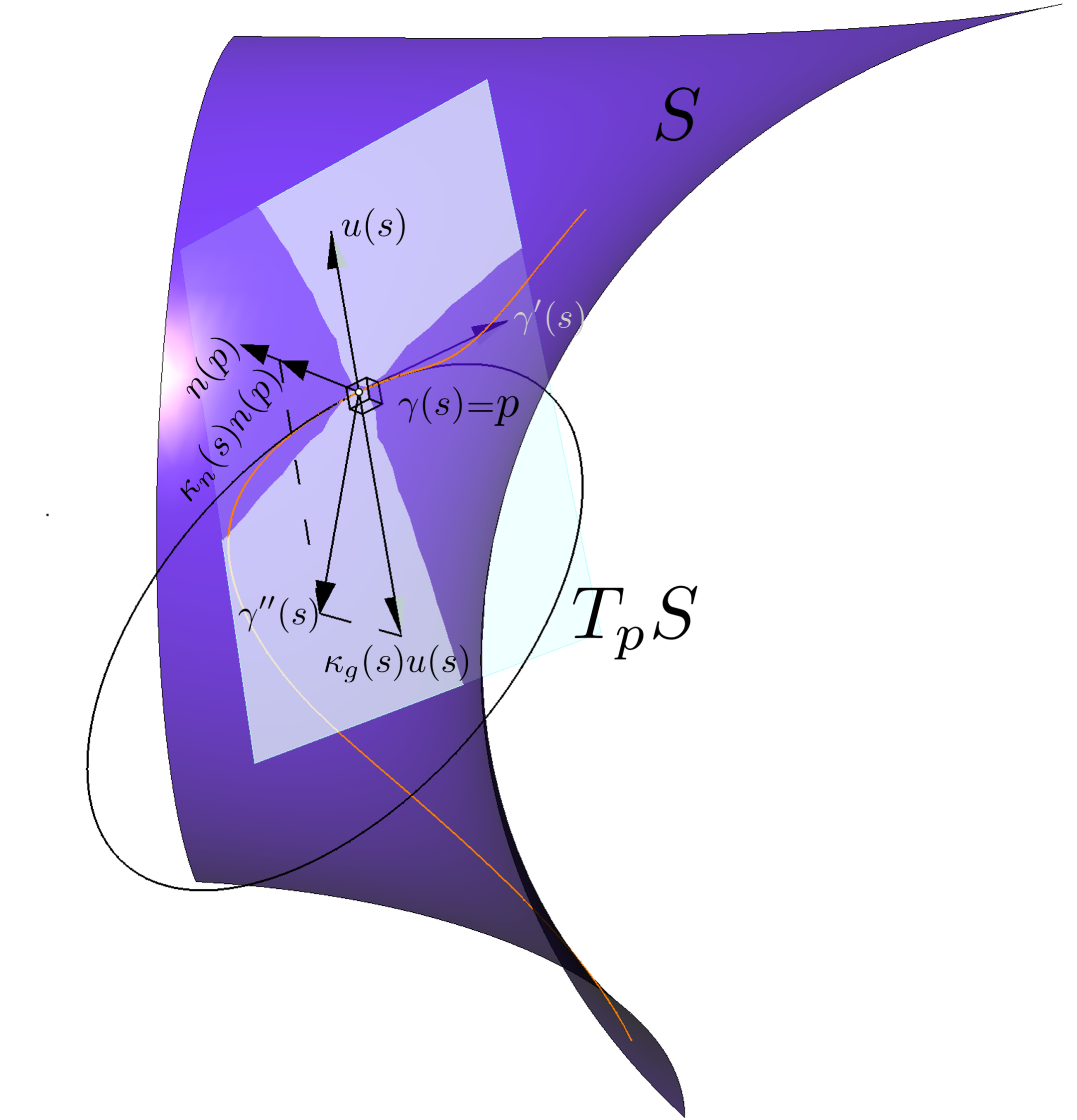}
\caption{Geodesic $(\kappa_g)$ and normal $(\kappa_n)$ curvatures of a curve, $\gamma$, smoothly embedded on a surface and parameterized by arclength are depicted. 
Here, $\gamma^\prime$ is tangent to the curve, while  $\gamma^{\prime \prime}$ points to the center of the osculating circle. 
The unit normal at point $\gamma(s) = p\in \surf, s \in \mathbb{I}$ is given by $n(p)$. 
The unit binormal at $p$ is given as $u(s) = n(p) \times \gamma^\prime(s).$
The normal curvature of the curve at $s$ is given by $\kappa_n(s)$, while the geodesic curvature of $\gamma$ at $s$ is given by $\kappa_g(s)$.}\label{fig:normal_geodesic_curvature}
\end{figure}

A curve $\gamma$ immersed in a surface immersed in $\mathbb{R}^3$ has inherent curvature which can be expressed as a combination of its normal and geodesic curvatures.
Assume $\gamma:\mathbb{I}\rightarrow \surf$ is parameterized by arclength, with $s \in \mathbb{I}$.
Under the induced Euclidean metric $\emptyflatmetric$, $n$ is the unit normal map and $\gamma^\prime(s)$ is the vector tangent to $\gamma$ at $s$.
 Then  $\gamma^{\prime \prime}(s)$ is a vector orthogonal to both $\gamma^{\prime}(s)$ and $n(s)$ in the direction of the center of the osculating circle at $\gamma(s)$ and $\frac{1}{|\gamma^{\prime \prime}(s)|}$ is the radius of the osculating circle.
Let $\gamma(s) = p \in \surf$.
Take $u(s)$ as the unit length binormal given by $n(p) \times \gamma^\prime(s)$.
Then the \textbf{normal curvature} of $\gamma$ at $s$ is given by 
\[
	\kappa_n(s) = \flatmetric{\gamma^{\prime \prime}(s)}{n(p)}.
\]
The \textbf{geodesic curvature} of $\gamma$ at $s$ is given by
 \[
	\kappa_g(s) = \flatmetric{\gamma^{\prime \prime} (s)}{ u(s)}.
\]
These are both pictorially represented in Figure \ref{fig:normal_geodesic_curvature}.
(As with Gaussian curvature, geodesic and normal curvature can be defined for an arbitrary Riemannian metric without embedding in $\mathbb{R}^3$.)

Despite appearing to be concepts completely related to a particular embedding in Euclidean space, Gaussian and geodesic curvatures are intrinsically related to the topology of the surface via the \textbf{Gauss-Bonnet Theorem}, which states that for surface $S$ with (possibly empty) boundary $\partial S$,
\begin{equation}\label{eq:Gauss_Bonnet}
\int_S \kappa dS + \int_{\partial S} \kappa_g d(\partial S) = 2\pi\chi(S),
\end{equation}
where $\chi(S)$ is the Euler Characteristic of the surface. Thus a valid metric must necessarily satisfy the Gauss-Bonnet Theorem.

A \textbf{geodesic} between points $p$ and $q$ is a critical point of the energy
\begin{equation}\label{eq:geodesic}
E(\gamma) = L(\gamma)
\end{equation}
where $\gamma:\mathbb{I}\rightarrow M$ has $\gamma(0)=p,\gamma(1) = q$.
One such geodesic is a shortest path between $p$ and $q$ in the specified metric. 
The \textbf{parallel translation} of a vector  $V_0 \in T_pM$ to $V_1 \in T_qM$ is the representation of of $V_1$ in $T_qM$ in which lengths and angles (inherent from the metric tensor) are preserved as measured from a geodesic between points $p$ and $q$ (see Figure \ref{fig:parallel_translation}).
The \textbf{Levi-Cevita connection} is the set of bijective linear maps, $\tau_t:T_{\gamma(0)}M\rightarrow T_{\gamma(1)}M$ induced by parallel translation such that $\tau_t(V_0) = V_t$.  
For a curve $\gamma$ with $\gamma(0) = p \in M, \gamma^{\prime}(0) = X_p$, the \textbf{covariant derivative}  of the vector field $Y$ along $\gamma$ is
\[
	\nabla_{X_p}Y = \lim_{h\rightarrow 0} \frac{1}{h}\big(\tau_h^{-1} Y_{\gamma(h)} - Y_p\big).
\]
Computationally, it is represented using Christoffel symbols.

\begin{figure}
\centering
\includegraphics[trim = 0cm  0cm 0cm 0cm, clip, width=.5\textwidth]{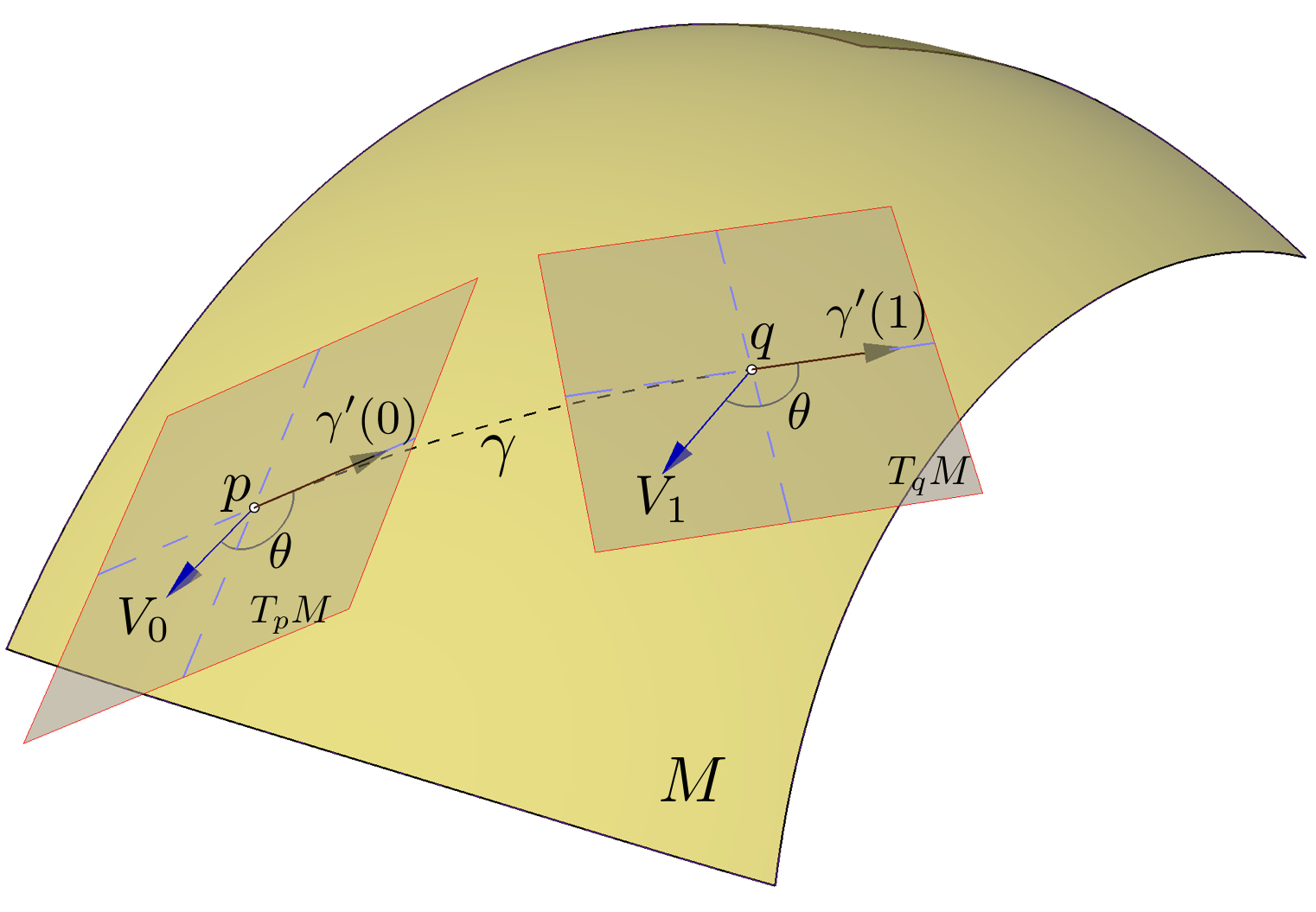}
\caption{Given a vector in the tangent plane of a surface and a geodesic, $\gamma$, parallel translation of this vector into a different tangent plane  under the Levi-Cevita connection maintains vector length and angle (with respect to the tangent vector of the geodesic curve) in the new tangent plane under the same Riemannian metric.}\label{fig:parallel_translation}
\end{figure}

One final tool is necessary for the purposes of this paper. A piecewise smooth loop $\omega:\mathbb{I}\rightarrow M$ may be defined as the concatenation $\omega = \gamma_1 \cdot (\dots) \cdot \gamma_n,$ with the precise parameterization (i.e. placement of parentheses above) being extraneous for the following purposes. Then one may define $P_\omega:T_pM\rightarrow T_pM$ as the (invertible, linear) map defined by parallel translation in a loop through concatenation of geodesics. The \textbf{holonomy group} at $p \in M$ defined by the Levi-Cevita connection is
\begin{equation}
\text{Hol}_p = \{P_\omega: \omega \text{ is a loop based at } p\}
\end{equation}
The holonomy group at a point is related to the curvature encompassed by the closed loops $\gamma$ via the \textbf{Ambrose-Singer Theorem} \cite{ambrose:1953}.
In Figure \ref{fig:sphere_holonomy}, one can see that parallel translation along the three depicted geodesics yields a rotation by $\frac{\pi}{2}$ in the tangent space; this rotation was induced by the curvature of the domain encompassed by the path.
The \textbf{reduced holonomy group} of the Levi-Cevita connection is
\begin{equation}
\text{Hol}_p^0 = \{P_\omega: \omega \text{ is a contractible loop based at } p\}.
\end{equation}
A metric is flat if and only if $\text{Hol}_p^0$ is trivial, meaning that $P_\omega$ is the identity map for any $p \in M$ \cite[pp.~283]{Besse:1987}.

\begin{figure}
\centering
\includegraphics[trim = 0cm  0cm 0cm 0cm, clip, width=.5\textwidth]{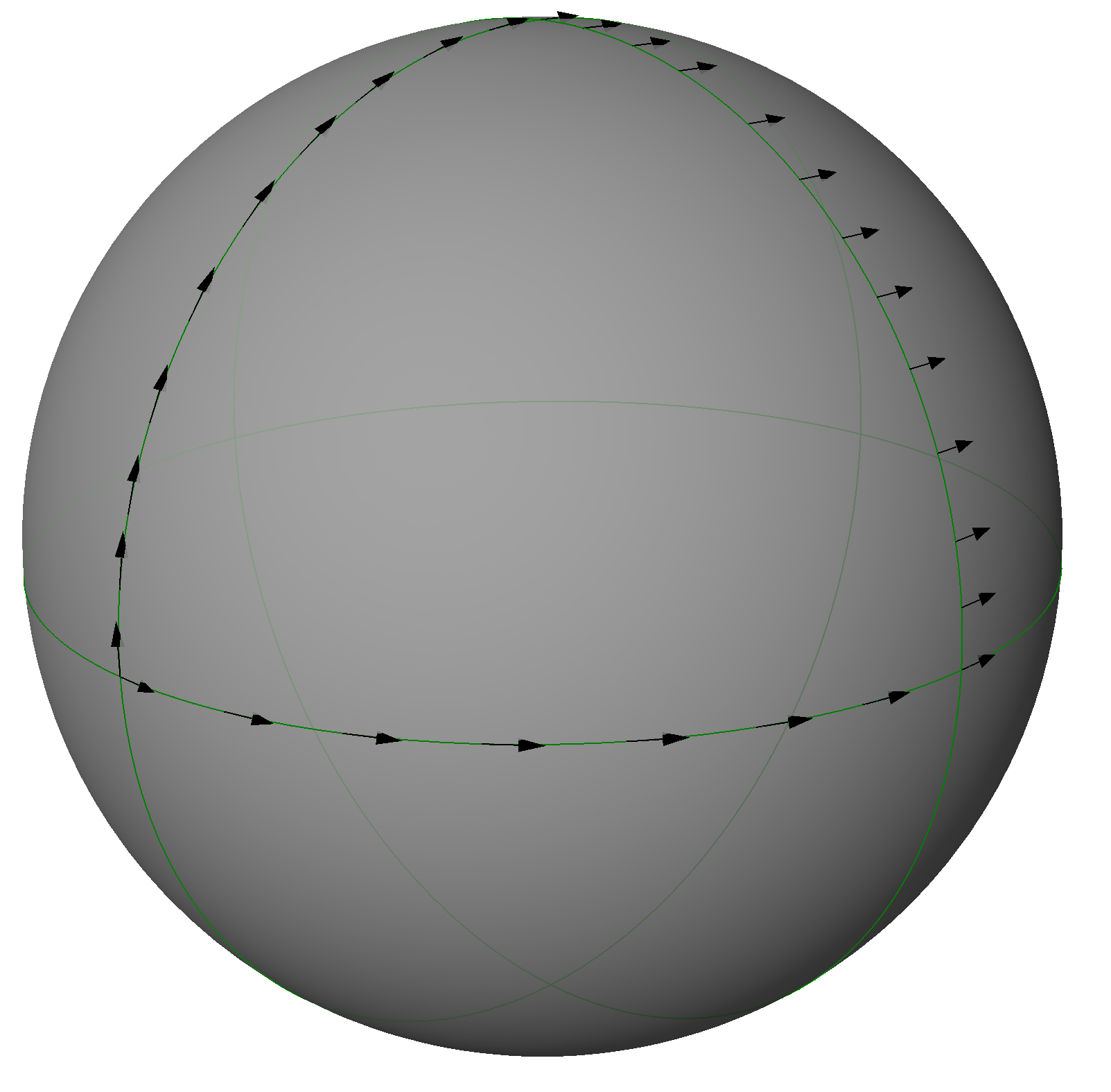}
\caption{A sphere is embedded in $\mathbb{R}^3$ takes the induced Euclidean metric.
Under the induced Levi-Cevita connection, a vector is parallel translated in a closed loop on the sphere.
Upon return to the base point, this vector has rotated by $\frac{\pi}{2}$ radians.
As such, $\frac{\pi}{2}$ is a member of the holonomy group of the surface at this base point. }\label{fig:sphere_holonomy}
\end{figure}

\ifthenelse{\boolean{isELS}}
{
\subsection{Appendix}\label{sec:appendix}
}
{
\section{Appendix}\label{sec:appendix}
}
\ifthenelse{\boolean{isELS}}
{
\subsubsection{Proof of Lemma \ref{lem:existence}}
}
{
\subsection{Proof of Lemma \ref{lem:existence}}
}

We make use of the following lemma.
\begin{lemma}\label{lem:generators}
For a surface $S$ of genus $g$ with $\boundarycomp$ boundary components, there is a set of curves cutting $S$ into a simply-connected surface in which the intersection of the curves with $\partial S$ is discrete.
\end{lemma}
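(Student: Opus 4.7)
The plan is to invoke the classification of surfaces (Theorem \ref{th:classification}) to reduce $S$ to a standard model, on which explicit cutting curves are visible from a polygonal representation, and then transport these curves back to $S$ via the homeomorphism.

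First I would handle the case $k = 0$. By the Classification of Surfaces, a closed orientable genus $\genus$ surface is homeomorphic to a $4\genus$-gon with sides identified in the standard pattern $a_1 b_1 a_1^{-1} b_1^{-1} \cdots a_\genus b_\genus a_\genus^{-1} b_\genus^{-1}$. The $2\genus$ curves that are the images of the sides under the identification map all share a single basepoint $p_0$, and their complement in $S$ is the interior of the polygon, i.e., an open disk. Since $\partial S = \emptyset$ in this case, the boundary-intersection condition is vacuous.

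Next I would handle $k \geq 1$ inductively by reducing to the case of one boundary component. Pick a base point $p_0$ on $\partial S$ and choose $k-1$ embedded arcs $\alpha_1, \ldots, \alpha_{k-1}$, each with both endpoints on $\partial S$, such that $\alpha_i$ joins the boundary component containing $p_0$ to the $i$th other boundary component, and so that the arcs are pairwise disjoint except possibly at $p_0$. Such arcs exist by path-connectedness and transversality (perturb any joining path to be embedded and transverse to $\partial S$, which makes its boundary intersection finite and, with endpoints chosen on $\partial S$, discrete). Cutting along these arcs yields a surface $S'$ of genus $\genus$ with a single boundary component. Now $S'$ has free fundamental group of rank $2\genus$ (equivalently, $S'$ deformation-retracts to a wedge of $2\genus$ circles), and the same polygonal construction as in the closed case — applied to the standard $(4\genus{+}1)$-gon model for a genus $\genus$ surface with one boundary — produces $2\genus$ loops based at $p_0$ whose union cuts $S'$ into a disk. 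Pulling these $2\genus$ loops back to $S$ together with the $\alpha_i$ gives the desired $2\genus + k - 1$ curves.

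The final step is to verify the discrete intersection condition. Each $\alpha_i$ meets $\partial S$ only at its two endpoints by construction, and each of the $2\genus$ loops meets $\partial S$ only at $p_0$; both sets of intersection points are finite and thus discrete. Cutting along all of them simultaneously produces a single disk because each cut further simplifies the topology exactly as in the polygonal model.

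The main obstacle is really just bookkeeping: ensuring that the $k-1$ arcs can be chosen simultaneously disjoint (away from $p_0$) and that, after cutting along them, the reduced surface admits the classical single-polygon decomposition. Both facts are standard consequences of the Classification Theorem together with transversality arguments, but stating them carefully (especially when several arcs share the endpoint $p_0$) is where a careful proof needs to spend its words. Everything else is either an appeal to the classification or an easy perturbation argument.
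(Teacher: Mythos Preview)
Your argument is correct and follows essentially the same route as the paper: both use the Classification of Surfaces and the standard polygonal model to exhibit $2\genus$ loops together with $k-1$ arcs connecting boundary components, and both then read off simple-connectedness of the complement and discreteness of the boundary intersection directly from that model. The only organisational difference is that the paper packages the boundary-connecting arcs via the quotient space $\surf/\partial\surf$ and its fundamental group, whereas you first cut along the $\alpha_i$ to reduce to a single boundary component and then apply the polygonal model; the underlying curves and the appeal to Theorem~\ref{th:classification} are the same.
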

\begin{proof}
If $\surf$ is a topological $2$-sphere with no boundaries, we are done. Then assume that either $\genus$ or $\boundarycomp > 0.$
 
Let $\surf$ be of genus $\genus$ with $\boundarycomp$ boundary components.
If $k \geq 1$ pick $p \in \partial \surf;$ otherwise, pick $p$ arbitrarily.
The presentation of the fundamental group $\pi_1(S, p)$ at basepoint $p$ is given by 
\[
	 \pi_1(\surf, p) = 
	\begin{cases}
		\langle a_1,\dots,a_{\genus},b_1,\dots,b_{\genus} | [a_1,b_1]\dots[a_\genus, b_\genus] = 1 \rangle & \text{ if } \boundarycomp = 0\\
		\langle a_1,\dots,a_{\genus},b_1,\dots,b_{\genus}\rangle & \text{ if } \boundarycomp = 1\\
		\langle a_1,\dots,a_{\genus},b_1,\dots,b_{\genus},c_1,\dots,c_{\boundarycomp-1}\rangle & \text{ if } \boundarycomp > 1
	\end{cases}
\]
where $1$ is the identity element of the group and $[a_i,b_i] = (a_1b_1a_1^{-1}b_1^{-1})$ is the commutator of $a$ and $b$. 
Let $\surf/\partial \surf$ be the quotient space topology with quotient map $q:\surf\rightarrow \surf/\partial \surf$.
Write $\surf_g$ as an arbitrary a closed surface of the same genus as $\surf$.
Then 
\[
	\surf/\partial \surf \simeq 
		\begin{cases}
			\surf & \text{ if } k = 0,\\
			\surf_g \vee \Big(\bigvee_{i=2}^k \mathbb{S}^1\Big) & \text{ if } k \geq 1.
		\end{cases}
\]
where $\vee$ represents the wedge sum (see \cite{Hatcher} p. 10) and $\mathbb{S}^1$ is the one-sphere.
Thus, because the fundamental group respects homotopy equivalences, 
\[
	 \pi_1\big(\surf/\partial \surf, q(p)\big) = 
	\begin{cases}
		\langle a_1,\dots,a_{\genus},b_1,\dots,b_{\genus} | [a_1,b_1]\dots[a_\genus,b_\genus] = 1 \rangle & \text{ if } \boundarycomp \leq 1\\
		\langle a_1,\dots,a_{\genus},b_1,\dots,b_{\genus}, d_1,\dots,d_{\boundarycomp-1} | [a_1,b_1]\dots[a_\genus, b_\genus] = 1\rangle & \text{ if } \boundarycomp > 1.
	\end{cases}
\]
Because the quotient map is the identity away from $\partial \surf$, the generators $a_i,b_i$ can be represented as curves with the same image in both $\surf$ and $\surf / \partial \surf$ for $i = 1,\dots, \genus$.
 For $d_j$, the loops of $\surf/\partial \surf$ have preimages which are curves in $S$ connecting the boundary components. 
 Denote curve corresponding to the preimage of $d_j$ under the quotient map as $q^{-1}(d_j)$. Then by construction the set
\[
	\hat{\Gamma} :=
		\left\{a_i\right\}_{i=1}^\genus \cup
		\left\{b_i\right\}_{i=1}^\genus \cup
		\left\{q^{-1}(d_j)\right\}_{j=1}^{k-1}.\footnote{Here, $\Gamma$ represents basic nonzero elements of the groupoid $\pi_1(\surf,\partial \surf)$, but an introduction of relative homotopy groups is outside the scope of this paper.}
\]
 cuts $\surf$ into connected components $\surf_\ell$ such that for each $\ell, \pi_1(\surf_\ell) = 0$. 
Using the Classification of Surfaces (Theorem \ref{th:classification}) to equivalently represent $\surf$ as the (typical) quotient space of a $4\genus$-sided polygon with $\boundarycomp$ holes (see e.g. \cite{Hatcher} p. 5), it is easy to see that $\hat{\Gamma}$ can be chosen to enforce $\surf - \hat{\Gamma}$ is one connected component and discrete boundary intersection.
 Here, if $\genus = 0$, take this polygon as a disk with $\boundarycomp \geq 1$ boundary components and without any quotient space topology. 
\end{proof}

As before, define $\mathring{\singpts} := \singpts - \partial \surf$.
Because $\surf - \mathring{p}$ is homotopy equivalent to a surface of genus $\genus$ and with $(\boundarycomp + \#\mathring{p})$ boundary components, the proof of Lemma \ref{lem:existence} follows a similar approach as Lemma \ref{lem:generators}.

\begin{proof}[Proof of Lemma \ref{lem:existence}]
First, note that if $\singpts = \emptyset, \genus = 0,$ the surface is a topological sphere, which is simply connected and the results hold.

Let $\mathring{\singpts} = \singpts - \partial \surf,$ and define $\hat{\surf} = \surf - \mathring{\singpts}$. 
Because $\surf$ is Hausdorff, at each $p,q \in \mathring{\singpts},$ define an open disk-like neighborhood 
	$U_p, U_q$ such that $U_p \cap U_q = \emptyset, p \in U_p, q \in U_q.$ 
Let 
	$h:\hat{\surf}\rightarrow h(\hat{\surf})$ 
	be a retraction taking each $U_p - \{p\}$ to $\partial U_p$ via a radial projection. 
Here, 
	$h(\hat{\surf})$ 
	is a surface of genus $g$ with 
	$(\boundarycomp + \#\mathring{\singpts})$ boundary components. 
Then by Lemma \ref{lem:generators}, there is a set of curves with the prescribed conditions on $h(\hat{\surf})$.

Using the inclusion, this set of curves maps into $\surf$ to yield a graph $\graph$ which satisfies the desired requirements except $\mathring{\singpts} \not \subset \graph$. 
To extend $\graph$ appropriately, let $\gamma_p$ be the cutting curve meeting $U_p$. The intersection is discrete. 
Let $\tilde{\gamma}_p=h^{-1}\left(\partial U_p \cap \gamma_p \right)$ be the set homeomorphic to $[0,1)$ in $\hat{\surf}$ and take the closure of $\tilde{\gamma_p}$ in $\surf$ to get a domain homeomorphic to the unit interval taking $\left(\partial U_p \cap \gamma_p \right)$ to $p$. 
The composition of each $\gamma_p$ with each $\overline{\tilde{\gamma}}_p$ yields a set of curves that appropriately extend $\graph$.
\end{proof}

\ifthenelse{\boolean{isELS}}
{
\subsubsection{Direct Proofs for a \QuadMeshImmersion{} Inducing a \QuadLayout{}}
}
{
\subsection{Direct Proofs for a \QuadMeshImmersion{} Inducing a \QuadLayout{}}
}

The proofs of Lemma \ref{lem:integrable_fields} Theorem \ref{thrm:equivalence} were simplified using the equivalence of \cite{Chen:2019} that a \textquadmeshmetric{} induces a quadrilateral layout. 
Here, we present the proofs without this assumption.
It is hoped that these will more clearly present the machinery of the tools used, though perhaps at the expense of more details.

\subsubsection{Proof that the \textQuadMeshMetric{'s} \textCrossField{} are Locally Integrable}\label{sec:integrability_crossfield}
\begin{proof}
Because $\emptyquadmeshmetric$ is flat, by definition it has zero  Gaussian curvature.
Then $\surfmsinggraph$ is locally isometric to $\mathbb{R}^2$ 
(see \cite{Spivak:1999v2} p. 241). 
Then for any $p \in \surf - \singpts$ there is some neighborhood $U_p$ with a function $\phi_p:U_p\rightarrow \phi_p(U_p) \subset \mathbb{R}^2$ such that $\emptyquadmeshmetric = \phi_p^*(\emptymetric_{\mathbb{R}^2})$.

Fix one such $p$.
Because $U_p$ is contractible, it has trivial holonomy group, so the components of the \textcrossfield{} decompose into four unique vector fields, $\{X_i\}_{i=0}^4$.
By definition of a \textcrossfield{}, these vector fields are symmetric with a rotation of $X_{i_p}$ by $\frac{\pi}{2}$ in $T_p\big(\surfmsinggraph\big)$ yielding $X_{[(i+1) \mathrm{mod}\;4]_p}$.

Let $\{e_i\}_{i=0}^1$ be the Cartesian basis in $\mathbb{R}^2$ with coordinates $u_i$.
Then 
\[
	\nabla^{\mathbb{R}^2}_{e_i} \metric{(\phi_*)(X_m)}{(\phi_*) (X_m)}_{\mathbb{R}^2} 
	= \frac{\partial}{\partial u_i}\Big(\metric{(\phi_*)(X_m)}{(\phi_*) (X_m)}_{\mathbb{R}^2}\Big) + 2\metric{\nabla^{\mathbb{R}^2}_{e_i}\big((\phi_*)(X_m)\big)}{(\phi_*)(X_m)}_{\mathbb{R}^2} 
	= 0.
\]
by definition of the covariant derivative under the Levi-Cevita connection on $\mathbb{R}^2$. But the first is zero because 
\[
	\metric{(\phi_*)(X_m)}{(\phi_*) (X_m)}_{\mathbb{R}^2} 
	= \quadmeshmetric{X_m}{X_m} 
	= 1.
\]
Furthermore, $\nabla_{e_i}^{\mathbb{R}^2}\big((\phi_*)(X_m)\big) $ is just the partial derivative of each component because the Christoffel symbols in Cartesian coordinates are all zero. 
Then taking $(\phi_*) (X_m) = \sum_{j=0}^1 a_j e_j$, we have
\[
	0 = 2 \Big(\sum_{j=0}^1 a_j \frac{\partial a_j}{\partial u_j}\Big) = \sum_{j=0}^1 \frac{\partial (a_j^2)}{\partial u_i}.
\]
Thus each $a_j^2$ must be constant, and so must each $a_j$.
Hence for each $e_i$
\[
	\nabla^{\mathbb{R}^2}_{e_i} \big((\phi_*) (X_m)\big) = 0.
\]
Since  $\{e_i\}_{i=0}^1$ form a basis for each tangent space and the covariant derivative under the Levi-Cevita connection is preserved by isometries,
\[
	\phi_*\big(\nabla^Q_{X_i} X_j \big)= \nabla^{\mathbb{R}^2}_{(\phi_*)(X_i)} \big((\phi_*) (X_j)\big) = 0,
\]
implying that 
\[
	\nabla^Q_{X_i} X_j = 0.
\]
Then because of symmetry of \textquadmeshmetric{'s} Levi-Cevita connection 
\[
[X_i,X_j] = \nabla^Q_{X_i} X_j - \nabla^{Q}_{X_j} X_i =  0.
\]
Then for $X_i,X_j$ linearly independent, there is a local coordinate system about $p$ in $U_p$ defined by integration on $X_i, X_j$ 
(see \cite{Spivak:1999v1} p. 158).
\end{proof}

\subsubsection{Alternative Proof of Theorem \ref{thrm:equivalence}}

\begin{proof}
	Proceed as in the earlier proof of the claim to show that the skeletal structure of the integral curves on $\surf$ partitions $\surf$ into a set of quadrilateral domains.
	However, rather than directly appealing to the equivalence between a quadrilateral layout and a \textquadmeshmetric{}, here we directly show that any integral curve of $C$ is finite.

	Let $\mathfrak{Q}_{p}$ be a quotient curve on the surface not traversing a boundary and not containing a singularity.
	Then $p$ is in the interior of some face.
	We are solely interested in the lengths and angular deviations of $\mathfrak{Q}_{p}$, which are preserved by isometric immersions.
	Thus, construct an isometric immersion $\tilde{\immersion}$ of the quadrilateral cell containing $p$ as was done in Proposition \ref{prop:quad_to_immersion}.
	Here, there are no cuts and no singularities except at the corners of the face, so the isometric immersion will be a rectangle in order to satisfy local coordinates being integral curves of $X_0$ and $X_1$ of the \textcrossfield{} $C$.
	Then the length of $\mathfrak{Q}_{p}$ in this rectangle is finite, being a line of constant $u$ or $v$ coordinate.
	Similarly, the length of $\mathfrak{Q}_p$ in any other rectangle in which it is contained is finite.
	Furthermore, the path of faces traversing this quotient curve is necessarily finite.
	If it is not a loop, the quotient curve is also necessarily finite.
	If the face path is a loop, note that $\mathfrak{Q}_p$ is some line of constant $u$ or $v$ in this rectangle.
	But the lengths of each edge traversed in the closed face-path must be the same, and $\mathfrak{Q}_p$ must intersect at the same height along each rectangle.
	Then after a loop in the face paths, it must return to be periodic, and thus of finite length.
\end{proof}

\end{document}